	\def\docclass{koma}
	\def\version{arxiv}
	\def\draftmode{false} 
		\renewcommand\paragraph[1]{\subparagraph{#1.}}
\newcommand\iflipics[2]{\ifthenelse{\equal{\docclass}{lipics}}{#1}{#2}}
\newcommand\ifkoma[2]{\ifthenelse{\equal{\docclass}{koma}}{#1}{#2}}
\newcommand\iflncs[2]{\ifthenelse{\equal{\docclass}{lncs}}{#1}{#2}}
\newcommand\ifspringer[2]{\ifthenelse{\equal{\docclass}{springer}}{#1}{#2}}
\newcommand\ifmanuscript[2]{\ifthenelse{\equal{\version}{manuscript}}{#1}{#2}}
\newcommand\ifarxiv[2]{\ifthenelse{\equal{\version}{arxiv}}{#1}{#2}}
\newcommand\ifsubmission[2]{\ifthenelse{\equal{\version}{submission}}{#1}{#2}}
\newcommand\ifproceedings[2]{\ifthenelse{\equal{\version}{proceedings}}{#1}{#2}}
	\equal{\version}{manuscript} 
	\OR \equal{\version}{arxiv} 
	\OR \equal{\version}{submission} 
	\OR \equal{\version}{proceedings} 
\newcommand\ifdraft[2]{\ifthenelse{\equal{\draftmode}{true}}{#1}{#2}}
\newlist{enumerateinline}{enumerate*}{1}
\setlist[enumerateinline]{
	label={(\arabic*)},
}
\newdimen\makeboxdimen
\newcommand\makeboxlike[3][l]{%
\setbox0=\hbox{#2}%
\global\makeboxdimen=\wd0%
\setbox1=\hbox{\makebox[\makeboxdimen][#1]{%
\makebox[0pt][#1]{#3}%
}}%
\ht1=\ht0%
\dp1=\dp0%
\box1%
}
\newcommand\like[3][c]{%
	\mathchoice{
		\makeboxlike[#1]{%
			\ensuremath{\displaystyle\relax#2}%
		}{%
			\ensuremath{\displaystyle\relax#3}%
		}%
	}{
		\makeboxlike[#1]{%
			\ensuremath{\textstyle\relax#2}%
		}{%
			\ensuremath{\textstyle\relax#3}%
		}%
	}{
		\makeboxlike[#1]{%
			\ensuremath{\scriptstyle\relax#2}%
		}{%
			\ensuremath{\scriptstyle\relax#3}%
		}%
	}{
		\makeboxlike[#1]{%
			\ensuremath{\scriptscriptstyle\relax#2}%
		}{%
			\ensuremath{\scriptscriptstyle\relax#3}%
		}%
	}
}
\newcommand\plaincenter[1]{%
	\mbox{}\hfill#1\hfill\mbox{}%
}
	\let\mytitle\@title%
\let\oldthebibliography\thebibliography
\renewcommand\thebibliography[1]{%
	\oldthebibliography{#1}%
	\pdfbookmark[1]{References}{}%
}
\ttfamily\itshape{},
\ttfamily\slshape{},
\ttfamily\tiny{},
	\newtheorem{fact}[theorem]{Fact}
	\newtheorem{conjecture}[theorem]{Conjecture}
	\newtheorem*{remarknonumber}{Remark}
	\newtheorem{fact}[theorem]{Fact}
	\newtheoremstyle{proofstyle}%
	  {\item[\theorem@headerfont\hskip\labelsep ##1\theorem@separator]}%
	  {\item[\theorem@headerfont\hskip\labelsep ##1 of ##3\theorem@separator]}
	\newtheorem{theorem}{Theorem}[section]
	\theoremstyle{plain}
	\newtheorem{proposition}[theorem]{Proposition}
	\newtheorem{lemma}[theorem]{Lemma}
	\newtheorem{conjecture}[theorem]{Conjecture}
	\newtheorem{corollary}[theorem]{Corollary}
	\newtheorem{definition}[theorem]{Definition}
	\theoremstyle{plain}
	\newtheorem{property}[theorem]{Property}
	\newtheorem{fact}[theorem]{Fact}
	\newtheorem{remark}[theorem]{Remark}
	\newtheorem{example}[theorem]{Example}
	\newtheorem{claim}[theorem]{Claim}
	\newtheorem{remarknonumber}[theorem]{Remark}
	\theoremstyle{proofstyle}
	\newtheorem{proof}{Proof}
	\theoremstyle{thmstyleone}
	\newtheorem{theorem}{Theorem}[section]
	\newtheorem{lemma}[theorem]{Lemma}
	\newtheorem{corollary}[theorem]{Corollary}
	\theoremstyle{thmstyletwo}
	\newtheorem{remark}[theorem]{Remark}
	\newtheorem{claim}[theorem]{Claim}
	\theoremstyle{thmstylethree}
	\newtheorem{remarknonumber}{Remark}
	\newenvironment{thmenumerate}[2][]{%
		\begin{enumerate}[
			label={\textsf{\textbf{\color{darkgray}{\makebox[\widthof{(a)}][c]{\textup{(\alph*)}}}}}},
			ref={\ref{#2}\kern.1em--\kern.1em(\alph*)},
			itemsep=0pt,
			topsep=.5ex,
			leftmargin=1.75em,
			#1
		]%
	}{%
		\end{enumerate}%
	}
	\newenvironment{thmenumerate}[2][]{%
		\begin{enumerate}[
			label={\makebox[\widthof{(a)}][c]{\textup{(\alph*)}}},
			ref={\ref{#2}\kern.1em--\kern.1em(\alph*)},
			itemsep=0pt,
			#1
		]%
	}{%
		\end{enumerate}%
	}
\newcommand*\ie{\mbox{i.\hspace{.2ex}e.}}
\newcommand*\eg{\mbox{e.\hspace{.2ex}g.}}
\newcommand*\wrt{\mbox{w.\hspace{.2ex}r.\hspace{.2ex}t.}\xspace}
\newcommand*\withoutlossofgenerality{\mbox{w.\hspace{.23ex}l.\hspace{.2ex}o.\hspace{.17ex}g.}\xspace}
\newcommand\N{\mathbb N}
\newcommand\Z{\mathbb Z}
\newcommand\Oh{O}
\newcommand{\ESymbol}{\mathbb{E}}
\newcommand{\ProbSymbol}{\ensuremath{\mathbb{P}}}
\DeclarePairedDelimiterXPP\Prob[1]{\ProbSymbol}[]{}{%
	#1%
}
\DeclarePairedDelimiterXPP\E[1]{\ESymbol}[]{}{%
	#1%
}
\DeclarePairedDelimiterXPP\Eover[2]{\ESymbol_{#1}}[]{}{%
	#2%
}
\DeclarePairedDelimiterXPP\ProbIn[2]{\ProbSymbol_{#1}}[]{}{%
	#2%
}
\providecommand{\Prob}{} 
\providecommand{\ProbIn}{} 
\providecommand{\E}{} 
\providecommand{\Eover}{} 
\newcommand{\surroundedmath}[3]{
	\mathchoice{
		#1{#2{#3}#2}%
	}{
		#1{#3}%
	}{
		#1{#3}%
	}{
		#1{#3}%
	}%
}
\newcommand\wrel[1]{\surroundedmath{\mathrel}{\;}{#1}}
\newcommand\wwrel[1]{\surroundedmath{\mathrel}{\;\;}{#1}}
\newcommand\bin[1]{\surroundedmath{\mathbin}{\:}{#1}}
\newcommand\wbin[1]{\surroundedmath{\mathbin}{\;}{#1}}
\newcommand\ppe{\phantom{=}}
	\let\oldalign\align
	\let\endoldalign\endalign
\newcommand*\numberthis[1][]{\stepcounter{equation}\tag{\theequation}}
\newcommand\splitaftercomma[1]{%
  \begingroup
  \begingroup\lccode`~=`, \lowercase{\endgroup
    \edef~{\mathchar\the\mathcode`, \penalty0 \noexpand\hspace{0pt plus .25em}}%
  }\mathcode`,="8000 #1%
  \endgroup
}
\def\mydots{\xleaders\hbox to.5em{\hfill.\hfill}\hfill}
\newlength\tmpLenNotations
	\ifmanuscript{\hideLIPIcs}{}
	\ifarxiv{\hideLIPIcs}{}
\newsavebox\tmpbox
	\let\oldparagraph\paragraph
	\renewcommand\paragraph[1]{%
		\oldparagraph*{#1}
	}
	\let\oldparagraph\paragraph
	\renewcommand\paragraph[1]{%
		\oldparagraph{#1.}
	}
\let\epsilon\varepsilon
\def\myacknowledgements{}
	\newcommand\acknowledgements[1]{\def\myacknowledgements{\paragraph{Acknowledgements}#1}}
	\newcommand\acknowledgements[1]{\def\myacknowledgements{\paragraph{Acknowledgements}#1}}
	\newcommand\acknowledgements[1]{\def\myacknowledgements{\bmhead{Acknowledgments}#1}}
	\renewenvironment{abstract}[1]{%
		\abstractfont%
		\abstracthead*{\abstractname}%
		#1\par%
	}{}
\newcommand\idtt[1]{\texttt{\upshape #1}\xspace}
\newcommand*\rankop{\idtt{rank}}
\newcommand*\selop{\idtt{select}}
\newcommand*\accessop{\idtt{access}}
\newcommand*\GDegree{\idtt{deg}}
\newcommand*\GAdjacent{\idtt{adj}}
\newcommand*\GNeighbor{\idtt{nbrhood}}
\newcommand*\GNextNeighbor{\idtt{next\_nbr}}
\newcommand*\GSPath{\idtt{spath}}
\newcommand*\GDistance{\idtt{dist}}
\newcommand*\GSPathFirst{\idtt{spath\_succ}}
\newcommand*\GridCount{\idtt{RCount}}
\newcommand*\GridReport{\idtt{RPoints}}
\newcommand*\GridXForY{\idtt{XForY}}
\newcommand*\GridYForX{\idtt{YForX}}
\newcommand*\RangeMin{\idtt{rmq}}
\newcommand*\RangeMax{\idtt{rMq}}
\newcommand*\NextAbove{\idtt{next\_above}}
\title{Succinct Permutation Graphs}
	\author{Konstantinos Tsakalidis}{University of Liverpool, UK}{K.Tsakalidis@liverpool.ac.uk}{https://orcid.org/0000-0001-6470-9332}{}
	\author{Sebastian Wild}{University of Liverpool, UK}{Sebastian.Wild@liverpool.ac.uk}{https://orcid.org/0000-0002-6061-9177}{}
	\author{Viktor Zamaraev}{University of Liverpool, UK}{Viktor.Zamaraev@liverpool.ac.uk}{https://orcid.org/0000-0001-5755-4141}{}
	\authorrunning{K. Tsakalidis, S. Wild, and V. Zamaraev}
	\keywords{%
		succinct data structures, distance oracles, permutation graphs, 
		bipartite permutations graphs, circular permutation graphs, graph compression,
		graph encoding%
	}
	\newcommand\email[1]{\texttt{#1}}
	\author{%
		Konstantinos Tsakalidis%
			\footnote{University of Liverpool, UK, 
			\email{K.Tsakalidis\,@\,liverpool.ac.uk}}
	\and
		Sebastian Wild%
			\footnote{University of Liverpool, UK, 
			\email{Sebastian.Wild\,@\,liverpool.ac.uk}}
	\and
		Viktor Zamaraev%
			\footnote{University of Liverpool, UK, 
			\email{Viktor.Zamaraev\,@\,liverpool.ac.uk}}
	}
	\date{\small\today}
	\author{%
		Konstantinos Tsakalidis\inst{1}\orcidID{0000-0001-6470-9332} \and
		Sebastian Wild\inst{1}\orcidID{0000-0002-6061-9177} \and
		Viktor Zamaraev\inst{1}\orcidID{0000-0001-5755-4141}%
}
	\authorrunning{K. Tsakalidis, S. Wild, and V. Zamaraev}
	\institute{%
		University of Liverpool, UK
		\email{\{K.Tsakalidis,\,Sebastian.Wild,\,Viktor.Zamaraev\}\,@\,liverpool.ac.uk}
	}
	\author*[1]{\fnm{Konstantinos} \sur{Tsakalidis}}\email{K.Tsakalidis\,@\,liverpool.ac.uk}
	\author*[1]{\fnm{Sebastian} \sur{Wild}}\email{Sebastian.Wild\,@\,liverpool.ac.uk}
	\author*[1]{\fnm{Viktor} \sur{Zamaraev}}\email{Viktor.Zamaraev\,@\,liverpool.ac.uk}
	\affil[1]{\orgdiv{Department of Computer Science}, \orgname{University of Liverpool}, \orgaddress{\street{Ashton Building, Ashton Street}, \city{Liverpool}, \postcode{L69 3BX}, \state{Merseyside}, \country{UK}}}
\begin{document}

\maketitle

%

\begin{abstract}\iflipics{\sloppy}{}
We present a succinct data structure for permutation graphs,
and their superclass of circular permutation graphs, \ie,
data structures using optimal space up to lower order terms.
Unlike concurrent work on circle graphs~\cite{AcanChakrabortyJoNakashimaSadakaneRao2022},
our data structure also supports distance and shortest-path queries, 
as well as adjacency and neighborhood queries, all in optimal time.
We present in particular the first succinct exact distance oracle for (circular) permutation graphs.
A second succinct data structure also supports degree queries in time independent of the neighborhood's size at the expense of an $\Oh(\log n/\log \log n)$-factor overhead in all running times.
Furthermore, we develop a succinct data structure for the class of bipartite permutation graphs.
We demonstrate how to run algorithms directly over our succinct representations for several problems on permutation graphs: \textsc{Clique}, \textsc{Coloring}, \textsc{Independent Set}, 
\textsc{Hamiltonian Cycle}, \textsc{All-Pair Shortest Paths}, and others.

Finally, we initiate the study of \emph{semi-distributed} graph representations; 
a concept that smoothly
interpolates between distributed (labeling schemes) and centralized (standard data structures).
We show how to turn some of our data structures into semi-distributed representations by storing only 
$\Oh(n)$ bits of additional global information, circumventing the lower bound on distance labeling schemes for 
permutation graphs.
\end{abstract}

%
%
%
%
%
%
%
%

%
%
%
%
%
%
%
\section{Introduction}

As a result of the rapid growth of data sets, 
memory requirements become a bottleneck in many applications;
in particular when data structures 
do no longer fit into faster levels of the memory hierarchy of computer systems.
Research on \emph{succinct data structures} 
has lead to optimal-space data structures for many types of data~\cite{Navarro2016},
significantly extending the size of data sets that can be analyzed efficiently
on commodity hardware.
A data structure is called \emph{succinct} when its space usage is optimal up to lower order terms, \ie,
optimal up to a factor of $1+o(1)$.

Graphs are one of the most widely used types of data.
In this paper, we study succinct representations of 
specific classes of graphs,
namely permutation graphs and related families of graphs.
A graph is a \emph{permutation graph} (PG) if it can be obtained as the 
intersection graph of chords (line segments) between two parallel lines~\cite{PnueliLempelEven1971},
\ie, the vertices corresponding to two such chords are adjacent, if and only if the chords intersect.
PGs are a well-studied class of graphs;
they are precisely the comparability graphs
of two-dimensional partial orders, and the class of comparability graphs
whose complement graph is also a comparability graph~\cite{dushnik1941partially} 
(see \wref{sec:preliminaries} for definitions of these concepts).
Many generally intractable graph problems can be solved efficiently 
on PGs,
for instance 
\textsc{Clique}~\cite{Moehring1985,McConnellSpinrad1999}, 
\textsc{Independent Set}~\cite{Moehring1985,McConnellSpinrad1999},
\textsc{Coloring}~\cite{Moehring1985,McConnellSpinrad1999},
\textsc{Clique Cover}~\cite{Moehring1985,McConnellSpinrad1999},
\textsc{Dominating Set}~\cite{ChaoHsuLee2000},
\textsc{Hamiltonian Cycle}~\cite{DeogunSteiner1994}, and
\textsc{Graph Isomorphism}~\cite{Colbourn1981}.
\textsc{All-Pair Shortest Paths} on PGs can be solved faster than in general graphs \cite{mondal2003optimal,BazzaroGavoille2009}. 
Moreover, PGs can be recognized in linear time~\cite{McConnellSpinrad1999}.

In this paper we study how to succinctly encode permutation graphs, while supporting
the following queries efficiently:
\begin{itemize}
	\item $\GAdjacent(u, v)$: whether vertices $u$ and $v$ are adjacent;
	\item $\GDegree(v)$: the degree of vertex $v$, \ie, the number of vertices adjacent to $v$;
	\item $\GNeighbor(v)$: the vertices adjacent to vertex $v$;
	\item $\GNextNeighbor(u,v)$: the successor of vertex $v$ in the adjacency list of vertex~$u$;
	\item $\GSPath(u, v)$: listing a shortest path from vertex $u$ to vertex $v$;
	\item $\GSPathFirst(u, v)$: the first vertex after vertex $u$ on a shortest path from $u$ to vertex $v$;
	\item $\GDistance(u, v)$: the length of the shortest path from vertex $u$ to vertex $v$.
\end{itemize}

\paragraph{Data structures} 
A succinct data structure is space optimal in the sense that it stores a given combinatorial object using asymptotically only the information-theoretic minimum of bits. %
Specifically, given a class of graphs $\mathcal C$ and denoting by $\mathcal C_n$ for the set of graphs $G\in\mathcal C$
on $|V(G)| = n$ vertices, 
a succinct data structure for $\mathcal C$ is allowed to spend $(1+o(1))\lg|\mathcal C_n|$ bits of space
when representing a graph in $\mathcal C_n$.
We present the first succinct data structures that support the above queries on a PG (\wref{thm:main}), 
as well as on its generalization, the  \emph{circular permutation graphs} (CPGs, see \wref{thm:succinct-cpg}). 
Moreover, we present the first succinct data structure for the special case of a \emph{bipartite permutation graph} (BPG, see \wref{thm:bipartite-ds}). %
\wref{tab:results} summarizes these results. 
\footnote{%
	Throughout this paper, running times assume the word-RAM model 
	with word size $w = \Theta(\log n)$, 
	where $n$ denotes the number of vertices of the input PG.
}

\begin{table}[thb]
\ifsubmission{%
	\newcommand\lame{}%
}{%
	\newcommand\lame{\color{black!50}}%
}%
	\caption{%
		Our data structure results for (variants of) permutation graphs with $n$ vertices.
		Space is in bits. Query times are $O(\cdot)$ bounds;
		$\mathit{deg}$ denotes the queried vertex' degree and $\mathit{dist}$
		the shortest-path distance between the queried vertices.
	}

	\small
	\adjustbox{max width=\linewidth}{%
	\begin{tabular}{rcc@{\qquad}cc}
	\toprule
		\multicolumn{1}{c}{} &                    \multicolumn{2}{c}{\bfseries permutation graphs}                     &               \bfseries bipartite               &            \bfseries circular            \\
		\addlinespace[2pt]
		\multicolumn{1}{c}{} &  \multicolumn{1}{c}{\bfseries (a) by grid}   &         \bfseries (b)  by array          &              \bfseries permutation              &          \bfseries permutation           \\
	\midrule
		                                                                              Space Usage &            $n\lg n + o(n \log n)$            &            $n\lg n + \Oh(n)$             &                    $2n+o(n)$                    &            $n\lg n + \Oh(n)$             \\
		                                                                      \strut  Lower Bound &   $\sim n\lg n$~\cite{BazzaroGavoille2009}   & $\sim n\lg n$~\cite{BazzaroGavoille2009} & $\sim 2n$~\cite{SaitohOtachiYamanakaUehara2012} & $\sim n\lg n$~\cite{BazzaroGavoille2009} \\
	\midrule
		           \GAdjacent &            $\log n / \log\log n$             &                   $1$                    &                     $1$                      &                   $1$                    \\
		             \GDegree &            $\log n / \log\log n$             &          \lame$\mathit{deg}+1$           &                     $1$                      &          \lame$\mathit{deg}+1$           \\
		           \GNeighbor &   ${(\mathit{deg}+1)\log n}/{\log\log n}$    &             $\mathit{deg}+1$             &               $\mathit{deg}+1$               &             $\mathit{deg}+1$             \\
		       \GNextNeighbor & \lame${(\mathit{deg}+1)\log n}/{\log\log n}$ &             $1$ (amortized)              &                     $1$                      &             $1$ (amortized)              \\
		              \GSPath &    $(\mathit{dist}+1)\log n / \log\log n$    &            $\mathit{dist}+1$             &              $\mathit{dist}+1$               &            $\mathit{dist}+1$             \\
		         \GSPathFirst &            $\log n / \log\log n$             &                   $1$                    &                     $1$                      &                   $1$                    \\
		           \GDistance &            $\log n / \log\log n$             &                   $1$                    &            \lame$\mathit{dist}+1$            &                   $1$                    \\
	\midrule
		              Theorem &         \wref[Thm.\!]{thm:main-grid}         &   \wref[Thm.\!]{thm:main-dist-oracle}    &       \wref[Thm.\!]{thm:bipartite-ds}        &     \wref[Thm.\!]{thm:succinct-cpg}      \\
	\bottomrule
	\end{tabular}
	}
	\label{tab:results}
\end{table}

To our knowledge, the only centralized data structures that store PGs are presented by 
Gustedt et al.~\cite{GustedtMorvanViennot1995} and by Crespelle and Paul~\cite{CrespellePaul2010}. 
The former are not succinct (using $\Oh(n\lg n)$ \emph{words} of space), 
but are parallelizable~\cite{GustedtMorvanViennot1995}. 
The latter support only \GAdjacent queries (in constant time), 
but are dynamic\ifsubmission{}{ (supporting insertions and deletions of vertices/chords and edges)}.
We are not aware of previous work on data structures for CPGs,
or on space-efficient data structures for BPGs.
\ifsubmission{}{\par}

Bazzaro and Gavoille~\cite{BazzaroGavoille2009} present \emph{distance labeling schemes} for PGs,
a distributed distance oracle, where the distance of two vertices can be computed solely from
the two labels of the vertices.
Their scheme uses labels of $\sim 9 \lg n$ bits per vertex%
\footnote{%
	By $\sim$ we denote a leading-term asymptotic approximation, \ie, $f(n) \sim g(n)$ iff $f(n)/g(n)\to 1$ as $n\to\infty$.
}, 
and their \GDistance queries take constant time. 
By concatenating all labels, their labeling scheme implies a data structure with matching time complexity 
and total space of $\sim 9 n \lg n$ bits. 
Our data structures (\wref{thm:main}) 
improve upon that space, while simultaneously supporting further queries besides \GDistance.
 
Interestingly, Bazzaro and Gavoille~\cite{BazzaroGavoille2009} further give a \emph{lower bound} of
$3 \lg n - \Oh(\lg \lg n)$ bits per vertex for \GDistance labeling schemes on PGs.
Comparing our data structures to this lower bound reveals a separation in terms of total space 
between their distributed and our centralized model:
giving up the distributed storage requirement, a data structure using the optimal $\sim n\lg n$ bits of space, 
\ie, $\lg n$ per vertex, becomes possible, 
proving that the centralized model is strictly more powerful.

\paragraph{Semi-distributed graph representations} 
To further explore the boundary of the above separation between 
standard centralized data structures and fully distributed labeling schemes,
we introduce a \emph{semi-distributed} model of computation for graph data structures 
that smoothly interpolates between these two extremes: 
in a $\langle L(n),D(n)\rangle$-space semi-distributed representation, 
each vertex locally stores a label of $L(n)$ bits, 
but all vertices also have access to a ``global'' data structure of $D(n)$ bits to support the queries.
Such a representation uses a total of $n L(n)+D(n)$ bits of space, 
but apart from the global part, only the labels of queried vertices are accessible to compute the answer.

The lower bound from~\cite{BazzaroGavoille2009} 
implies that when $D(n) = 0$, we must have $L(n)\ge 3 \lg n - \Oh(\lg \lg n)$
to support \GDistance on PGs, making the total space at least a factor 3 worse than the 
information-theoretic lower bound.
But what happens if we allow a small amount of global storage on top of the labels?
Is access to global storage inherently more powerful, 
even if insufficient encode the entire PG?
If so, what is the least amount of global storage that is necessary to overcome the labeling-scheme lower bound?
\ifsubmission{}{\par}
We do not comprehensively answer the latter question, but settle the former in
the affirmative: 
\ifsubmission{%
	we show that PGs admit a $\langle 2 \lg n, \Oh(n) \rangle$-space 
	semi-distributed representation that answers distance queries in $O(1)$ time.%
}{%
	we show that PGs admit a $\langle 2 \lg n, \Oh(n) \rangle$-space 
	semi-distributed representation that answers distance queries in constant time,
\ie, although the global space cannot distinguish all possible PGs,
	it suffices to circumvent the lower bound for labeling schemes in terms of total space and label size.
	Thus having access even to limited amounts of global space is inherently more powerful 
than a fully distributed data structure.
}

\paragraph{Applications}
Our data structures can replace the standard (space-inefficient) representation 
by adjacency lists in graph algorithms. 
For several known algorithms on PGs that make explicit use of their special structure
(namely, linear-time algorithms for computing 
minimum colorings, maximum cliques, maximum independent sets, or minimum clique covers), 
we show that they can be run with minimal extra space directly 
on top of our succinct representation.

Moreover, our data structures immediately yield an optimal-time all-pairs shortest-paths algorithm on PGs: 
For a PG with $n$ vertices and $m$ edges we can report all pairwise distances in $\Oh(n^2)$ time, 
matching the result of Mondal et al.~\cite{mondal2003optimal}; 
however, our approach is more flexible in that we can report the distances of any $k$ specified pairs 
of vertices in just $\Oh(n+m+k)$ total time. 
Furthermore, we can report the shortest paths (not just their lengths) in total time $\Oh(n+m+s)$, 
where $s$ is the size of the output; 
this does not immediately follow from~\cite{mondal2003optimal}.
The labeling scheme of~\cite{BazzaroGavoille2009} yields the same running times, but uses more space.

\paragraph{Further related work}
Similar to our work on PGs, succinct data structures that support the considered set of queries 
have been presented for \emph{chordal graphs}~\cite{MunroWu2018} and 
\emph{interval graphs}~\cite{AcanChakrabortyJoRao2020,HeMunroNekrichWildWu2020arxiv}. 
The latter also consider the special class of unit/proper interval graphs and 
the generalization to circular interval graphs. 

Concurrently%
\footnote{%
	The preprint~\cite{AcanChakrabortyJoNakashimaSadakaneRao2020arxiv} (now published as~\cite{AcanChakrabortyJoNakashimaSadakaneRao2022}) appeared 
	shortly after an initial version of this article~\cite{TsakalidisWildZamaraev2020arxiv} 
	was published on arXiv.%
}
to this work,
Acan et al.~\cite{AcanChakrabortyJoNakashimaSadakaneRao2022}
presented succinct data structures for \emph{circle graphs}
(\ie, the intersection graph of chords of a (single) circle)
and related classes (specifically $k$-polygon circle graphs and trapezoid graphs).
They show space lower bounds for these classes and data structures with
asymptotically matching space usage.
Since a PG is also a circle graph, their data structures can be applied to PGs,
but this is not known for CPGs.
Superficially, their grid-based representation~\cite[Thm.\,4.4]{AcanChakrabortyJoNakashimaSadakaneRao2022} 
is similar to ours, but the construction uses a different point set 
with different properties for queries:
Acan et al. support navigational operations 
\GAdjacent, \GDegree, and \GNeighbor, 
but none of their data structures offer \GDistance or \GSPath,
which are a main technical challenge of our work.
A further difference is that for general circle graphs, 
no succinct data structures with constant query time are known,
whereas for PGs, we can use our array-based data structure,
offering constant-time support for \GAdjacent, \GNextNeighbor, \GSPathFirst, \GDistance.

\paragraph{Outline}
The rest of this paper is organized as follows.
\wref{sec:preliminaries} collects previous results on PGs and succinct data structures.
In \wref{sec:data-structures}, we describe our main result: the succinct data structures for PGs.
\ifsubmission{%
	Due to space constraints, our other results are given in the appendix;
	\wref{app:algorithms} describes how to simulate various algorithms on top of our succinct representation.
	they extend the techniques established in \wref{sec:data-structures}.
	\wref{app:bipartite} discusses our data structure for bipartite PGs,
	and \wref{app:circular} extends our approach to circular PGs.
	Finally, \wref{app:semi-distributed} introduces semi-distributed graph representations
	and our corresponding results.
}{%
	Our other results extend the techniques established in that section.
	\wref{sec:algorithms} describes how to simulate various algorithms on top of our succinct representation.
	\wref{sec:bipartite} discusses our data structure for bipartite PGs,
	and \wref{sec:circular} extends our approach to circular PGs.
	Finally, \wref{sec:semi-distributed} introduces semi-distributed graph representations
	and our corresponding results.
}%
\wref{sec:conclusion} concludes the paper.

%
\section{Preliminaries}
\label{sec:preliminaries}

We write $[n..m]$ for $\{n,\ldots,m\}\subset \Z$ and $[n]=[1..n]$.
We use standard notation for graphs, in particular (unless stated otherwise) 
$n$ denotes the number of vertices, $m$ the number of edges.
$N(v)$ is the neighborhood of $v$, \ie, the set of vertices adjacent to $v$.
In a directed graph $G=(V,E)$, we distinguish out-neigborhood $N^+(v) = \{u:(v,u)\in E\}$
and in-neigborhood $N^-(v) = \{u:(u,v)\in E\}$ of a vertex $v\in V$.
The complement graph of $G$ is denoted by $\overline G$.
We use the ``Iverson bracket'' notation: $[\mathit{cond}]$ is $1$ if $\mathit{cond}$ is true and $0$ otherwise.

\subsection{Permutation Graphs}

It is easy to see from the intersection model of a PG $G$ (as intersections
of chords between parallel lines) that only the relative order of upper (resp.\ lower) endpoints of the chords
are relevant (cf. \wref{fig:example-small}).
Hence, a graph $G$ is a PG if there exists a permutation~$\pi$
and a bijection between the vertices of $G$ and the elements of $\pi$,
such that two vertices are adjacent if and only if the corresponding elements are reversed by $\pi$;
that explains the name.

\ifsubmission{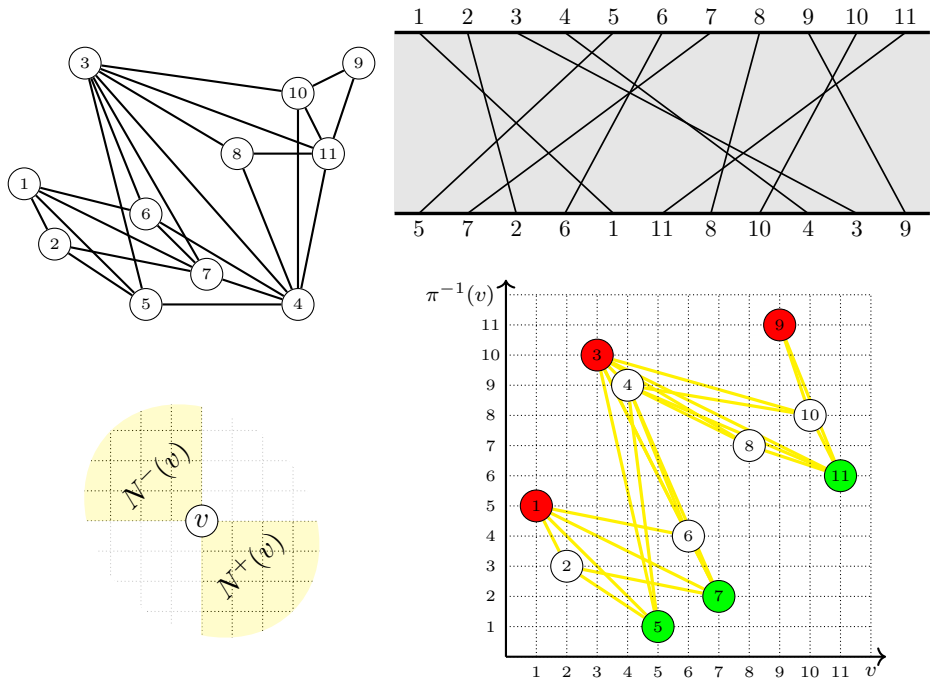
\begin{figure}[tb]}{\begin{figure}[htb]}
	\plaincenter{%
	\iflipics{%
		\begin{adjustbox}{max width=\linewidth,minipage=1.3\linewidth}%
	}{}%
	\ifkoma{%
		\begin{adjustbox}{varwidth=\linewidth,max width=\linewidth}%
	}{}%
	\ifspringer{%
		\begin{adjustbox}{minipage=1.3\linewidth,max width=\linewidth}%
	}{}%
	\plaincenter{%
		\begin{tikzpicture}[
				baseline=(v2),
				scale=.4,
				point/.style = {draw,fill=white,circle,minimum size=12pt,inner sep=0pt,font=\tiny},
				A point/.style = {point},
				B point/.style = {point},
				AB point/.style = {point},
				graph edge/.style = {thick},
			]
			\node[A point] (v1)  at (1,5) {1} ;
			\node[point]   (v2)  at (2,3) {2} ;
			\node[A point] (v3)  at (3,9) {3} ;
			\node[point]   (v4)  at (10,1) {4} ;
			\node[B point] (v5)  at (5,1) {5} ;
			\node[point]   (v6)  at (5,4) {6} ;
			\node[B point] (v7)  at (7,2) {7} ;
			\node[point]   (v8)  at (8,6) {8} ;
			\node[A point] (v9)  at (12,9) {9} ;
			\node[point]   (v10) at (10,8) {10} ;
			\node[B point] (v11) at (11,6) {11} ;		\begin{pgfonlayer}{background}
				\draw[graph edge] (v1) to (v2) ; 
				\draw[graph edge] (v1) to (v5) ; 
				\draw[graph edge] (v1) to (v6) ; 
				\draw[graph edge] (v1) to (v7) ; 
				\draw[graph edge] (v2) to (v5) ; 
				\draw[graph edge] (v2) to (v7) ; 
				\draw[graph edge] (v3) to (v4) ; 
				\draw[graph edge] (v3) to (v5) ; 
				\draw[graph edge] (v3) to (v6) ; 
				\draw[graph edge] (v3) to (v7) ; 
				\draw[graph edge] (v3) to (v8) ; 
				\draw[graph edge] (v3) to (v10) ; 
				\draw[graph edge] (v3) to (v11) ; 
				\draw[graph edge] (v4) to (v5) ; 
				\draw[graph edge] (v4) to (v6) ; 
				\draw[graph edge] (v4) to (v7) ; 
				\draw[graph edge] (v4) to (v8) ; 
				\draw[graph edge] (v4) to (v10) ; 
				\draw[graph edge] (v4) to (v11) ; 
				\draw[graph edge] (v6) to (v7) ; 
				\draw[graph edge] (v8) to (v11) ; 
				\draw[graph edge] (v9) to (v10) ; 
				\draw[graph edge] (v9) to (v11) ; 
				\draw[graph edge] (v10) to (v11) ; 
			\end{pgfonlayer}
		\end{tikzpicture}
		\hfill
		\scalebox{.8}{
	\begin{tikzpicture}[
			xscale=.8,
			yscale=3,
			normal/.style = {thick,black},
			forward/.style = {thick,green!50!black},
			backward/.style = {thick,orange!50!black},
			point/.style = {draw,fill=white,circle,minimum size=12pt,inner sep=0pt,font=\tiny},
		]
		\fill[black!10] (0.5,0) rectangle ++(11,-1) ;
		\draw[normal] (1,0) to (5,-1) ;
		\node[above] at (1,0) {1} ;
		\node[below] at (5,-1) {1} ;
		\draw[normal] (2,0) to (3,-1) ;
		\node[above] at (2,0) {2} ;
		\node[below] at (3,-1) {2} ;
		\draw[normal] (3,0) to (10,-1) ;
		\node[above] at (3,0) {3} ;
		\node[below] at (10,-1) {3} ;
		\draw[normal] (4,0) to (9,-1) ;
		\node[above] at (4,0) {4} ;
		\node[below] at (9,-1) {4} ;
		\draw[normal] (5,0) to (1,-1) ;
		\node[above] at (5,0) {5} ;
		\node[below] at (1,-1) {5} ;
		\draw[normal] (6,0) to (4,-1) ;
		\node[above] at (6,0) {6} ;
		\node[below] at (4,-1) {6} ;
		\draw[normal] (7,0) to (2,-1) ;
		\node[above] at (7,0) {7} ;
		\node[below] at (2,-1) {7} ;
		\draw[normal] (8,0) to (7,-1) ;
		\node[above] at (8,0) {8} ;
		\node[below] at (7,-1) {8} ;
		\draw[normal] (9,0) to (11,-1) ;
		\node[above] at (9,0) {9} ;
		\node[below] at (11,-1) {9} ;
		\draw[normal] (10,0) to (8,-1) ;
		\node[above] at (10,0) {10} ;
		\node[below] at (8,-1) {10} ;
		\draw[normal] (11,0) to (6,-1) ;
		\node[above] at (11,0) {11} ;
		\node[below] at (6,-1) {11} ;
		\draw[ultra thick] (0.5,0) -- ++(11,0);
		\draw[ultra thick] (0.5,-1) -- ++(11,0);
		
	\end{tikzpicture}
	}
	}
	\\[-4ex]
	\plaincenter{
	\begin{tikzpicture}[
			scale=.4,
			point/.style = {draw,fill=white,circle,minimum size=12pt,inner sep=0pt,font=\tiny},
			A point/.style = {point,fill=red},
			B point/.style = {point,fill=green},
			AB point/.style = {point,fill=cyan},
			graph edge/.style = {yellow,very thick},
		]
		\draw[densely dotted] (0,0) grid (12,12) ;
		\draw[->,thick] (0,0) -- (12.500000,0) ;
		\draw[->,thick] (0,0) -- (0,12.500000) ;
			\node at (1,-.5) {\tiny 1} ;
			\node at (2,-.5) {\tiny 2} ;
			\node at (3,-.5) {\tiny 3} ;
			\node at (4,-.5) {\tiny 4} ;
			\node at (5,-.5) {\tiny 5} ;
			\node at (6,-.5) {\tiny 6} ;
			\node at (7,-.5) {\tiny 7} ;
			\node at (8,-.5) {\tiny 8} ;
			\node at (9,-.5) {\tiny 9} ;
			\node at (10,-.5) {\tiny 10} ;
			\node at (11,-.5) {\tiny 11} ;
			\node at (12,-.5) {\scriptsize $v$} ;
			\node at (-.5,1) {\tiny 1} ;
			\node at (-.5,2) {\tiny 2} ;
			\node at (-.5,3) {\tiny 3} ;
			\node at (-.5,4) {\tiny 4} ;
			\node at (-.5,5) {\tiny 5} ;
			\node at (-.5,6) {\tiny 6} ;
			\node at (-.5,7) {\tiny 7} ;
			\node at (-.5,8) {\tiny 8} ;
			\node at (-.5,9) {\tiny 9} ;
			\node at (-.5,10) {\tiny 10} ;
			\node at (-.5,11) {\tiny 11} ;
			\node[anchor=east] at (0,12) {\scriptsize $\pi^{-1}(v)$} ;
		\node[A point] (v1) at (1,5) {1} ;
		\node[point] (v2) at (2,3) {2} ;
		\node[A point] (v3) at (3,10) {3} ;
		\node[point] (v4) at (4,9) {4} ;
		\node[B point] (v5) at (5,1) {5} ;
		\node[point] (v6) at (6,4) {6} ;
		\node[B point] (v7) at (7,2) {7} ;
		\node[point] (v8) at (8,7) {8} ;
		\node[A point] (v9) at (9,11) {9} ;
		\node[point] (v10) at (10,8) {10} ;
		\node[B point] (v11) at (11,6) {11} ;
		\begin{pgfonlayer}{background}
			\draw[graph edge] (v1) to (v2) ; 
			\draw[graph edge] (v1) to (v5) ; 
			\draw[graph edge] (v1) to (v6) ; 
			\draw[graph edge] (v1) to (v7) ; 
			\draw[graph edge] (v2) to (v5) ; 
			\draw[graph edge] (v2) to (v7) ; 
			\draw[graph edge] (v3) to (v4) ; 
			\draw[graph edge] (v3) to (v5) ; 
			\draw[graph edge] (v3) to (v6) ; 
			\draw[graph edge] (v3) to (v7) ; 
			\draw[graph edge] (v3) to (v8) ; 
			\draw[graph edge] (v3) to (v10) ; 
			\draw[graph edge] (v3) to (v11) ; 
			\draw[graph edge] (v4) to (v5) ; 
			\draw[graph edge] (v4) to (v6) ; 
			\draw[graph edge] (v4) to (v7) ; 
			\draw[graph edge] (v4) to (v8) ; 
			\draw[graph edge] (v4) to (v10) ; 
			\draw[graph edge] (v4) to (v11) ; 
			\draw[graph edge] (v6) to (v7) ; 
			\draw[graph edge] (v8) to (v11) ; 
			\draw[graph edge] (v9) to (v10) ; 
			\draw[graph edge] (v9) to (v11) ; 
			\draw[graph edge] (v10) to (v11) ; 
		\end{pgfonlayer}

			\begin{scope}[reset cm,shift={(-4,1.8)},scale=.4]
			\clip[rotate=-45] (0,0) circle (11em and 9em);
			\draw[densely dotted] (-5,-5) grid ++(12,12) ; ;
			\fill[white,opacity=.8] (0,0)++(-.01,-.01) rectangle ++(-5,-5) ;
			\fill[white,opacity=.8] (0,0)++(.01,.01) rectangle ++(5,5) ;
			\fill[yellow,opacity=.2] (0,0) rectangle ++(-5,5) ;
			\fill[yellow,opacity=.2] (0,0) rectangle ++(5,-5) ;
			\node[point] at (0,0) {\normalsize $v$} ;
			\node[rotate=45] at (-1.5,1.5) {$N^-(v)$} ;
			\node[rotate=45] at (1.6,-1.5) {$N^+(v)$} ;
		\end{scope}
	\end{tikzpicture}
	}
	\end{adjustbox}}
	\caption{%
		Example permutation graph (top left) from \cite{BazzaroGavoille2009} in different representations: 
		a representation as intersections of chords between two parallels (top right), 
		corresponding to the permutation $\pi=(5,7,2,6,1,11,8,10,4,3,9)$, and 
		the points $(v,\pi^{-1}(v))$ on a 2D grid (bottom right).
		A point in the grid can ``see'' (is adjacent to) 
		all points in the top left resp.\ lower right quadrant around it as illustrated on the bottom left~\cite{BazzaroGavoille2009}.
	}
	\label{fig:example-small}
\end{figure}

To avoid confusion in counting results, we carefully distinguish three related notions for PGs.
First, given a permutation $\pi:[n] \to [n]$, the \emph{ordered PG}
induced by $\pi$, denoted $G_\pi = (V,E)$, has vertices $V=[n]$ and (undirected) edges $\{i,j\}\in E$
for all $i>j$ with $\pi^{-1}(i)<\pi^{-1}(j)$, \ie, if $\pi$ has an \emph{inversion} $(i,j)$.
Given an ordered PG $G$, we can uniquely reconstruct the 
permutation $\pi$ with $G_\pi = G$: 
By setting $b_j$, for each vertex $j$, to
the number of its neighbors $i$ with $i>j$,
we obtain the inversion table $b_1,\ldots,b_n$ of the permutation, 
from which there is a well-known bijection to $\pi$ itself~\cite[\S5.1.1]{Knuth1998}.
Hence, ordered PGs and permutations are in bijection.
This yields a simple recognition algorithm for ordered PGs:
Compute $\pi$ as above and check if the given graph equals $G_\pi$.
The ordered PG $G_\pi$ can be characterized by its \emph{grid representation},
which is a collection of integer points in the plane associated with the vertices of $G_\pi$:
a vertex $v$ is associated with the unique point $(v, \pi^{-1}(v))$ (see \wref{fig:example-small}). 
A useful property of 
the grid representation is that the neighbors of the vertex $v$ are exactly those vertices whose
points are located in the top left or the lower right quadrant around the point of $v$.

A graph $G=([n],E)$ is a \emph{labeled PG}, written $G\in\mathcal P^n$, 
if there is set of $n$ chords between two parallel lines \emph{and} 
an assignment of vertices to chords, so that $\{i,j\}\in E$ iff chords $i$ and $j$ intersect.
In other words, $G\in \mathcal P^n$ iff there are \emph{two} permutations $\pi:[n]\to[n]$ and $\rho:[n]\to[n]$,
so that $\rho(G) = ([n],\rho(E)) = G_\pi$, 
where $\rho(E) = \bigl\{\{\rho(u),\rho(v)\} : \{u,v\} \in E\bigr\}$;
in short: $G$ is a labeled PG iff it is isomorphic to some ordered PG $G_\pi$.

The set of \emph{unlabeled PGs} of size $n$, denoted by $\mathcal P_n$,
is the family of equivalence classes of labeled graphs in $\mathcal P^n$ under (graph) isomorphism.

\ifsubmission{
	To contrast these three notions, consider a graph with a single edge. 
	This is a single unlabeled graph, 
	but it corresponds to $n-1$ ordered PGs (all $n-1$ permutations with a single inversion)
	and $\binom n2$ labeled graphs with a single edge.%
}{%
	To illustrate the notions of ordered, labeled, and unlabeled PGs, 
and to make the distinction between them clear, we consider a few simple examples.
Both the empty or complete \emph{unlabeled graph}
	correspond to a single ordered PG, 
namely with~$\pi$ the sorted (resp.\ reverse sorted) permutation.
Similarly, there is only one labeled empty or complete graph;
in this case, the three notions coincide.
However, the unlabeled graph with just a single edge
	corresponds to $n-1$ ordered PGs, namely all $n-1$ permutations with a single inversion;
and there are $\binom n2$ labeled graphs with a single edge.
	We can always select a representative (a labeled PG) 
	for an isomorphism class (the unlabeled PG)
	that is an ordered PG, but in general, there are more ordered PGs
	than unlabeled PGs.%
}

A graph is \emph{comparability} if its edges can be oriented such that if there are 
edge $(a,b)$ and $(b,c)$, then there is an edge $(a,c)$. 
We will use the following classical characterization of PGs.
\begin{theorem}[PG \& comparability, \cite{dushnik1941partially}]\label{th:permCharacterization}
	A graph $G$ is a PG if and only if both $G$ and $\overline{G}$ are comparability graphs.
\end{theorem}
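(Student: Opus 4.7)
The plan is to prove both implications constructively via permutation diagrams.

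For the forward direction, assume $G$ is a permutation graph, with chords numbered by their top endpoints from left to right and $\pi^{-1}(v)$ the position of chord $v$ on the bottom line. I would orient edges of $G$ by setting $u \to v$ whenever $u < v$ in the top order but $\pi^{-1}(u) > \pi^{-1}(v)$ (the chords cross). Symmetrically, orient edges of $\overline{G}$ by $u \to v$ whenever $u<v$ and $\pi^{-1}(u) < \pi^{-1}(v)$ (the chords do not cross). Transitivity of each orientation is immediate from transitivity of $<$ on each of the two lines: if $u \to v \to w$ in $G$ then $u<v<w$ at the top and $\pi^{-1}(u) > \pi^{-1}(v) > \pi^{-1}(w)$ at the bottom, so chords $u$ and $w$ cross with the correct orientation. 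The argument for $\overline{G}$ is analogous.

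For the backward direction, assume transitive orientations $F_1$ of $G$ and $F_2$ of $\overline{G}$. Every unordered pair $\{u,v\}$ is oriented by exactly one of $F_1, F_2$, so together they yield a tournament on the vertex set. I would then define two binary relations:
\begin{align*}
	u <_1 v \quad&\iff\quad (u\to v) \in F_1 \;\text{ or }\; (u\to v) \in F_2,\\
	u <_2 v \quad&\iff\quad (u\to v) \in F_1 \;\text{ or }\; (v\to u) \in F_2.
\end{align*}
Both are clearly total and irreflexive; the key claim is that both are transitive, hence linear orders. Transitivity of $<_1$ follows by case analysis on which of $F_1,F_2$ contains each of the three relevant arcs, exploiting the transitivity of $F_1$, the transitivity of $F_2$, and the fact that no pair can lie in both orientations (since $G$ and $\overline{G}$ partition the pairs). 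The argument for $<_2$ is symmetric. Once $<_1,<_2$ are linear orders, let $\pi$ be the permutation sending the rank of a vertex in $<_1$ to its rank in $<_2$. By construction $\{u,v\}$ is an edge of $G$ iff $u,v$ are oriented oppositely by $F_1$ vs $F_2$-reversal, which is exactly the condition that they appear in opposite orders in $<_1$ and $<_2$, i.e.\ that $\pi$ inverts the corresponding ranks. Placing chord $v$ at position $\mathrm{rank}_{<_1}(v)$ on the top line and at position $\mathrm{rank}_{<_2}(v)$ on the bottom line gives the required chord representation.

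The main obstacle is the case-analytic verification that $<_1$ and $<_2$ are transitive. The delicate case is when the three pairs $\{u,v\},\{v,w\},\{u,w\}$ are split between $G$ and $\overline{G}$: e.g.\ in checking $<_1$, one must show that if $u\to v$ is in $F_1$ and $v\to w$ is in $F_2$, then $u\to w$ lies in whichever of $F_1,F_2$ contains $\{u,w\}$, consistent with $u<_1 w$. This is exactly the standard argument that forbidden configurations for transitivity would force a pair into \emph{both} $F_1$ and $F_2$, which is impossible since $E(G)\cap E(\overline{G}) = \emptyset$. I would handle this with a small table of cases, pointing out that each of the potentially-bad configurations is ruled out by the transitivity of $F_1$ or of $F_2$.
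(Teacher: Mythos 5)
The paper cites this result to Dushnik and Miller (1941) and gives no proof of its own, so there is no ``paper route'' to compare against; I'll assess your argument on its own terms. The forward direction and the transitivity case-analysis for $<_1$ and $<_2$ are sound and are exactly the classical Dushnik--Miller argument: your observation that the ``mixed'' cases are handled by showing a forbidden configuration would force a pair to lie in both $G$ and $\overline{G}$ is precisely the right idea.

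However, there is a sign error in the definition of $<_2$ that makes the final step fail. With $<_1 = F_1 \cup F_2$ and your $<_2 = F_1 \cup F_2^{-1}$, the pairs that flip orientation between $<_1$ and $<_2$ are exactly the pairs in $F_2$, i.e.\ the edges of $\overline{G}$; the pairs in $F_1$ (the edges of $G$) appear in the \emph{same} order in both. So the chord diagram you build --- top position $=\mathrm{rank}_{<_1}$, bottom position $=\mathrm{rank}_{<_2}$ --- has two chords crossing precisely when the corresponding pair is a \emph{non}-edge of $G$; you have constructed a permutation representation of $\overline{G}$, not of $G$. Your sentence ``$\{u,v\}$ is an edge of $G$ iff $u,v$ are oriented oppositely by $F_1$ vs $F_2$-reversal'' is exactly backwards, and this is where the unverified handwave lets the error slip through. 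The fix is to reverse $F_1$ rather than $F_2$ in the second order:
\[
    u <_2 v \quad\iff\quad (v\to u)\in F_1 \ \text{ or }\ (u\to v)\in F_2,
\]
i.e.\ $<_2 = F_1^{-1}\cup F_2$; then a pair flips between $<_1$ and $<_2$ exactly when it lies in $F_1$, which is exactly when it is an edge of $G$. The transitivity argument goes through verbatim with this corrected definition (it is symmetric under swapping the roles of $F_1$ and $F_2$), so the rest of your proof is unaffected.
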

Finally, for the construction of our data structures, we will assume that an \emph{ordered} 
PG $G_\pi$ is given; the following result allows to compute such from a given PG
in linear time.
\begin{theorem}[PG recognition, {\cite{McConnellSpinrad1999}}]
	There is an algorithm that 
	given a graph $G=(V,E)$ on $n=|V|$ vertices and $m=|E|$ edges
	computes in $\Oh(n+m)$ time two bijections $\pi:[n]\to[n]$ and $\rho:V\to[n]$
	with $\rho(G) = G_\pi$, or determines that $G$ is not a PG.
\end{theorem}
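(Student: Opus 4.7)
The plan is to turn the characterization of Theorem \ref{th:permCharacterization} into a constructive algorithm: since $G$ is a permutation graph iff both $G$ and $\overline{G}$ are comparability graphs, it suffices to (a) compute a transitive orientation of $G$, (b) compute a transitive orientation of $\overline{G}$, and (c) read off the two permutations from topological sorts of these orientations.

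In more detail, a transitive orientation of a graph $H$ is an assignment of a direction to each edge so that the resulting digraph is transitive; its existence certifies that $H$ is a comparability graph, and computing a topological sort of the digraph yields a linear extension of the induced partial order. So assume we obtain transitive orientations $F_1$ of $G$ and $F_2$ of $\overline G$ together with their topological sorts $\sigma_1,\sigma_2:V\to[n]$. I set $\rho=\sigma_1$ and define $\pi \in S_n$ by $\pi = \sigma_1\circ\sigma_2^{-1}$, so that $\pi^{-1}(\sigma_1(v)) = \sigma_2(v)$ for all $v\in V$. Correctness then reduces to checking that an edge $\{u,v\}\in E$ with $\sigma_1(u)<\sigma_1(v)$ must form an inversion of $\pi$, i.e.\ $\sigma_2(v)<\sigma_2(u)$; this is exactly the content of the Dushnik-Miller realizer viewpoint, namely that the partial order of $F_1$ is the intersection of the linear extension $\sigma_1$ and the reverse of $\sigma_2$ (the reversal accounts for non-edges of $G$ being edges of $\overline G$). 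If verification fails, $G$ is not a permutation graph; if both orientation steps succeed and verification passes, the output pair $(\pi,\rho)$ satisfies $\rho(G)=G_\pi$. Verification itself is a single scan over the $m$ edges in $O(n+m)$ time.

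The main obstacle is Step (b): $\overline G$ can have $\Theta(n^2)$ edges, so materializing it and running a generic comparability orientation procedure is not linear in $n+m$. This is precisely the technical core of the McConnell--Spinrad construction: a linear-time modular decomposition of $G$ is computed first, and transitive orientations of $G$ and of $\overline G$ are then produced simultaneously by working on the decomposition tree together with appropriate quotient graphs at each module, so that edges of $\overline G$ are handled implicitly via complement operations at prime nodes. The modular decomposition already determines, for each prime module, whether its quotient admits a transitive orientation (and uniquely so, up to reversal), which drives both the recognition decision and the orientation; combined with topological sorting on the two resulting digraphs, this yields $\pi$ and $\rho$ in total time $O(n+m)$ and establishes the theorem.
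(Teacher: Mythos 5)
The paper cites this theorem from McConnell and Spinrad without giving a proof, so there is no in-paper argument to compare against; I therefore evaluate your sketch on its own terms. At the high level it follows the standard route (the Dushnik--Miller/Pnueli--Lempel--Even characterization that $G$ and $\overline G$ are both comparability graphs, plus modular decomposition to sidestep materializing $\overline G$), and you correctly flag the complementation issue as the real technical content of McConnell--Spinrad.

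There is, however, a genuine logical gap in the step that converts the two orientations into the permutation. You take $\sigma_1,\sigma_2$ to be topological sorts of $F_1$ and $F_2$ \emph{individually} and define $\pi=\sigma_1\circ\sigma_2^{-1}$. For an arbitrary such choice the realizer identity you invoke (``the partial order of $F_1$ is the intersection of $\sigma_1$ and the reverse of $\sigma_2$'') simply does not hold, and as a result your ``verification'' can fail on a genuine permutation graph, which would cause the algorithm to reject incorrectly. Concretely, let $G=2K_2$ on $\{a,b,c,d\}$ with edges $ab$, $cd$; orient $a\to b$, $c\to d$ in $G$, and all four edges of $\overline G$ from $\{a,b\}$ to $\{c,d\}$. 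Then $\sigma_1=(a,c,b,d)$ is a valid topological sort of $F_1$ and $\sigma_2=(a,b,c,d)$ of $F_2$, yet the edge $ab$ has $\sigma_1(a)<\sigma_1(b)$ and $\sigma_2(a)<\sigma_2(b)$, so it is not an inversion of $\pi$, even though $G$ is a permutation graph. The fix is classical and must be stated: one uses that both $F_1\cup F_2$ and $F_1^{-1}\cup F_2$ are acyclic transitive tournaments and hence \emph{linear orders}, and one must take $\sigma_1$ to be the unique linear order extending $F_1\cup F_2$ and $\sigma_2$ the unique linear order extending $F_1^{-1}\cup F_2$; only with this coupled choice is $F_1 = \sigma_1\cap\sigma_2^{-1}$ forced, and only then does failed verification certify that one of $F_1,F_2$ was not transitive (\ie, $G$ is not a permutation graph). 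As written, your proposal conflates an arbitrary pair of topological sorts with this specific coupled pair, which is exactly where correctness lives.

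Two smaller remarks: the verification step you mention is indeed needed, but not for the reason implied --- it is needed because the linear-time orientation routine does not itself certify transitivity on non-comparability inputs; and that verification must check $\rho(G)=G_\pi$ in both directions (every edge is an inversion \emph{and} $\pi$ has exactly $m$ inversions), which is still achievable in $\Oh(n+m)$ but is more than the single one-sided scan you describe.
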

\ifsubmission{}{
	\subsection{Space Lower Bounds}
}
\label{sec:lower-bounds}
Recall that $\mathcal P_n$ denotes the set of unlabeled PGs on $n$ vertices.
We obtain information-theoretic lower bounds for storing an unlabeled PG from known counting results~\cite{BazzaroGavoille2009}.
\begin{corollary}
	$\lg|\mathcal P_n| \ge n\lg n - O(n \log \log n)$ bits are necessary to represent an unlabeled permutation 
	graph on $n$ vertices.
\end{corollary}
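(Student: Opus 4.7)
The plan is to combine the cited lower bound on $|\mathcal P_n|$ with a matching upper bound and then apply the information-theoretic principle that any representation of an arbitrary element of $\mathcal P_n$ must use at least $\lceil\lg|\mathcal P_n|\rceil$ bits in the worst case.

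First, I would simply take the binary logarithm of the cited inequality $|\mathcal P_n| \ge n^{n(1-o(1))}$. This yields $\lg|\mathcal P_n| \ge n(1-o(1))\lg n = n\lg n - o(n\lg n)$, which already shows that $\lg|\mathcal P_n|/(n\lg n) \to 1$ from below.

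For the matching upper bound, I would use the bijection between ordered permutation graphs and permutations of $[n]$ noted in the Preliminaries: every (unlabeled) permutation graph admits an ordered representative $G_\pi$ for some $\pi \in S_n$, so there is a surjection from $S_n$ onto $\mathcal P_n$ and hence $|\mathcal P_n|\le n!$. Stirling's approximation then gives $\lg|\mathcal P_n| \le \lg(n!) = n\lg n - n\log_2 e + O(\log n) \le n\lg n$, from which $\lg|\mathcal P_n|/(n\lg n)\to 1$ from above. Sandwiching yields $\lg|\mathcal P_n|\sim n\lg n$, and the information-theoretic lower bound then implies that any representation scheme requires $\sim n\lg n$ bits in the worst case.

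There is no substantive obstacle here; the corollary is essentially bookkeeping on top of the cited bound. The only thing to be careful about is that the claim uses the asymptotic-equivalence symbol $\sim$ rather than $\Omega$, which is why the short upper-bound argument via $|\mathcal P_n|\le n!$ must be included in addition to the lower bound from \cite{BazzaroGavoille2009}.
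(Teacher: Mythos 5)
Your proof is correct and matches the argument the paper implicitly relies on: the paper marks this corollary with a \qed and leaves the derivation implicit, but the intended reasoning is precisely the one you give — the Bazzaro--Gavoille lower bound $|\mathcal P_n|\ge n^{n(1-o(1))}$ gives $\lg|\mathcal P_n|\ge (1-o(1))\,n\lg n$, and the trivial surjection $S_n \twoheadrightarrow \mathcal P_n$ (via $\pi \mapsto [G_\pi]$) gives $\lg|\mathcal P_n|\le \lg n! \le n\lg n$, so the two bounds squeeze the ratio to $1$. You are right to flag that the upper bound is genuinely needed because the claim uses $\sim$ (asymptotic equivalence, i.e., ratio tending to $1$) and not merely $\Omega$; the paper's prose preceding the corollary mentions only the lower bound, so your proposal makes the implicit upper-bound step explicit.
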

\begin{proof}
	Recall that we write $\mathcal P^n$ for the set of labeled PGs on $n$ vertices and $\mathcal P_n$
	for the set of unlabeled PGs on $n$ vertices.
	\cite[Thm.\,5.2]{BazzaroGavoille2009} shows that $\lg|\mathcal P^n| \ge 2 n \lg n - O(n \log \log n)$.
	Clearly $|\mathcal P^n| \le n! |\mathcal P_n|$ since there are at most $n!$ ways of assigning labels $[n]$.
	Using the Stirling approximation, $\lg(n!) = n \lg n - O(n)$, we obtain that
	$\lg|\mathcal P_n| \ge 2 n \lg n - O(n \log \log n) - \lg(n!) \ge n \lg n - O(n \log \log n)$.
\end{proof}
Up to lower order terms, this lower bound coincides with $\lg(n!)$,
so succinctly storing a given grid representation of an ordered PG in 
our data structures suffices for a succinct PG data structure.
Generalizing a construction from Acan et al.~\cite{AcanChakrabortyJoNakashimaSadakaneRao2022}, 
we can strengthen the above lower bound.
\begin{theorem}[Space lower bound]
\label{thm:lower-bound-refined}
	$\lg |\mathcal P_n| \ge n \lg n - O(n)$ bits are necessary to represent an unlabeled permutation 
	graph on $n$ vertices.
\end{theorem}
\begin{proof}[Proof Sketch]
We build on the proof of Thm.\ 4.2 of~\cite{AcanChakrabortyJoNakashimaSadakaneRao2022}; 
we reproduce the parts that need amendment here.
We construct a specific family of vertex-colored PGs that is large enough so that~-- 
even after discounting the overcounting due to counting \emph{colored} graphs~--
it corresponds to $2^{n \lg n - O(n)}$ distinct unlabeled PGs, yielding the claim.
We represent the colored graphs via their (colored) permutation diagram.
We begin with two parallel lines and place $n$ ``chord slots'' (points) on each line;
we will later connect these to $n$ disjoint chords.
Let $p_1,\ldots,p_n$ resp.\ $q_1,\ldots,q_n$ denote these points on the upper resp.\ lower line, 
numbered from left to right; cf.\ \wref{fig:lower-bound}.
\begin{figure}
	\plaincenter{%
	\begin{tikzpicture}[
			xscale=.5,
			yscale=3,
			dot/.style={circle,draw,fill=white,inner sep=1pt},
			every label/.style={font=\tiny},
		]
		\def\l{3}
		\def\k{4}
		\def\n{20}
		\foreach \y in {0,1} {
			\draw[thick] (0,\y) -- (\n+1,\y);
		}
		\foreach \i in {1,...,\n} {
			\node[dot,fill=black,label={90}:{$p_{\i}$}] (p\i) at (\i,1) {} ;
		}
		\foreach \i in {1,...,\n} {
			\node[dot,fill=black,label={-90}:{$q_{\i}$}] (q\i) at (\i,0) {} ;
		}
		\foreach[count=\ii] \i in {4,8,...,16} {
			\node[dot,inner sep=2pt] (pp\ii) at (\i,1) {} ;
		}
		\foreach[count=\ii] \i in {5,9,...,\n} {
			\node[dot,inner sep=2pt] (qq\ii) at (\i,0) {} ;
		}
		\foreach[count=\ii] \i in {17,...,20} {
			\node[dot,fill=gray,inner sep=2pt] (ppp\ii) at (\i,1) {} ;
		}
		\foreach[count=\ii] \i in {1,...,\k} {
			\node[dot,fill=gray,inner sep=2pt] (qqq\ii) at (\i,0) {} ;
		}
		\begin{scope}[black!10,very thick]
		\foreach \i/\j in {1/7,2/6,3/19,5/16,6/12,7/10,10/11,11/8,13/18,14/15,15/20} {
			\draw (p\i) to (q\j) ;
		}
		\end{scope}
		\begin{scope}[very thick]
		\draw[blue]         (pp1)  to node[inner sep=0pt,fill=white] {\tiny1} (qqq1);
		\draw[blue]         (pp2)  to node[inner sep=0pt,fill=white] {\tiny2} (qqq2);
		\draw[blue]         (pp3)  to node[inner sep=0pt,fill=white] {\tiny3} (qqq3);
		\draw[blue]         (pp4)  to node[inner sep=0pt,fill=white] {\tiny4} (qqq4);
		\draw[red!80!black] (ppp1) to node[inner sep=0pt,fill=white] {\tiny5} (qq1);
		\draw[red!80!black] (ppp2) to node[inner sep=0pt,fill=white] {\tiny6} (qq2);
		\draw[red!80!black] (ppp3) to node[inner sep=0pt,fill=white] {\tiny7} (qq3);
		\draw[red!80!black] (ppp4) to node[inner sep=0pt,fill=white] {\tiny8} (qq4);
		\draw (p9) to (q14) ;
		\end{scope}
		\foreach \i in {1,...,\k} {
			\draw[decoration={brace},decorate] ({1+(\i-1)*(\l+1)-.25},1.2) --node[above] {\scriptsize$A_\i$}  ++(\l-.5,0) ;
			\draw[decoration={brace,mirror},decorate] ({6+(\i-1)*(\l+1)-.25},-0.2) --node[below=1pt] {\scriptsize$B_\i$}  ++(\l-.5,0) ;
		}
	\end{tikzpicture}%
	}
	\caption{
		The colored PG construction from \wref{thm:lower-bound-refined} for $\ell=3$ and $k=4$,
		and hence $n=k\ell+2k = 20$.
		Special chords are shown in blue and red.
		The highlighted chord $(p_9,q_{14})$ intersects the special chords $[i..j] = \{3,4,5,6,7\}$ and has endpoints in
		$A_i = A_3$ and $B_{j-k} = B_{7-4}=B_3$.
	}
	\label{fig:lower-bound}
\end{figure}
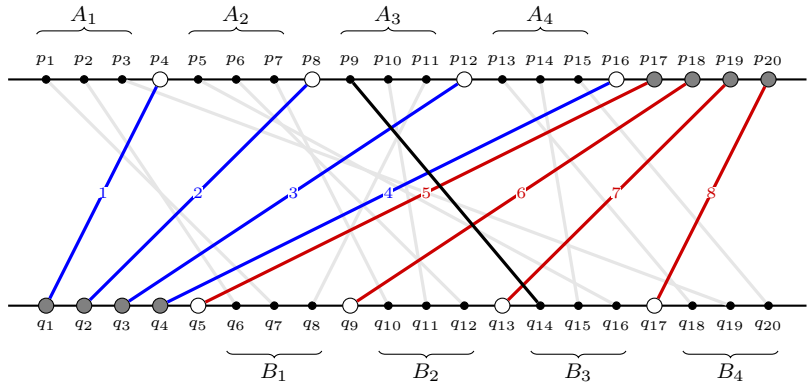
As in~\cite{AcanChakrabortyJoNakashimaSadakaneRao2022}, we fix parameters $k$ and $\ell$ so that 
$k\ell+2k = n$.
Now fix $2k$ special cords as follows:
The first $k$ special cords connect $q_1,\ldots,q_k$ to the points
$p_{\ell+1},p_{2(\ell+1)},p_{3(\ell+1)},\ldots,p_{k(\ell+1)}$,
the second $k$ special cords connect $p_{n-k+1},\ldots,p_{n}$ with 
$q_{k+1},q_{k+1+(\ell+1)},q_{k+1+2(\ell+1)},\ldots,q_n$.
Each of the $2k$ special cords is colored using a unique color in $[2k]$, assigned from left to right;
all other chords (added below) have color $0$.
We have so far used $4k$ of the chord slots; the remaining $2n-4k$ slots are partitioned by the special chords
into $2k$ intervals of $\ell$ chord slots each: $k$ on the upper line, $k$ on the lower line, each separated by 
an endpoint of a special chord. 
We name these intervals $A_1,\ldots,A_k$ on the upper line
and $B_{1},\ldots,B_{k}$ on the lower line (see \wref{fig:lower-bound}). 
We now consider \emph{matchings} of the remaining $k\ell$ slots
on the upper line with the remaining $k\ell$ slots on the lower line.
Each such matching corresponds to one way of adding the remaining $k\ell$ chords;
\wref{fig:lower-bound} shows an example (gray lines).
In general, different matchings can correspond to the same unlabeled colored graph, 
but we will see that this can only happen for \emph{bad matchings}~\cite{AcanChakrabortyJoNakashimaSadakaneRao2022}:
a matching is \emph{bad} if it contains 3 or more of chords connecting the same $A_i$ with the same $B_j$;
otherwise it is \emph{good}.
A good matching can be uniquely reconstructed from its induced colored PG: 
First, each colored vertex is unique and its color uniquely determines which special chord it corresponds to.
Next, each $0$-colored vertex must be adjacent to special chords with colors 
from a contiguous range $[i..j]\subset[2k]$;
its upper endpoint then lies on $A_{i}$ and its lower endpoint on $B_{j-k}$.
Hence we can uniquely reconstruct the intervals each chord's endpoints belong to.
Finally, if two chords $u$, $v$ both end in the same $A_i$, their relative order
is determined by whether or not they are adjacent.
Since the matching is good, there is at most one such pair $u$, $v$ where the relative order of 
endpoints on the bottom line is not already determined, so we can work out a total
order of the endpoints within $A_i$ from the colors and adjacencies.
The argument for two chords ending in the same $B_j$ is similar.
Using Lem.\ 4.1 of~\cite{AcanChakrabortyJoNakashimaSadakaneRao2022}, which shows
that for $k=n^{3/4+\epsilon}$, $\epsilon>0$ fixed,
a $1-o(1)$ fraction of all possible matchings is good,
we can now finish the proof as in~\cite{AcanChakrabortyJoNakashimaSadakaneRao2022}:
\begin{align*}
		|\mathcal P_n| \binom{n}{2k}(2k)!
	&\wrel\ge
		\text{\# PGs with $2k$ vertices assigned unique colors in $[2k]$}
\\[-1.7ex]	&\wrel\ge
		\text{\# colored PGs obtained from above construction}
\\	&\wrel\ge
		\text{\# good matchings}
\\	&\wrel=
		(1-o(1))(k\ell)! \;
	\wrel=
		(1-o(1))(n-2k)!\,;
\end{align*}
hence we have, denoting by $n^{\underline k} = \prod_{i=0}^{k-1}(n-i)$ the $k$th falling power of $n$, that
\begin{align*}
		\lg|\mathcal P_n| 
	&\wrel\ge 
		\lg((n-2k)!) + \lg(1-o(1)) - \lg(n^{\underline{2k}}) 
\\	&\wrel\ge 
		\lg((n-2k)!) - \lg(n^{2k}) - O(1)
\\	&\wrel=
		\bigl((n-2k) \lg(n(1-2k/n)) - (n-2k)\lg(e) + \Theta(\log n)\bigr) - 2k\lg(n) - o(1)
\\	&\wrel\ge
		n \lg n - \lg(e)n - O(k \log n).
\end{align*}
This concludes the proof.
\end{proof} 
\ifsubmission{}{
\begin{remarknonumber}
	We note that in the data-structures and graph-labeling-schemes communities, 
	the above approach for proving space optimality of graph representations via lower bounds on the number of unlabeled 
	graphs in the class is quite typical~\cite{GavoillePaul2008,BazzaroGavoille2009,AcanChakrabortyJoRao2020,munro2018succinct}:
	One establishes a lower bound on the number $|\mathcal X_n|$ of unlabeled graphs in a given class $\mathcal X$ by
	first deriving a lower bound on the number $|\mathcal X^n|$ of \emph{labeled} (or colored) graphs in the class, and then applying 
	the obvious relation $|\mathcal X^n| \le n! |\mathcal X_n|$ (or a similar one for partially colored graphs).
	The non-trivial part in this approach is the former one, and it usually boils down to an ad-hoc construction of
	a large family of labeled graphs.

	For leading-term estimates, a recent work of Sauermann~\cite{Sauermann2021} provides a uniform framework
	for deriving tight lower bounds on the number of labeled graphs in any \emph{semi-algebraic} graph class.
The family of semi-algebraic graph classes contains many geometric intersection graphs classes, including
interval graphs and PGs. %

\end{remarknonumber}
}

\subsection{Succinct Data Structures}

For the reader's convenience, we collect used results
on succinct data structures here.
First, we cite the compressed bit vectors of Pătrașcu~\cite{Patrascu2008}.

\begin{lemma}[Compressed bit vector]
\label{lem:compressed-bit-vectors}
	Let $B[1..n]$ be a bit vector of length~$n$, containing $m$ $1$-bits.
	For any constant $c>0$, there is a data structure using
	\(
			\lg \binom{n}{m} \wbin+ O\bigl(\frac{n}{\log^c n}\bigr)
		\wwrel\le 
			m \lg \bigl(\frac nm\bigr) \wbin+ O\bigl(\frac{n}{\log^c n}+m\bigr)
	\)
	bits of space that
	supports in $O(1)$ time operations 
	(for $i \in [1,n]$):
	\begin{enumerate}
		\item $\accessop(B, i)$: return $B[i]$, the bit at index $i$ in $B$.
		\item $\rankop_\alpha(B, i)$: return the number of bits with
		value $\alpha \in \{0,1\}$ in $B[1..i]$.
		\item $\selop_\alpha(B, i)$: return the index of the $i$-th
		bit with value $\alpha \in \{0,1\}$.
	\end{enumerate}
\end{lemma}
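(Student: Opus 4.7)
The plan is to follow the classical Four-Russians block decomposition combined with an arithmetic-coded block representation, augmented by a hierarchy of rank/select indices that reduce every query to a constant number of universal table lookups. First, I would partition $B$ into blocks of size $b = \Theta(\log n)$, and represent each block as a pair (class, offset) where the class is the block's popcount and the offset is its rank among all bit strings of that length and popcount in some canonical order. A block with class $c$ then needs only $\lceil \lg \binom{b}{c}\rceil$ bits, and summing over all blocks and using the convexity identity $\sum_i \binom{b}{c_i} \le \binom{n}{m}\cdot\text{poly}(n/b)$ yields $\lg \binom{n}{m}$ bits plus a lower-order slack coming from the $\lceil \cdot \rceil$ roundings.

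Second, I would build the rank/select machinery on top of these blocks. Superblocks of size $\Theta(\log^2 n)$ store absolute rank counters in $O(\log n)$ bits each; inside a superblock, each block stores its relative rank in $O(\log\log n)$ bits. A universal table of size $o(n)$ indexed by (class, offset, in-block position) answers $\accessop$ and in-block $\rankop$ in $O(1)$ time, and the superblock/block counters reduce $\rankop$ on $B$ to one such lookup. Select is handled by the standard dichotomy: sparse $1$-bits are stored explicitly, while dense regions are recursively broken down until one more universal-table lookup finishes the query. All these indices must be addressable by a scheme that tracks the variable-length block boundaries, which is done by a secondary, sparser bit-vector structure over block starting positions.

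The main obstacle is tightening the redundancy from the classical RRR bound of $O(n\log\log n / \log n)$ down to $O(n/\log^c n)$ for an arbitrary constant $c>0$; this is Pǎtraşcu's contribution. The key trick is to aggregate many index entries into single machine words, so that the wasteful $O(\log\log n)$-bit-per-block counters are replaced by a vector-style representation whose amortised cost per block is $O(1/\log^c n)$ bits, with lookups still constant-time via a precomputed interpretation table. Choosing block size, superblock size, and the arities of the aggregated indices as $\Theta(\log^{\alpha} n)$ for finely tuned exponents $\alpha$, so that redundancy at every level telescopes to $O(n/\log^c n)$ while keeping all table sizes $o(n/\log^c n)$, is the delicate part; once those parameters are fixed, the constant-time query bounds follow directly from the fact that each operation touches $O(1)$ levels of the hierarchy and $O(1)$ universal tables.
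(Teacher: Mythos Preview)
The paper does not prove this lemma at all; it is stated as a citation of P\u{a}tra\c{s}cu's result~\cite{Patrascu2008} and used as a black box. So there is no ``paper's own proof'' to compare against, and for the purposes of this paper your sketch is more than is required: a one-line reference suffices.

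That said, as a sketch of P\u{a}tra\c{s}cu's construction your outline is accurate up through the RRR baseline (class/offset encoding, superblock/block rank counters, universal tables), but your description of the redundancy-reduction step is too vague to be a proof and slightly mischaracterises the idea. Simply ``aggregating many index entries into single machine words'' and tuning block sizes to $\Theta(\log^\alpha n)$ is essentially what RRR already does, and that alone does not get below $\Theta(n\log\log n/\log n)$: the per-block $\lceil\cdot\rceil$ rounding and the pointer to each block's variable-length offset field both cost $\Theta(1)$ bits per block, i.e.\ $\Theta(n/\log n)$ total, and you cannot make blocks superlogarithmic without losing $O(1)$-time in-block decoding. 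P\u{a}tra\c{s}cu's actual mechanism is a recursive structure (his \emph{augmented B-trees}, or \emph{aB-trees}): the bit vector is the leaf level of a B-tree of constant height whose internal nodes are augmented with subtree popcounts, and the whole tree is encoded so that the rounding loss is paid once at the root rather than once per leaf block. The $O(n/\log^c n)$ redundancy then comes from choosing the B-tree arity and height as functions of $c$. Your sketch names the right target and the right starting point, but the bridge between them is the aB-tree construction, not a parameter tweak of the flat two-level scheme.
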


\ifsubmission{}{
	\begin{remarknonumber}[Simpler bitvectors]
		The result of P\u atra\textcommabelow scu has the best theoretical guarantees, 
		but requires rather complicated data structures.
		Compressed bitvectors with space
		\begin{align*}
				\lg \binom{n}{m} \wbin+ O\biggl(\frac{n\log \log n}{\log n}\biggr)
			&\wwrel\le 
				n H\Bigl(\frac{m}{n}\Bigr) \wbin+ O\biggl(\frac{n\log \log n}{\log n}\biggr) 
		\\	&\wwrel= 
				m \lg \Bigl(\frac nm\Bigr) \wbin+ O\biggl(\frac{n \log \log n}{\log n}+m\biggr)
		\end{align*}
		have been proposed by Raman, Raman, and Rao~\cite{RamanRamanRao2007}
		and implemented~\cite{GogBellerMoffatPetri2014}.
		For our application, indeed a plain (uncompressed) bitvector 
		with $\Oh(1)$-time support for rank and select and using $n+O(n/\log\log n)$ bits of space is sufficient
		(see, \eg,~\cite[\S4.2.2 \& \S4.3.3]{Navarro2016}, originally proposed in~\cite{Jacobson1989,Clark1996}).
	\end{remarknonumber}
}

Using wavelet trees, based on above bitvectors, we can also handle non-binary arrays.
\begin{lemma}[Wavelet trees for constant $\sigma$]
\label{lem:wavelet-trees}
	Let $S[1..n]$ be a static array with entries $S[i]\in\Sigma = [1..\sigma]$
	for $\sigma$ a fixed constant.
	There is a data structure using $\lg(\sigma) n + o(n)$ bits of space
	that supports the following queries in $O(\log \sigma) = O(1)$ time 
	(without access to $S$ at query time)
	\begin{enumerate}
		\item $\accessop(S, i)$: return $S[i]$, the symbol at index $i$ in $S$.
		\item $\rankop_\alpha(S, i)$: return the number of indices with
		value $\alpha \in \Sigma$ in $S[1..i]$.
		\item $\selop_\alpha(S, i)$: return the index of the $i$-th
		occurrence of value $\alpha \in \Sigma$ in $S$.
	\end{enumerate}
\end{lemma}
\begin{proof}
Wavelet trees~\cite[\S6.2]{Navarro2016} support these operations in the stated time.
For the case of a small fixed $\sigma$ that we need, we can use 
a separate compressed bitvector (\wref{lem:compressed-bit-vectors}) for each of the $O(\sigma)$ nodes in the wavelet tree. By the aggregation property of the entropy,
the overall space is bounded by $n H_0 + o(\sigma n) \le n \lg(\sigma) + o(n)$,
where $H_0$ is the (zeroth-order) empirical entropy of $S$ (see, \eg, \cite[\S6.2.4]{Navarro2016}).
\end{proof}

Given an array $A[1..n]$ of comparable elements, \ifsubmission{}{(\eg, numbers),}
range-minimum queries (resp.\ range-maximum queries) are defined for $1\le i\le j\le n$ by
\begin{align*}
		\RangeMin_A(i,j)
	&\wwrel=
		\mathop{\arg\min}\limits_{i\le k\le j} A[k],
\ifsubmission{&}{\\}
		\RangeMax_A(i,j)
	&\wwrel=
		\mathop{\arg\max}\limits_{i\le k\le j} A[k].
\end{align*}
\ifsubmission{\vspace{-2ex}}{%
In both cases, ties are broken by the index, \ie, the index of the 
leftmost minimum (resp.\ maximum) is returned.
}%

\begin{lemma}[{{RMQ index, \cite[Thm.\,3.7]{FischerHeun2011}}}]
\label{lem:rmq-indexing}
	For any constant $\epsilon>0$ the following holds.
	Given a static array $A[1..n]$ of comparable elements,
	there is a data structure using $\epsilon n$ bits of space on top of $A$ 
	that answers range-minimum queries in $O(1/\epsilon)$ time 
	(making as many queries to $A$).
\end{lemma}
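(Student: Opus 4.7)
The natural strategy is a block decomposition together with a succinct RMQ index for the block minima, exploiting the fact that we have direct access to $A$ at query time. Fix a block size $B \coloneqq \lceil c / \epsilon \rceil$ for a constant $c$ to be chosen, and partition $A[1..n]$ into $n' = \lceil n/B \rceil$ consecutive blocks of $B$ elements each (the last possibly shorter).

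First I would handle the full-block minima. Conceptually, form the virtual array $M[1..n']$ whose entry $M[k]$ is the minimum value of the $k$-th block of $A$; $M$ itself is not stored. On $M$, build the classical systematic $2n' + o(n')$-bit RMQ index of Fischer and Heun, which, given access to $M$, answers RMQ queries on $M$ in $O(1)$ time. Crucially, this index depends only on the Cartesian tree of $M$, which is fixed once $A$ is; so it can be computed at construction time and stored without the need to ever store $M$ itself.

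Next, answer a query $\RangeMin_A(i,j)$ as follows. Let $b_L$ be the leftmost block fully contained in $[i,j]$ and $b_R$ the rightmost such block. Query the block index to obtain $k^* = \RangeMin_M(b_L, b_R)$, the block containing the overall minimum among full blocks. To return an index in $A$, scan block $k^*$ linearly with direct access to $A$, giving the argmin in $O(B)$ queries. Then scan the partial prefix $A[i..\,\cdot\,]$ and partial suffix $A[\,\cdot\,..j]$ at the two boundary blocks, each in $O(B)$ accesses. Finally compare the at most three candidate positions and return the one with the smallest value in $A$, breaking ties by position. The total query time and the number of probes to $A$ are both $O(B) = O(1/\epsilon)$.

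The last step is to tune constants. The inner index uses $2n' + o(n') \le (2 + \epsilon')\,n/B$ bits for any fixed $\epsilon' > 0$ and sufficiently large $n/B$; since $\epsilon$ (and thus $B$) is fixed while $n \to \infty$, this condition is eventually satisfied. Choosing $B = \lceil 3/\epsilon \rceil$ makes the total space at most $\epsilon n$ bits on top of $A$, as required. The main subtlety — and really the only delicate part — is precisely this bookkeeping of the lower-order $o(n/B)$ term of the inner structure; everything else is a routine block-decomposition argument, because unlike in the bare $2n+o(n)$ setting, we have direct read access to $A$ and can afford an $O(1/\epsilon)$-size scan to convert a block-level answer into an exact index.
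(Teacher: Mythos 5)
Your block decomposition with a succinct Cartesian-tree index on the block extrema and $O(1/\epsilon)$-time scans of the (at most three) relevant blocks is precisely the construction behind Fischer and Heun's Theorem~3.7, which the paper cites without reproving (and sketches again in \wref{sec:ds-array-based}, there for range-\emph{maxima}, with block size $\lceil 1/\epsilon\rceil$ and $2\epsilon n + o(n)$ bits); your rescaling with $B=\lceil 3/\epsilon\rceil$ is the same idea with the constant absorbed into $\epsilon$. One small wording slip: the inner $2n'+o(n')$-bit RMQ structure on the block minima $M$ must be the \emph{encoding} (non-systematic) variant, which returns the argmin index in $M$ \emph{without} any access to $M$ at query time — that is exactly why $M$ need never be stored — so describing it as ``systematic'' and as needing ``access to $M$'' contradicts the (correct) conclusion of that same paragraph.
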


\ifsubmission{}{%
	\noindent Clearly, the same data structure can also be used to answer range-maximum queries by
building the data structure \wrt the reverse ordering.
}

\ifsubmission{}{
	\begin{remarknonumber}[Sublinear RMQ]
		Indeed, $\epsilon$ can be chosen smaller than constant, yielding sublinear extra space,
		at the cost of increasing the query time to superconstant;
		we only need $\epsilon = \Omega(n^{-1+\delta})$ for some $\delta>0$.
	\end{remarknonumber}
}

\ifsubmission{}{%
Given a static set of points in the plane, \emph{orthogonal range reporting} asks
to find all points in the point set that lie inside a query rectangle $[x_1,x_2]\times[y_1,y_2]$.
Range \emph{counting} queries only report the number of such points.

\begin{lemma}[{{Succinct point grids, \cite[Thm.\,1]{BoseHeMaheshwariMorin2009}}}]
\label{lem:grid-bose-et-al}
	A set $N$ of $n$ points in an $n\times n$ integer grid can be represented
	using $n\lg n + o(n\log n)$ bits of space so that 
	\begin{enumerate}
	\item orthogonal-range-counting queries
	are answered in $\Oh(\log n / \log\log n)$ time, and 
	\item orthogonal-range-reporting queries are answered in $\Oh((k+1) \log n / \log \log n)$ time,
	where~$k$ is the output size.
	\end{enumerate}
\end{lemma}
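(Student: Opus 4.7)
My plan is to reduce everything to a wavelet-tree representation of the sequence of $y$-coordinates. First I would sort the $n$ points by increasing $x$-coordinate, breaking ties by $y$-coordinate, and let $Y[1..n]$ be the resulting sequence of $y$-values in $[n]$. To translate an $x$-range $[x_1,x_2]$ into a position range $[p_1,p_2]$ in $Y$ in $O(1)$ time, I would additionally maintain a bitmap of length $2n$ that encodes the column multiplicities in unary, equipped with the rank/select support of Lemma~\ref{lem:compressed-bit-vectors}; this bitmap uses $o(n\log n)$ bits.

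The main structure is a high-arity wavelet tree on $Y$ with branching factor $\sigma=\lceil\log^\epsilon n\rceil$ for a small constant $\epsilon>0$. Its depth is $\log_\sigma n = O(\log n/\log\log n)$, and at every level each surviving position contributes one symbol of $\lg \sigma$ bits indicating which child it descends into; summed over levels this gives exactly $n\lg n$ bits of payload. On top of each per-level sequence over $[\sigma]$ I would install the succinct rank/select structures implicit in Lemma~\ref{lem:compressed-bit-vectors}, chosen so that their redundancy fits within an $o(n\log n/\log\log n)$ budget per level, which aggregates to the required $o(n\log n)$ overall.

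Range counting for a query rectangle $[x_1,x_2]\times[y_1,y_2]$ is then the textbook wavelet-tree descent: convert $[x_1,x_2]$ to positions $[p_1,p_2]$ via the column bitmap, descend from the root while tracking $[p_1,p_2]$ with rank queries on the current level, and at each node sum the contributions of children whose $y$-interval is entirely contained in $[y_1,y_2]$, recursing only into the (at most two) boundary children. The tree depth bounds the visited nodes by $O(\log n/\log\log n)$. For range \emph{reporting} the descent is extended: for every node whose $y$-range and position range are both fully contained in the query, the full subtree is unfolded to recover its leaves, and for each reported point the original $(x,y)$-coordinates are recovered by one upward pass using the inverse wavelet-tree navigation. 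Each reported point is thus charged $O(\log n/\log\log n)$ time, matching the claimed $O((k+1)\log n/\log\log n)$ bound after absorbing the root-to-top-cutoff descent into the ``$+1$''.

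The main obstacle, in my view, is the redundancy accounting. The depth $O(\log n/\log\log n)$ gives an automatically smaller number of levels than a binary wavelet tree, but each level now stores a multi-ary sequence over $[\sigma]$ rather than a bitmap, and one needs rank in $O(1)$ on such sequences within $o(n\lg\sigma/\log\log n)$ bits per level. This is exactly where the choice $\sigma=\log^\epsilon n$ pays off: it gives enough slack in $\lg \sigma = \epsilon\log\log n$ to amortize the succinct-indexing overhead, while $\log_\sigma n = O(\log n/\log\log n)$ still yields the desired query time.
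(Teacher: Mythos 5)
This lemma is cited verbatim from Bose, He, Maheshwari and Morin~\cite{BoseHeMaheshwariMorin2009}; the paper itself offers no proof, so there is no internal argument to compare your sketch against. That said, your blueprint---a $\sigma$-ary wavelet tree with $\sigma=\Theta(\log^\epsilon n)$, giving $\Oh(\log n/\log\log n)$ levels and a payload of $n\lg\sigma$ bits per level that telescopes to $n\lg n$ total---is indeed the architecture used in the cited result, and your column-multiplicity bitmap and overall space accounting are correct.

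The places where the target query bound is actually earned are, however, glossed over, and one of them is a genuine error. You invoke \wref{lem:compressed-bit-vectors} for the per-level sequences, but that lemma is for \emph{binary} bitvectors; the level sequences here live over an alphabet $[\sigma]$ with $\sigma=\log^{\Theta(1)}n$. If you bit-slice each symbol into $\lceil\lg\sigma\rceil$ bitvectors and answer rank bit by bit, each level costs $\Theta(\log\log n)$ and the total query time degenerates back to $\Theta(\log n)$, defeating the whole purpose of the large branching factor. What you need is a dedicated $\Oh(1)$-time rank/select structure for polylogarithmic alphabets that additionally supports a ``partial rank'' $\rankop_{\le c}(p)$; this is what lets each boundary node contribute $\Oh(1)$ to the counting query by batching all fully-covered children in one step, whereas per-child iteration costs $\Theta(\sigma\log_\sigma n)=\omega(\log n/\log\log n)$. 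Finally, your reporting procedure ``unfolds every fully covered subtree,'' which as written means inspecting $\Theta(\sigma)$ children at each of the $\Theta(\log_\sigma n)$ boundary nodes; for a query with $k\in\{0,1\}$ this already takes $\Theta(\sigma\log_\sigma n)$ time, contradicting the claimed $\Oh((k+1)\log n/\log\log n)$ bound. The structure must allow you to jump directly to the \emph{next non-empty} covered child (a range-successor over symbols restricted to a position range) so that every probed child can be charged to a distinct output point.
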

}

\begin{lemma}[Permutation grid]
\label{lem:permutation-grid}
	Given a permutation $\pi:[n]\to[n]$, we can represent the point set 
	$P=P(\pi) = \{(x,\pi(x)):x\in[n]\}$ using $n\lg n + o(n\log n)$ bits of space so that 
	we can answer the following queries:
	\begin{enumerate}
	\item 
		\ifsubmission{}{orthogonal-range-counting queries,\/ }
		$\GridCount_P(x_1,x_2;y_1,y_2) = \bigl|P \cap [x_1,x_2]\times[y_1,y_2]\bigr|$ 
		in $\Oh(\log n / \log\log n)$ time;
	\item 
		\ifsubmission{}{orthogonal-range-reporting queries,\/ }
		$\GridReport_P(x_1,x_2;y_1,y_2) = P \cap [x_1,x_2]\times[y_1,y_2]$ 
		in $\Oh((k+1)\log n / \log\log n)$ time, where $k=\GridCount_P(x_1,x_2;y_1,y_2)$;
	\item 
		\ifsubmission{}{application of $\pi$,} 
		$\GridYForX_P(\pi)(x) = \pi(x)$ for $x\in[n]$
		in $\Oh(\log n / \log\log n)$ time;
	\item 
		\ifsubmission{}{inverse of $\pi$, }
		$\GridXForY_{P(\pi)}(y) = \pi^{-1}(y)$ for $y\in[n]$
		in $\Oh(\log n / \log\log n)$ time.
	\end{enumerate}
\end{lemma}
\begin{proof}
\ifsubmission{%
	This follows immediately using the succinct point-grid data structure
	of~\cite[Thm.\,1]{BoseHeMaheshwariMorin2009}, which answers range queries
	in $\Oh(\log n / \log\log n)$ time per vertex and range counting queries in the same time.
}{%
We use the grid data structure from \wref{lem:grid-bose-et-al} on $P$;
counting and reporting queries are immediate, and for others we use that
$\GridYForX_{P(\pi)}(x) = \GridReport_P(x,x;1,n).\texttt y$ and 
$\GridXForY_{P(\pi)}(y) = \GridReport_P(1,n;y,y).\texttt x$.
Here we write $Q.\texttt x$ to denote the projection of point set $Q$ to the $x$-coordinates
of the points.
}%
\end{proof}

\ifsubmission{}{
	\begin{remarknonumber}[Iterate over range]
	It is not clear if we can iterate over the result of \GridReport 
	with $\Oh(\log n / \log\log n)$ time per point instead of obtaining all points
		in one~go.
	\end{remarknonumber}
}

\ifsubmission{}{
	\begin{remarknonumber}[Simpler alternatives]
		At the slight expense of increasing running times by a $\Oh(\log\log n)$ factor,
		we can replace the grid data structure by a wavelet tree, which is likely to be
		favorable for an implementation~\cite{Navarro2016,AcanChakrabortyJoRao2020}.
	\end{remarknonumber}
}

A last ingredient for our data structures is a recent result
on succinct distance oracles for proper interval graphs.
Here, an interval graph is the intersection graph of a set of intervals on the real line,
and a proper interval graph is one that has an interval realization where no interval strictly
contains another one.

\begin{lemma}[{{Succinct proper interval graphs~\cite[Thm.\,12]{HeMunroNekrichWildWu2020arxiv}}}]
\label{lem:proper-interval-graphs}
	A proper interval graph on $n$ vertices can be represented in $3n + o(n)$ bits of space
	so that $\GDistance(u,v)$ for $u,v\in[n]$ can be computed in $\Oh(1)$ time, and 
	vertices are identified by the rank of the left endpoints of their interval in some
	realization of the proper interval graph.
	We can also answer \GAdjacent, \GDegree, \GNeighbor in $\Oh(1)$ time and
	$\GSPath(u,v)$ in $\Oh(\GDistance(u,v))$ time.
	For connected graphs, the space can be reduced to $2n+o(n)$ bits.
\end{lemma}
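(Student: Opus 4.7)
The plan is to exploit the canonical vertex ordering of proper interval graphs. Identifying vertices with the rank of their interval's left endpoint makes every closed neighborhood $N[v]$ a contiguous range $[\ell(v),r(v)]$ with $\ell(v)\le v\le r(v)$, and both $\ell(\cdot)$ and $r(\cdot)$ are non-decreasing functions on $[n]$. The graph is therefore fully described by the single monotone sequence $r$, since $u<v$ are adjacent iff $r(u)\ge v$; equivalently $\ell(v) = \min\{u : r(u)\ge v\}$.

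I would encode $r(1)\le r(2)\le\cdots\le r(n)$ as a bitvector of length $n+r(n)\le 2n$ obtained by writing each gap $r(v)-r(v-1)$ in unary, separated by delimiter bits, and store it using \wref{lem:compressed-bit-vectors}; this costs $2n+o(n)$ bits and supports $O(1)$-time evaluation of $\ell(v)$ and $r(v)$ via a constant number of rank/select operations. The non-distance queries follow immediately: $\GAdjacent(u,v)$ tests $v\in[\ell(u),r(u)]$; $\GDegree(v) = r(v)-\ell(v)$; \GNeighbor enumerates $[\ell(v),r(v)]\setminus\{v\}$; and $\GSPath(u,v)$ is produced by iteratively hopping $x\mapsto r(x)$ rightward (or $x\mapsto\ell(x)$ leftward) until reaching $v$, producing one $O(1)$-time vertex per edge and giving the claimed $\Oh(\GDistance(u,v))$ time. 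The extra $n$ bits permitted in the general case I would use for auxiliary bitvectors that mark component boundaries and isolated vertices, information needed to report $\GDistance(u,v) = \infty$; a connected PIG automatically satisfies $r(v)\ge v+1$ for every $v<n$, removing this overhead and giving the stronger $2n+o(n)$ bound.

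The main obstacle is answering $\GDistance$ in constant time within this $O(n)$-bit budget. I would exploit the greedy-path structure: by monotonicity of $r$, for $u\le v$ no sequence of hops reaches further right than iterated application of $R=r$, so $\GDistance(u,v) = \min\{k : R^k(u)\ge v\}$ (with a symmetric statement using $L=\ell$ when $u>v$); equivalently, the vertices at distance exactly $k$ from $v$ form the interval $[L^k(v),L^{k-1}(v)-1]$. A naive sparse table of $R^{2^j}(u)$ would cost $\Theta(n\log^2 n)$ bits, which is excluded. My plan is to precompute the BFS-level boundary sequences from the leftmost vertex, $a_k = R^k(1)$, and from the rightmost, $b_k = L^k(n)$, storing them in $O(n)$ bits via compressed bitvectors, and then to resolve an arbitrary query $(u,v)$ by combining $u$'s level in the layering from $1$ with $v$'s level in the layering from $n$, corrected by a constant number of rank/select lookups that exploit the monotonicity of $R^k(\cdot)$ in $u$. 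Rigorously bounding this ``correction'' by an absolute constant, independent of the detailed structure of $r$ between anchor vertices, is the critical step and would require the most care.
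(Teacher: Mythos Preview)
The paper does not prove this lemma; it is quoted verbatim as a black box from~\cite[Thm.\,12]{HeMunroNekrichWildWu2020arxiv} and used without further argument. So there is no ``paper's own proof'' to compare against.

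On its own merits, your treatment of the non-distance queries is correct and standard: the monotone right-reach sequence $r(1)\le\cdots\le r(n)$ encoded in $2n+o(n)$ bits via unary gaps with rank/select indeed supports \GAdjacent, \GDegree, \GNeighbor in $O(1)$ time and greedy \GSPath in $O(\GDistance)$ time, and your remarks on component boundaries for the $3n$ vs.\ $2n$ distinction are fine.

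The gap is exactly where you flag it. Your BFS-from-$1$ idea does pin down the distance to within one: writing $d(x)=\GDistance(1,x)$ and $a_k=R^k(1)$, one has $a_{d(x)-1}<x\le a_{d(x)}$, and monotonicity of $r$ gives $R^k(u)\in[a_{d(u)+k-1},\,a_{d(u)+k}]$, hence for $u<v$ one obtains $\GDistance(u,v)\in\{d(v)-d(u),\,d(v)-d(u)+1\}$. But distinguishing the two cases means deciding whether $R^{\,d(v)-d(u)}(u)\ge v$, and nothing in your sketch shows how to evaluate an \emph{iterated} application of $R$ (with exponent depending on the query) using $O(1)$ rank/select operations on a single bitvector. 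Adding the symmetric layering from $n$ gives a second estimate with the same $\pm 1$ slack, and there is no reason the two ambiguities cancel; in general they do not. Resolving this last bit is the actual content of the cited theorem and requires an additional idea (a succinct structure supporting the equivalent of level-ancestor or weighted-ancestor queries on the greedy-hop forest), which your proposal does not supply.
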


\ifsubmission{}{
	\begin{remarknonumber}[$O(1)$ time neighborhood]
	It might sound impossible to do \GNeighbor in constant time independent
	of the output size; this is possible in proper interval graphs since neighborhoods
	are contiguous intervals (of vertex labels) and thus can be encoded implicitly 
	in a constant number of words.
	\end{remarknonumber}
}

\begin{remarknonumber}[Routing]
	By inspection of the proof, the data structure 
	from~\cite{HeMunroNekrichWildWu2020arxiv} can also support 
	$\GSPathFirst(u,v)$ in constant time.
	Thus, not just can $\GSPath(u,v)$ be answered in optimal overall time, 
	but we can output the path step by step in optimal time per edge.
\end{remarknonumber}

%
%


\section{Data Structures for Permutation Graphs}
\label{sec:data-structures}

In this section, we assume a permutation $\pi:[n]\to[n]$ is given and
we describe how to answer queries on $G_\pi$,
\ie, we describe our data structures for \emph{ordered} PGs.
We present two approaches:
the first solution uses a grid data structure that can support all queries, 
albeit with superconstant running time;
the second solution stores $\pi$ as an array and achieves optimal query times
for all operations except \GDegree.
Our formal result is as follows.

\begin{theorem}[Succinct PG]
	\label{thm:main}
	A permutation graph can be represented 
	\begin{thmenumerate}{thm:main}\sloppy
		\item \label{thm:main-grid}
		using $n\lg n + o(n \lg n)$ bits of space
		while supporting \GAdjacent, \GDegree, \GDistance, \GSPathFirst in $\Oh(\log n / \log\log n)$ time, 
		$\GNeighbor(v)$ in $O((\GDegree(v)+1) \cdot \log n/\log\log n)$ time,
		and $\GSPath(u,v)$ in $O((\GDistance(u,v)+1)\log n/\log\log n)$ time; or
		\item \label{thm:main-dist-oracle}
		using $n\lg n + (6.17+\epsilon)n + o(n)$ bits of space (for any constant $\epsilon>0$)
		while supporting \GAdjacent, \GDistance, \GSPathFirst, \GNextNeighbor in $\Oh(1)$ time,
		$\GNeighbor(v)$, $\GDegree(v)$ in $O(\GDegree(v)+1)$ time,
		and $\GSPath(u,v)$ in $O(\GDistance(u,v)+1)$ time.
		The time for $\GNextNeighbor(v)$ is amortized $O(1)$ over
		iterating through $\GNeighbor(v)$.
	\end{thmenumerate}
\end{theorem}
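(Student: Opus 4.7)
The plan is to encode the input as an \emph{ordered} permutation graph $G_\pi$ via its grid representation (each vertex $v$ corresponds to the point $(v,\pi^{-1}(v))$) and to exploit the standing adjacency rule: $u$ and $v$ are adjacent iff one lies in the top-left or bottom-right quadrant of the other. For part~(a) I store $\pi$ inside the succinct grid of \wref{cor:permutation-grid}, which alone accounts for the full $n\lg n+o(n\log n)$-bit budget while supporting evaluation of $\pi$ and $\pi^{-1}$ and orthogonal range counting/reporting in $O(\log n/\log\log n)$ time. For part~(b) I store $\pi$ and $\pi^{-1}$ as plain length-$n$ arrays of $\lceil\lg n\rceil$-bit entries, giving constant-time access to both, together with the succinct auxiliary indices described below; degree then ceases to be free and is computed by iterating the neighborhood.

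The ``local'' queries follow almost directly. $\GAdjacent(u,v)$ becomes one quadrant test using $\pi^{-1}$. $\GDegree(v)$ reduces to two $\GridCount$ calls in~(a), and in~(b) is merged with $\GNeighbor(v)$. For $\GNeighbor(v)$ in~(b) I plan to enumerate the neighbors in $v$'s top-left quadrant in increasing $x$-order using a range-maximum index of \wref{lem:rmq-indexing} on $\pi^{-1}[1..v-1]$: pop the current range-maximum (if it exceeds $\pi^{-1}(v)$, it is the next neighbor), split the surviving range, and recurse; each reported neighbor is produced in amortized $O(1)$ time, which simultaneously gives $\GNextNeighbor$. The bottom-right quadrant is handled symmetrically with a range-minimum index on $\pi^{-1}[v+1..n]$.

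The technical heart is the distance oracle. The plan is to prove a structural lemma stating that a BFS from any source $s$ is described by two monotone envelopes $(L_k,R_k)_{k\ge 0}$ on the $x$-axis, where $L_k$ (resp.~$R_k$) is the smallest (resp.~largest) vertex index at BFS-distance $\le k$ from $s$, and the frontier at distance $k+1$ is exactly the set of points in the top-left quadrant above $L_k$ together with the bottom-right quadrant below $R_k$. Combined with the range-extreme indices, a BFS step is then advanced in amortized $O(1)$ time in representation~(b) and $O(\log n/\log\log n)$ time in representation~(a), yielding the stated $\GSPath$ and $\GSPathFirst$ complexities. For $O(1)$-time $\GDistance$ in~(b), I would materialise an auxiliary proper interval graph on the same $n$ vertices whose BFS-distances coincide with those of $G_\pi$—taking intervals that are essentially the reachability ranges extracted from the monotone envelopes—and store it in $2n+o(n)$ bits via \wref{lem:proper-interval-graphs}. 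The $(7.17+\epsilon)n$ extra bits then accommodate two $\epsilon n$-bit RMQ indices, the $2n+o(n)$-bit proper-interval oracle, and a few linear-size bitvectors linking vertex identifiers across the two representations.

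The main obstacle will be establishing that BFS in $G_\pi$ can indeed be mimicked by BFS in a suitable proper interval graph realisable within $O(n)$ bits, and that the greedy monotone-envelope expansion truly tracks shortest-path distances edge by edge—particularly when recovering the \emph{first} edge on a shortest path, which needs to be read off from range-extreme queries together with a consistent tie-breaking rule. Once this reduction and its correctness are in place, the remaining ingredients (the succinct grid of \wref{cor:permutation-grid}, the RMQ indices of \wref{lem:rmq-indexing}, and the proper-interval oracle of \wref{lem:proper-interval-graphs}) compose by routine succinct-data-structure bookkeeping into the claimed time and space bounds.
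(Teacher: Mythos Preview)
Your handling of \GAdjacent, \GDegree, and \GNeighbor matches the paper, but there are two genuine problems with the rest of the proposal.

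\textbf{Space for part~(b).} You propose to store \emph{both} $\pi$ and $\pi^{-1}$ as plain arrays; that is already $2n\lceil\lg n\rceil \sim 2n\lg n$ bits, which exceeds the claimed $n\lg n+(7.17+\epsilon)n$. The paper's first observation is precisely that $\pi$ is never needed: every query touches only $\pi^{-1}$, so a single array $\Pi[i]=\pi^{-1}(i)$ suffices. This is easy to fix, but as written your space bound does not hold.

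\textbf{The distance oracle.} Your structural lemma --- that the BFS ball around $s$ is captured by an interval $[L_k,R_k]$ on the $x$-axis and that the next layer is exactly the union of the top-left quadrant of $(L_k,\pi^{-1}(L_k))$ and the bottom-right quadrant of $(R_k,\pi^{-1}(R_k))$ --- is false. Take $\pi^{-1}=(3,4,1,2)$, so $G_\pi\cong K_{2,2}$ with parts $\{1,2\}$ and $\{3,4\}$. From $s=1$ we have $L_1=1$, $R_1=4$, yet vertex~$2$ (at distance~$2$) lies in neither of those two quadrants. The BFS ball is $\{1,3,4\}$, not an $x$-interval, and the next frontier is reached through the \emph{top-left} quadrant of an interior point, not of $L_k$. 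A two-parameter envelope on one axis is not enough; you must also track how far ``up'' and ``down'' in $y$ you can reach. Consequently, reducing to a \emph{single} proper interval graph whose distances agree with $G_\pi$ does not work in the way you sketch.

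The paper instead follows Bazzaro and Gavoille: it singles out the left-to-right maxima of $\pi^{-1}$ (type~$A$) and the right-to-left minima (type~$B$), shows that shortest paths can be taken to alternate between $A$- and $B$-vertices after at most two initial steps, and stores four rank/select bitvectors $\mathtt A_x,\mathtt A_y,\mathtt B_x,\mathtt B_y$ from which $a^\pm(v)$ and $b^\pm(v)$ (the extremal $A$/$B$ neighbors of $v$) are computed in $O(1)$ time plus one access to $\pi^{-1}(v)$. Distances $\le 3$ are handled by explicit case analysis; for larger distances the query reduces to distance queries in \emph{two} proper interval graphs $G_A$ (on $A$-vertices with intervals $[b^-(v),b^+(v)]$) and $G_B$ (symmetrically), each stored via \wref{lem:proper-interval-graphs}. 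The $7.17n$ budget is then spent on these bitvectors (encoded as a ternary string in $2\lg 3\cdot n\approx 3.17n$ bits after pushing isolated vertices aside), the two interval-graph oracles ($\le 3n$ bits total), the array overhead ($\le n$ bits), and two $\epsilon n$-bit RMQ indices.
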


\subsection{Grid-Based Data Structure}
\label{sec:ds-grid-based}

We first present the simpler grid-based data structure.
Here, we store $P(\pi) = \{(v,\pi^{-1}(v)):v\in[n]\}$ 
in the data structure of \wref{lem:permutation-grid} and
identify vertices with the $x$-coordinates of these points 
(the rank of the vertex' chord endpoint on the upper line).

\ifsubmission{First consider \GAdjacent.}{\paragraph{Adjacent}}
Given two vertices $u$ and $v$, \withoutlossofgenerality $u < v$. 
We compute $\pi^{-1}(u) = \GridYForX(u)$ and $\pi^{-1}(v) = \GridYForX(v)$;
then $\GAdjacent(u,v) = [\pi^{-1}(u) > \pi^{-1}(v)]$.

\ifsubmission{Next, we show how to implement \GNeighbor.}{\paragraph{Neighborhood}}
We separate the neighbors of a vertex $v$ into $\GNeighbor(v) = N^-(v)\cup N^+(v)$ where
$N^-(v) = \GNeighbor(v)\cap [1..v-1]$ and 
$N^+(v) = \GNeighbor(v)\cap [v+1..n]$.
Using the graphical representation of neighborhoods from \wref{fig:example-small}, 
we immediately obtain
$N^-(v) = \GridReport(1,v-1;\GridYForX(v),n)$ and 
$N^+(v) = \GridReport(v+1,n;1,\GridYForX(v))$.

\ifsubmission{Finally, consider \GDegree.}{\paragraph{Degree}}
Replacing the range-\emph{reporting} queries from \GNeighbor by range-\emph{counting} queries
yields $\GDegree(v) = |N^-(v)| + |N^+(v)|$.

\subsection{Array-Based Data Structure}
\label{sec:ds-array-based}

To improve the query time, we now give an alternative representation.
A key observation is that we never compute $\pi$; only $\pi^{-1}$ is needed.
Hence we simply store an array $\Pi[1..n]$ with $\Pi[i] = \pi^{-1}(i)$
using $n \lceil\lg n\rceil \le n \lg n + n$ bits of space.
At the expense of a slightly more complicated data structure, 
one can improve this space usage to $\lceil n \lg n\rceil = n \lg n + O(1)$
using the techniques of Dodis et al.~\cite{DodisParascuThorup2010},
still retaining access to $\Pi$ in constant time.
For legibility, we continue to write $\pi^{-1}(i)$ in operations, but it is understood that
this is indeed an access to $\Pi[i]$.

\ifsubmission{}{\par\paragraph{Adjacency}}
\GAdjacent queries only use $\pi^{-1}$, and thus they are solved exactly as above.

\paragraph{Neighborhood}
Like in our previous approach, we separately handle 
the neighbors $u$ of $v$ with $u<v$ (in $N^-(v)$) and with $u>v$ (in $N^+(v)$).
Even though we do not explicitly store the point set $P(\pi)$ in our data structure,
we can still answer the above range queries, 
because these are effectively two-sided range queries\ifsubmission{}{ (dominance queries)}:

For $N^-(v) = \GridReport(1,v-1;\pi^{-1}(v),n)$,
we maintain the range-maximum index from \wref{lem:rmq-indexing}
on $\Pi[1..n]$ using $\epsilon n$ bits of space.
We can then iterate through the vertices in $N^-(v)$ 
using the standard algorithm for three-sided orthogonal range reporting 
\ifsubmission{%
	using priority search trees.
}{%
	that uses priority search trees:
We compute $i = \RangeMax_\Pi(1,v-1)$; if $\pi^{-1}(i) \ge \pi^{-1}(v)$, we report $i$ as a neighbor
and recursively continue in the ranges $[1..i-1]$ and $[i+1..v-1]$. 
Otherwise, if $\pi^{-1}(i) < \pi^{-1}(v)$, we terminate the recursion.
(We also terminate recursive calls on empty ranges).
Each recursive call only takes constant time and either terminates or outputs a new neighbor of $v$,
so we can iterate through $N^-(v)$ with constant amortized time per vertex.
}

For $N^+(v) = \GridReport(v+1,n;1,\pi^{-1}(v))$, 
we use the same technique, reflected: we store a range-minimum index on $\Pi[1..n]$,
starting with the range $[v+1,n]$ and continue as long as the returned minimum is 
at most $\pi^{-1}(v)$.

\ifsubmission{}{\paragraph{Next neighbor}}

The above method can easily be used to \emph{iterate} over neighbors one at a time, instead
of generating and returning the full neighborhood. 
The order of iteration is implementation-defined (ultimately by the RMQ index),
but fixed for any $G_\pi$.
An easy argument shows that reporting the $k$th neighbor with the above algorithm can 
take $\Theta(k)$ time,
but amortized over the entire neighborhood of a vertex, iteration takes constant time per neighbor.
However, if done naively, it would require $O(k)$ extra working space to store the $k$ ranges 
wherein the $k$th neighbor might be found.

\ifsubmission{%
	Using a refined algorithm,
	we can reduce the extra space to constant.
	Moreover, we can support starting at an arbitrary given neighbor $w$
	and find $\GNextNeighbor(v,w)$ in the traversal.
	Basically, we can use queries on the Cartesian tree to simulate the 
	three-sided range search algorithm with out an explicit recursion.
	Due to space constraints, details are deferred to \wref{app:rmq-next-neighbor}.
}{%
	\par
	We can improve the extra space to $O(1)$ (words) 
	and support starting at an arbitrary given neighbor $w$
	to find $\GNextNeighbor(v,w)$ in the traversal.
	For that, we have to look into the black box that is the RMQ index from \wref{lem:rmq-indexing}.
Indeed, what we describe here is modification of the construction of 
Fischer and Heun~\cite[Thm.\,3.7]{FischerHeun2011}
that has the same asymptotic performance characteristics as in \wref{lem:rmq-indexing},
but allows to iterate over values above a threshold.

\begin{lemma}[RMQ index with next-above]
\label{lem:rmq-indexing-next}
	Let $A[1..n]$ be a static array of comparable elements. 
	For any constant $\epsilon>0$, 
	there is a data structure using $\epsilon n$ bits of space on top of $A$ 
	that supports the following queries in $O(1/\epsilon)$ time 
	(making as many queries to $A$) and using $O(1)$ words of working memory:
	\begin{thmenumerate}{lem:rmq-indexing-next}
	\item range-maximum queries, $\RangeMax_A(\ell,r)$,
	\item next-above queries, $\NextAbove_A(\ell,r,y;i)$, 
		enumerating $\{i\in[\ell,r] : A[i] \ge y\}$
		in amortized $O(1/\epsilon)$ time.\\
		Formally, \NextAbove implicitly defines a sequence $(i_j)_{j\ge 0}$ via
		$i_0 = \RangeMax_A(\ell,r)$ if $A[i_0] \ge y$ and $i_0 = \text{null}$ otherwise,
		and $i_{j+1} = \NextAbove_A(\ell,r,y;i_j)$ if $i_j \ne \text{null}$ and $i_{j+1} = \text{null}$ otherwise.
		Then we require $\{i_j : i_j \ne \text{null}\} = \{i\in[\ell,r] : A[i] \ge y\}$.
	\end{thmenumerate}
\end{lemma}
This index can be used to iterate over the result of 
3-sided orthogonal range queries with amortized constant delay and using
constant working memory by computing the sequence $(i_j)$.

\begin{proof}
A $2\epsilon n+o(n)$ bit RMQ index for an array $A[1..n]$ can be obtained by (conceptually) dividing
$A$ into $\epsilon n$ blocks of $\lceil 1/\epsilon\rceil$ elements each and storing the 
Cartesian tree~\cite{GabowBentleyTarjan1984,Vuillemin1980} of the block maxima
as a succinct binary tree~\cite[Thm.\,3]{DavoodiRamanRao2017} in $2\epsilon n+o(n)$ bits.
This tree data structure allows in constant time to 
(a) map between nodes and their corresponding block indices in $A$,
(b) map between nodes and preorder indices,
(c) find the lowest common ancestor (LCA) of two nodes, and
(d) return the number of descendants of a node.
We first discuss how to solve the problem for $\epsilon=1$, \ie, when all elements are part of the tree.
We will identify nodes in the Cartesian tree $T$ with their inorder number, \ie, the index in $A$.
To answer $\RangeMax_A(\ell,r)$, we simply use the Cartesian tree operations to find the nodes (of inorder index) $\ell$ and $r$ and return (the inorder index of) their
LCA.

To iterate through all indices $i\in[\ell,r]$ with $A[i] \ge y$, we will now show how
to compute the next such index, $\NextAbove_A(\ell, r, y; i)$, given only a current such index $i$ (and $\ell$, $r$ and $y$);
if no further such index exists, we will return ``null''.

First, we compute $i_0=\RangeMax_A(\ell,r)$.
We will iterate through indices in the order of a \emph{preorder} traversal of
the subtree rooted at $i_0$, starting from the current node $i$.
The challenge is to, in constant time, skip over parts of the tree that are
outside of the range $[\ell,r]$ or have all $A$-values below $y$.
More specifically, the first step is to find the next candidate index $s\in[\ell,r]$, for which $A[s]\ge y$ might hold,
given the current index $i$.
We initialize $s$ to the successor of $i$ in preorder.

Now, we repeat the following steps until we have either found the next index or have determined that none exists.
If $s$ is \emph{not} a descendant of $i_0$ in $T$,
then there are no more indices to report and we return null;
we can check this condition in constant time by comparing the preorder index of $s$ to the sum of the preorder index of $i_0$ and $i_0$'s subtree size.

If $s$ is within $i_0$'s subtree, we check whether $s\in[\ell,r]$; if not, $s$ is too far left or too far right,
and we have to find the next node (in preorder) that lies inside $[\ell,r]$.
If $s<\ell$ and $i>\ell$, then $s$ is the left child of $i$, and following right-child links from $s$ eventually brings us back into the range $[\ell,r]$ since node $i-1\in[\ell,r]$ must lie in $s$'s right subtree. 
In this case, we update $s$ to the LCA of $\ell$ and $i-1$ to obtain, in $O(1)$ time, the first node (in preorder) where this sequence of right-child links from $s$ enters the inorder range $[\ell,r]$ again.
If $s<\ell$ and $i = \ell$, $i$ is the leftmost node in the range and we have to skip its left subtree. We can do this by advancing from $s$ (in preorder) 
by as many nodes as $s$'s right child has descendants;
the tree data structure again supports this in constant time.
The symmetric case of $s>r$ is handled similarly. If $i < r$, we set $s$ to LCA of $i+1$ and $r$; if $s>r$ and $i=r$, $i$ was the last node in preorder with inorder index in range $[\ell,r]$, so we can return null.

In all cases, after $O(1)$ time, we either terminate or arrive at the next candidate node $s$.
If $A[s]\ge y$ we return $s$ and are done.
Otherwise, \ie, if $A[s] < y$, then $s$ and its entire subtree have to be skipped;
the tree data structure supports this in constant time (as above).
Then we repeat the above steps with the new $s$.

We note that the accesses to $A$ are the same as in the naive implementation of 
three-sided range reporting, and only constant time is needed between two such accesses;
hence the same time bounds hold.

When we use blocks of $c = \lceil 1/\epsilon\rceil$ elements and only construct $T$ based
on the block minima, we modify this procedure as follows.
When we are given a current index $i$, we first check the indices $j>i$ in $i$'s block.
If any $j$ has $A[j]\ge y$, we return it.
Only if none of the indices in $i$'s block are returned, we continue with the above procedure to
find the next candidate node $s$.
When we compare the candidate node ``$A[s]\ge y$'', we now iterate through the block
corresponding to node $s$ and compare each array entry with $y$.
When we find $i$ with $A[i]\ge y$, we return this index; if none of the elements in the block
where big enough, we continue as if $A[s]<y$ held.
\end{proof}


From the discussion above, it is clear that $\GNextNeighbor$
corresponds exactly to \NextAbove queries
(separately for $N^+$ and $N^-$),
and so using \wref{lem:rmq-indexing-next}, 
we can support $\GNextNeighbor(u,w)$ with constant words of extra 
working memory and amortized constant running time 
(amortized over the iteration over all neighbors of $u$).

}
\ifsubmission{}{

\begin{remark}[Easy degrees]
		We can compute $\GDegree(v)$ as $|\GNeighbor(v)|$ in $\Oh(\GDegree(v)+1)$ time,
		but this is not particularly efficient for high-degree vertices.
	We can obviously also add support for \GDegree in constant time by storing the degrees 
	of all vertices explicitly in an array. 
	This occupies an additional $n \lceil\lg n\rceil$ bits of space and is thus not succinct,
	but might in implementations be preferable to the grid data structure (and offers all queries
		in optimal time complexity).
\end{remark}
}

\subsection{Distance and Shortest Paths}
\label{sec:ds-distance}

Both of the above data structures can be augmented to support distance and shortest-path queries;
the only difference will be the running time to compute~$\pi^{-1}(v)$.
\ifsubmission{}{\par}
For that, we follow the idea of~\cite{BazzaroGavoille2009};
we sketch their approach here and give a more formal definition below.
A shortest path from $u$ to $v$ in a PG can always be found
using only \emph{left-to-right maxima} (``type $A$'' vertices) and 
\emph{right-to-left minima} (``type $B$'' vertices) 
as intermediate vertices; moreover, these are strictly alternating.
Hence, after removing an initial segment of at most 2 edges on either end of the path,
such a shortest path has either type $A(BA)^*A$ or $B(AB)^*B$.
For example, a shortest path from vertex 15 to vertex 25 in \wref{fig:example}
is $15$--$\color{red}14$--$\color{green}23$--$\color{red}22$--$25$.
Finally, how many intermediate $B$-vertices are needed to move from one $A$ vertex to another
is captured by a \emph{proper interval graph} $G_A$, and likewise for $B$-vertices in $G_B$.
We can hence reduce the shortest-path queries to proper interval graphs and 
use \wref{lem:proper-interval-graphs}. We present the details below. 

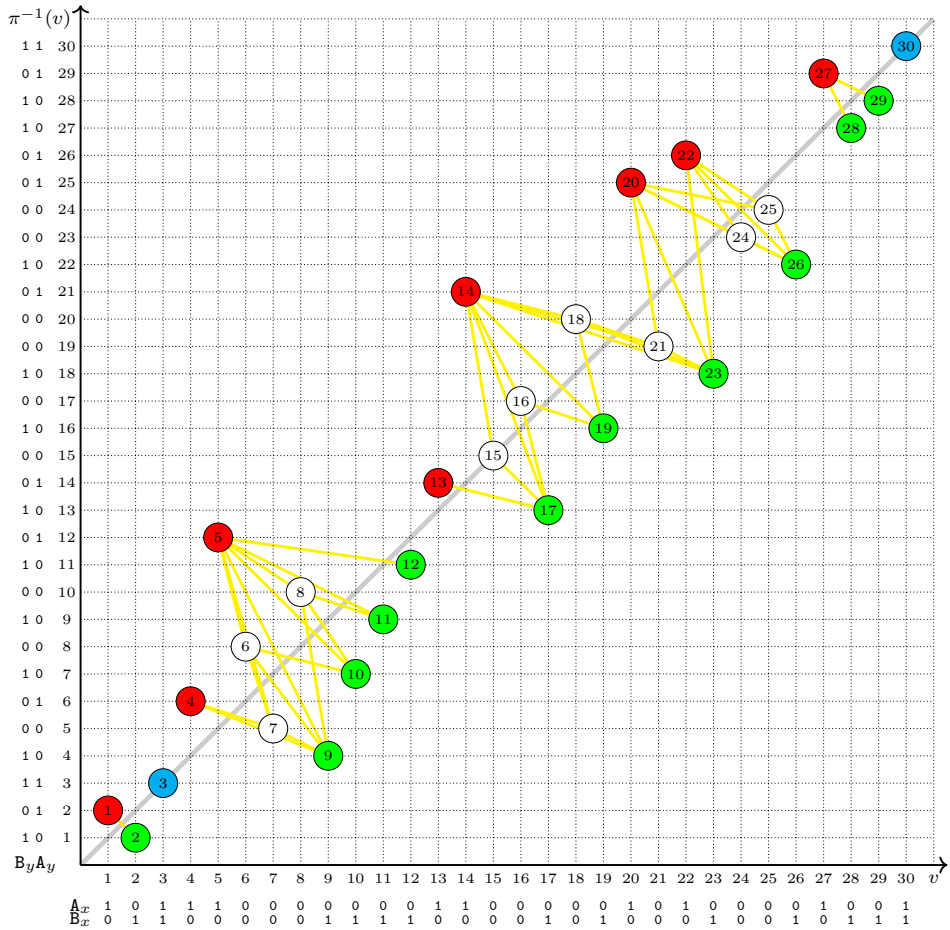
\begin{figure}[htb]
	\def\scaledwidth{.8\linewidth}%
	\iflipics{\def\picwidth{.8\linewidth}}{}%
	\ifkoma{\def\picwidth{\linewidth}}{}%
	\ifspringer{\def\picwidth{.8\linewidth}}{}%
	\ifspringer{\def\scaledwidth{.9\linewidth}}{}%
	\plaincenter{\adjustbox{max width=\scaledwidth}{\plaincenter{%
		\begin{tikzpicture}[
				scale=.4,
				point/.style = {draw,fill=white,circle,minimum size=12pt,inner sep=0pt,font=\tiny},
				A point/.style = {point,fill=red},
				B point/.style = {point,fill=green},
				AB point/.style = {point,fill=cyan},
				graph edge/.style = {yellow,very thick},
			]
			\draw[densely dotted] (0,0) grid (31,31) ;
			\draw[->,thick] (0,0) -- (31.500000,0) ;
			\draw[->,thick] (0,0) -- (0,31.500000) ;
				\node at (1,-.5) {\tiny 1} ;
				\node at (2,-.5) {\tiny 2} ;
				\node at (3,-.5) {\tiny 3} ;
				\node at (4,-.5) {\tiny 4} ;
				\node at (5,-.5) {\tiny 5} ;
				\node at (6,-.5) {\tiny 6} ;
				\node at (7,-.5) {\tiny 7} ;
				\node at (8,-.5) {\tiny 8} ;
				\node at (9,-.5) {\tiny 9} ;
				\node at (10,-.5) {\tiny 10} ;
				\node at (11,-.5) {\tiny 11} ;
				\node at (12,-.5) {\tiny 12} ;
				\node at (13,-.5) {\tiny 13} ;
				\node at (14,-.5) {\tiny 14} ;
				\node at (15,-.5) {\tiny 15} ;
				\node at (16,-.5) {\tiny 16} ;
				\node at (17,-.5) {\tiny 17} ;
				\node at (18,-.5) {\tiny 18} ;
				\node at (19,-.5) {\tiny 19} ;
				\node at (20,-.5) {\tiny 20} ;
				\node at (21,-.5) {\tiny 21} ;
				\node at (22,-.5) {\tiny 22} ;
				\node at (23,-.5) {\tiny 23} ;
				\node at (24,-.5) {\tiny 24} ;
				\node at (25,-.5) {\tiny 25} ;
				\node at (26,-.5) {\tiny 26} ;
				\node at (27,-.5) {\tiny 27} ;
				\node at (28,-.5) {\tiny 28} ;
				\node at (29,-.5) {\tiny 29} ;
				\node at (30,-.5) {\tiny 30} ;
				\node at (31,-.5) {\scriptsize $v$} ;
				\node at (-.5,1) {\tiny 1} ;
				\node at (-.5,2) {\tiny 2} ;
				\node at (-.5,3) {\tiny 3} ;
				\node at (-.5,4) {\tiny 4} ;
				\node at (-.5,5) {\tiny 5} ;
				\node at (-.5,6) {\tiny 6} ;
				\node at (-.5,7) {\tiny 7} ;
				\node at (-.5,8) {\tiny 8} ;
				\node at (-.5,9) {\tiny 9} ;
				\node at (-.5,10) {\tiny 10} ;
				\node at (-.5,11) {\tiny 11} ;
				\node at (-.5,12) {\tiny 12} ;
				\node at (-.5,13) {\tiny 13} ;
				\node at (-.5,14) {\tiny 14} ;
				\node at (-.5,15) {\tiny 15} ;
				\node at (-.5,16) {\tiny 16} ;
				\node at (-.5,17) {\tiny 17} ;
				\node at (-.5,18) {\tiny 18} ;
				\node at (-.5,19) {\tiny 19} ;
				\node at (-.5,20) {\tiny 20} ;
				\node at (-.5,21) {\tiny 21} ;
				\node at (-.5,22) {\tiny 22} ;
				\node at (-.5,23) {\tiny 23} ;
				\node at (-.5,24) {\tiny 24} ;
				\node at (-.5,25) {\tiny 25} ;
				\node at (-.5,26) {\tiny 26} ;
				\node at (-.5,27) {\tiny 27} ;
				\node at (-.5,28) {\tiny 28} ;
				\node at (-.5,29) {\tiny 29} ;
				\node at (-.5,30) {\tiny 30} ;
				\node[anchor=east] at (0,31) {\scriptsize $\pi^{-1}(v)$} ;
			\node[A point] (v1) at (1,2) {1} ;
			\node[B point] (v2) at (2,1) {2} ;
			\node[AB point] (v3) at (3,3) {3} ;
			\node[A point] (v4) at (4,6) {4} ;
			\node[A point] (v5) at (5,12) {5} ;
			\node[point] (v6) at (6,8) {6} ;
			\node[point] (v7) at (7,5) {7} ;
			\node[point] (v8) at (8,10) {8} ;
			\node[B point] (v9) at (9,4) {9} ;
			\node[B point] (v10) at (10,7) {10} ;
			\node[B point] (v11) at (11,9) {11} ;
			\node[B point] (v12) at (12,11) {12} ;
			\node[A point] (v13) at (13,14) {13} ;
			\node[A point] (v14) at (14,21) {14} ;
			\node[point] (v15) at (15,15) {15} ;
			\node[point] (v16) at (16,17) {16} ;
			\node[B point] (v17) at (17,13) {17} ;
			\node[point] (v18) at (18,20) {18} ;
			\node[B point] (v19) at (19,16) {19} ;
			\node[A point] (v20) at (20,25) {20} ;
			\node[point] (v21) at (21,19) {21} ;
			\node[A point] (v22) at (22,26) {22} ;
			\node[B point] (v23) at (23,18) {23} ;
			\node[point] (v24) at (24,23) {24} ;
			\node[point] (v25) at (25,24) {25} ;
			\node[B point] (v26) at (26,22) {26} ;
			\node[A point] (v27) at (27,29) {27} ;
			\node[B point] (v28) at (28,27) {28} ;
			\node[B point] (v29) at (29,28) {29} ;
			\node[AB point] (v30) at (30,30) {30} ;
			\begin{pgfonlayer}{background}
				\draw[line width=2pt,black!20] (0,0) -- (31,31) ;
				\draw[graph edge] (v1) to (v2) ; 
				\draw[graph edge] (v4) to (v7) ; 
				\draw[graph edge] (v4) to (v9) ; 
				\draw[graph edge] (v5) to (v6) ; 
				\draw[graph edge] (v5) to (v7) ; 
				\draw[graph edge] (v5) to (v8) ; 
				\draw[graph edge] (v5) to (v9) ; 
				\draw[graph edge] (v5) to (v10) ; 
				\draw[graph edge] (v5) to (v11) ; 
				\draw[graph edge] (v5) to (v12) ; 
				\draw[graph edge] (v6) to (v7) ; 
				\draw[graph edge] (v6) to (v9) ; 
				\draw[graph edge] (v6) to (v10) ; 
				\draw[graph edge] (v7) to (v9) ; 
				\draw[graph edge] (v8) to (v9) ; 
				\draw[graph edge] (v8) to (v10) ; 
				\draw[graph edge] (v8) to (v11) ; 
				\draw[graph edge] (v13) to (v17) ; 
				\draw[graph edge] (v14) to (v15) ; 
				\draw[graph edge] (v14) to (v16) ; 
				\draw[graph edge] (v14) to (v17) ; 
				\draw[graph edge] (v14) to (v18) ; 
				\draw[graph edge] (v14) to (v19) ; 
				\draw[graph edge] (v14) to (v21) ; 
				\draw[graph edge] (v14) to (v23) ; 
				\draw[graph edge] (v15) to (v17) ; 
				\draw[graph edge] (v16) to (v17) ; 
				\draw[graph edge] (v16) to (v19) ; 
				\draw[graph edge] (v18) to (v19) ; 
				\draw[graph edge] (v18) to (v21) ; 
				\draw[graph edge] (v18) to (v23) ; 
				\draw[graph edge] (v20) to (v21) ; 
				\draw[graph edge] (v20) to (v23) ; 
				\draw[graph edge] (v20) to (v24) ; 
				\draw[graph edge] (v20) to (v25) ; 
				\draw[graph edge] (v20) to (v26) ; 
				\draw[graph edge] (v21) to (v23) ; 
				\draw[graph edge] (v22) to (v23) ; 
				\draw[graph edge] (v22) to (v24) ; 
				\draw[graph edge] (v22) to (v25) ; 
				\draw[graph edge] (v22) to (v26) ; 
				\draw[graph edge] (v24) to (v26) ; 
				\draw[graph edge] (v25) to (v26) ; 
				\draw[graph edge] (v27) to (v28) ; 
				\draw[graph edge] (v27) to (v29) ; 
			\end{pgfonlayer}
				\node at (1,-1.5) {\tiny\ttfamily 1} ;
				\node at (2,-1.5) {\tiny\ttfamily 0} ;
				\node at (3,-1.5) {\tiny\ttfamily 1} ;
				\node at (4,-1.5) {\tiny\ttfamily 1} ;
				\node at (5,-1.5) {\tiny\ttfamily 1} ;
				\node at (6,-1.5) {\tiny\ttfamily 0} ;
				\node at (7,-1.5) {\tiny\ttfamily 0} ;
				\node at (8,-1.5) {\tiny\ttfamily 0} ;
				\node at (9,-1.5) {\tiny\ttfamily 0} ;
				\node at (10,-1.5) {\tiny\ttfamily 0} ;
				\node at (11,-1.5) {\tiny\ttfamily 0} ;
				\node at (12,-1.5) {\tiny\ttfamily 0} ;
				\node at (13,-1.5) {\tiny\ttfamily 1} ;
				\node at (14,-1.5) {\tiny\ttfamily 1} ;
				\node at (15,-1.5) {\tiny\ttfamily 0} ;
				\node at (16,-1.5) {\tiny\ttfamily 0} ;
				\node at (17,-1.5) {\tiny\ttfamily 0} ;
				\node at (18,-1.5) {\tiny\ttfamily 0} ;
				\node at (19,-1.5) {\tiny\ttfamily 0} ;
				\node at (20,-1.5) {\tiny\ttfamily 1} ;
				\node at (21,-1.5) {\tiny\ttfamily 0} ;
				\node at (22,-1.5) {\tiny\ttfamily 1} ;
				\node at (23,-1.5) {\tiny\ttfamily 0} ;
				\node at (24,-1.5) {\tiny\ttfamily 0} ;
				\node at (25,-1.5) {\tiny\ttfamily 0} ;
				\node at (26,-1.5) {\tiny\ttfamily 0} ;
				\node at (27,-1.5) {\tiny\ttfamily 1} ;
				\node at (28,-1.5) {\tiny\ttfamily 0} ;
				\node at (29,-1.5) {\tiny\ttfamily 0} ;
				\node at (30,-1.5) {\tiny\ttfamily 1} ;
				\node at (0,-1.5) {\scriptsize$\texttt A_x$} ;
				\node at (1,-2) {\tiny\ttfamily 0} ;
				\node at (2,-2) {\tiny\ttfamily 1} ;
				\node at (3,-2) {\tiny\ttfamily 1} ;
				\node at (4,-2) {\tiny\ttfamily 0} ;
				\node at (5,-2) {\tiny\ttfamily 0} ;
				\node at (6,-2) {\tiny\ttfamily 0} ;
				\node at (7,-2) {\tiny\ttfamily 0} ;
				\node at (8,-2) {\tiny\ttfamily 0} ;
				\node at (9,-2) {\tiny\ttfamily 1} ;
				\node at (10,-2) {\tiny\ttfamily 1} ;
				\node at (11,-2) {\tiny\ttfamily 1} ;
				\node at (12,-2) {\tiny\ttfamily 1} ;
				\node at (13,-2) {\tiny\ttfamily 0} ;
				\node at (14,-2) {\tiny\ttfamily 0} ;
				\node at (15,-2) {\tiny\ttfamily 0} ;
				\node at (16,-2) {\tiny\ttfamily 0} ;
				\node at (17,-2) {\tiny\ttfamily 1} ;
				\node at (18,-2) {\tiny\ttfamily 0} ;
				\node at (19,-2) {\tiny\ttfamily 1} ;
				\node at (20,-2) {\tiny\ttfamily 0} ;
				\node at (21,-2) {\tiny\ttfamily 0} ;
				\node at (22,-2) {\tiny\ttfamily 0} ;
				\node at (23,-2) {\tiny\ttfamily 1} ;
				\node at (24,-2) {\tiny\ttfamily 0} ;
				\node at (25,-2) {\tiny\ttfamily 0} ;
				\node at (26,-2) {\tiny\ttfamily 1} ;
				\node at (27,-2) {\tiny\ttfamily 0} ;
				\node at (28,-2) {\tiny\ttfamily 1} ;
				\node at (29,-2) {\tiny\ttfamily 1} ;
				\node at (30,-2) {\tiny\ttfamily 1} ;
				\node at (0,-2) {\scriptsize$\texttt B_x$} ;
				\node at (-1.5,1) {\tiny\ttfamily 0} ;
				\node at (-1.5,2) {\tiny\ttfamily 1} ;
				\node at (-1.5,3) {\tiny\ttfamily 1} ;
				\node at (-1.5,4) {\tiny\ttfamily 0} ;
				\node at (-1.5,5) {\tiny\ttfamily 0} ;
				\node at (-1.5,6) {\tiny\ttfamily 1} ;
				\node at (-1.5,7) {\tiny\ttfamily 0} ;
				\node at (-1.5,8) {\tiny\ttfamily 0} ;
				\node at (-1.5,9) {\tiny\ttfamily 0} ;
				\node at (-1.5,10) {\tiny\ttfamily 0} ;
				\node at (-1.5,11) {\tiny\ttfamily 0} ;
				\node at (-1.5,12) {\tiny\ttfamily 1} ;
				\node at (-1.5,13) {\tiny\ttfamily 0} ;
				\node at (-1.5,14) {\tiny\ttfamily 1} ;
				\node at (-1.5,15) {\tiny\ttfamily 0} ;
				\node at (-1.5,16) {\tiny\ttfamily 0} ;
				\node at (-1.5,17) {\tiny\ttfamily 0} ;
				\node at (-1.5,18) {\tiny\ttfamily 0} ;
				\node at (-1.5,19) {\tiny\ttfamily 0} ;
				\node at (-1.5,20) {\tiny\ttfamily 0} ;
				\node at (-1.5,21) {\tiny\ttfamily 1} ;
				\node at (-1.5,22) {\tiny\ttfamily 0} ;
				\node at (-1.5,23) {\tiny\ttfamily 0} ;
				\node at (-1.5,24) {\tiny\ttfamily 0} ;
				\node at (-1.5,25) {\tiny\ttfamily 1} ;
				\node at (-1.5,26) {\tiny\ttfamily 1} ;
				\node at (-1.5,27) {\tiny\ttfamily 0} ;
				\node at (-1.5,28) {\tiny\ttfamily 0} ;
				\node at (-1.5,29) {\tiny\ttfamily 1} ;
				\node at (-1.5,30) {\tiny\ttfamily 1} ;
				\node at (-1.25,0) {\scriptsize$\texttt A_y$} ;
				\node at (-2,1) {\tiny\ttfamily 1} ;
				\node at (-2,2) {\tiny\ttfamily 0} ;
				\node at (-2,3) {\tiny\ttfamily 1} ;
				\node at (-2,4) {\tiny\ttfamily 1} ;
				\node at (-2,5) {\tiny\ttfamily 0} ;
				\node at (-2,6) {\tiny\ttfamily 0} ;
				\node at (-2,7) {\tiny\ttfamily 1} ;
				\node at (-2,8) {\tiny\ttfamily 0} ;
				\node at (-2,9) {\tiny\ttfamily 1} ;
				\node at (-2,10) {\tiny\ttfamily 0} ;
				\node at (-2,11) {\tiny\ttfamily 1} ;
				\node at (-2,12) {\tiny\ttfamily 0} ;
				\node at (-2,13) {\tiny\ttfamily 1} ;
				\node at (-2,14) {\tiny\ttfamily 0} ;
				\node at (-2,15) {\tiny\ttfamily 0} ;
				\node at (-2,16) {\tiny\ttfamily 1} ;
				\node at (-2,17) {\tiny\ttfamily 0} ;
				\node at (-2,18) {\tiny\ttfamily 1} ;
				\node at (-2,19) {\tiny\ttfamily 0} ;
				\node at (-2,20) {\tiny\ttfamily 0} ;
				\node at (-2,21) {\tiny\ttfamily 0} ;
				\node at (-2,22) {\tiny\ttfamily 1} ;
				\node at (-2,23) {\tiny\ttfamily 0} ;
				\node at (-2,24) {\tiny\ttfamily 0} ;
				\node at (-2,25) {\tiny\ttfamily 0} ;
				\node at (-2,26) {\tiny\ttfamily 0} ;
				\node at (-2,27) {\tiny\ttfamily 1} ;
				\node at (-2,28) {\tiny\ttfamily 1} ;
				\node at (-2,29) {\tiny\ttfamily 0} ;
				\node at (-2,30) {\tiny\ttfamily 1} ;
				\node at (-2,0) {\scriptsize$\texttt B_y$} ;
		\end{tikzpicture}
	}}}
	\caption{%
		Example of a permutation graph with $n=30$ vertices, shown as the points $P(\pi)$.
		$A$-vertices are shown red, $B$-vertices are green and vertices that have both type $A$ and $B$ 
		(isolated vertices) are shown blue.
		Edges in $G_\pi$ are drawn yellow.
	}
	\label{fig:example}
\end{figure}

\ifsubmission{}{
\begin{figure}[htb]
	\def\scaledwidth{\linewidth}%
	\iflipics{\def\picwidth{.8\linewidth}}{}%
	\ifkoma{\def\picwidth{\linewidth}}{}%
	\ifspringer{\def\picwidth{.8\linewidth}}{}%
	\ifspringer{\def\scaledwidth{.9\linewidth}}{}%
	\plaincenter{\adjustbox{max width=\scaledwidth}{%
	\begin{tikzpicture}[
			scale=.4,
			point/.style = {draw,fill=white,circle,minimum size=12pt,inner sep=0pt,font=\tiny},
			A point/.style = {point,fill=red},
			B point/.style = {point,fill=green},
			AB point/.style = {point,fill=cyan},
			graph edge/.style = {yellow,very thick},
		]
		\draw[densely dotted] (0,0) grid (31,31) ;
		\draw[->,thick] (0,0) -- (31.500000,0) ;
		\draw[->,thick] (0,0) -- (0,31.500000) ;
			\node at (1,-.5) {\tiny 1} ;
			\node at (2,-.5) {\tiny 2} ;
			\node at (3,-.5) {\tiny 3} ;
			\node at (4,-.5) {\tiny 4} ;
			\node at (5,-.5) {\tiny 5} ;
			\node at (6,-.5) {\tiny 6} ;
			\node at (7,-.5) {\tiny 7} ;
			\node at (8,-.5) {\tiny 8} ;
			\node at (9,-.5) {\tiny 9} ;
			\node at (10,-.5) {\tiny 10} ;
			\node at (11,-.5) {\tiny 11} ;
			\node at (12,-.5) {\tiny 12} ;
			\node at (13,-.5) {\tiny 13} ;
			\node at (14,-.5) {\tiny 14} ;
			\node at (15,-.5) {\tiny 15} ;
			\node at (16,-.5) {\tiny 16} ;
			\node at (17,-.5) {\tiny 17} ;
			\node at (18,-.5) {\tiny 18} ;
			\node at (19,-.5) {\tiny 19} ;
			\node at (20,-.5) {\tiny 20} ;
			\node at (21,-.5) {\tiny 21} ;
			\node at (22,-.5) {\tiny 22} ;
			\node at (23,-.5) {\tiny 23} ;
			\node at (24,-.5) {\tiny 24} ;
			\node at (25,-.5) {\tiny 25} ;
			\node at (26,-.5) {\tiny 26} ;
			\node at (27,-.5) {\tiny 27} ;
			\node at (28,-.5) {\tiny 28} ;
			\node at (29,-.5) {\tiny 29} ;
			\node at (30,-.5) {\tiny 30} ;
			\node at (31,-.5) {\scriptsize $v$} ;
			\node at (-.5,1) {\tiny 1} ;
			\node at (-.5,2) {\tiny 2} ;
			\node at (-.5,3) {\tiny 3} ;
			\node at (-.5,4) {\tiny 4} ;
			\node at (-.5,5) {\tiny 5} ;
			\node at (-.5,6) {\tiny 6} ;
			\node at (-.5,7) {\tiny 7} ;
			\node at (-.5,8) {\tiny 8} ;
			\node at (-.5,9) {\tiny 9} ;
			\node at (-.5,10) {\tiny 10} ;
			\node at (-.5,11) {\tiny 11} ;
			\node at (-.5,12) {\tiny 12} ;
			\node at (-.5,13) {\tiny 13} ;
			\node at (-.5,14) {\tiny 14} ;
			\node at (-.5,15) {\tiny 15} ;
			\node at (-.5,16) {\tiny 16} ;
			\node at (-.5,17) {\tiny 17} ;
			\node at (-.5,18) {\tiny 18} ;
			\node at (-.5,19) {\tiny 19} ;
			\node at (-.5,20) {\tiny 20} ;
			\node at (-.5,21) {\tiny 21} ;
			\node at (-.5,22) {\tiny 22} ;
			\node at (-.5,23) {\tiny 23} ;
			\node at (-.5,24) {\tiny 24} ;
			\node at (-.5,25) {\tiny 25} ;
			\node at (-.5,26) {\tiny 26} ;
			\node at (-.5,27) {\tiny 27} ;
			\node at (-.5,28) {\tiny 28} ;
			\node at (-.5,29) {\tiny 29} ;
			\node at (-.5,30) {\tiny 30} ;
			\node[anchor=east] at (0,31) {\scriptsize $\pi^{-1}(v)$} ;
		\node[A point] (v1) at (1,18) {1} ;
		\node[A point] (v2) at (2,27) {2} ;
		\node[point] (v3) at (3,26) {3} ;
		\node[point] (v4) at (4,17) {4} ;
		\node[point] (v5) at (5,5) {5} ;
		\node[point] (v6) at (6,12) {6} ;
		\node[A point] (v7) at (7,30) {7} ;
		\node[point] (v8) at (8,9) {8} ;
		\node[B point] (v9) at (9,1) {9} ;
		\node[point] (v10) at (10,23) {10} ;
		\node[point] (v11) at (11,14) {11} ;
		\node[point] (v12) at (12,15) {12} ;
		\node[point] (v13) at (13,6) {13} ;
		\node[point] (v14) at (14,8) {14} ;
		\node[point] (v15) at (15,21) {15} ;
		\node[point] (v16) at (16,10) {16} ;
		\node[point] (v17) at (17,25) {17} ;
		\node[point] (v18) at (18,4) {18} ;
		\node[B point] (v19) at (19,2) {19} ;
		\node[point] (v20) at (20,13) {20} ;
		\node[point] (v21) at (21,16) {21} ;
		\node[point] (v22) at (22,29) {22} ;
		\node[point] (v23) at (23,7) {23} ;
		\node[B point] (v24) at (24,3) {24} ;
		\node[point] (v25) at (25,20) {25} ;
		\node[point] (v26) at (26,22) {26} ;
		\node[point] (v27) at (27,24) {27} ;
		\node[point] (v28) at (28,19) {28} ;
		\node[B point] (v29) at (29,11) {29} ;
		\node[B point] (v30) at (30,28) {30} ;
		\begin{pgfonlayer}{background}
			\draw[graph edge] (v1) to (v4) ; 
			\draw[graph edge] (v1) to (v5) ; 
			\draw[graph edge] (v1) to (v6) ; 
			\draw[graph edge] (v1) to (v8) ; 
			\draw[graph edge] (v1) to (v9) ; 
			\draw[graph edge] (v1) to (v11) ; 
			\draw[graph edge] (v1) to (v12) ; 
			\draw[graph edge] (v1) to (v13) ; 
			\draw[graph edge] (v1) to (v14) ; 
			\draw[graph edge] (v1) to (v16) ; 
			\draw[graph edge] (v1) to (v18) ; 
			\draw[graph edge] (v1) to (v19) ; 
			\draw[graph edge] (v1) to (v20) ; 
			\draw[graph edge] (v1) to (v21) ; 
			\draw[graph edge] (v1) to (v23) ; 
			\draw[graph edge] (v1) to (v24) ; 
			\draw[graph edge] (v1) to (v29) ; 
			\draw[graph edge] (v2) to (v3) ; 
			\draw[graph edge] (v2) to (v4) ; 
			\draw[graph edge] (v2) to (v5) ; 
			\draw[graph edge] (v2) to (v6) ; 
			\draw[graph edge] (v2) to (v8) ; 
			\draw[graph edge] (v2) to (v9) ; 
			\draw[graph edge] (v2) to (v10) ; 
			\draw[graph edge] (v2) to (v11) ; 
			\draw[graph edge] (v2) to (v12) ; 
			\draw[graph edge] (v2) to (v13) ; 
			\draw[graph edge] (v2) to (v14) ; 
			\draw[graph edge] (v2) to (v15) ; 
			\draw[graph edge] (v2) to (v16) ; 
			\draw[graph edge] (v2) to (v17) ; 
			\draw[graph edge] (v2) to (v18) ; 
			\draw[graph edge] (v2) to (v19) ; 
			\draw[graph edge] (v2) to (v20) ; 
			\draw[graph edge] (v2) to (v21) ; 
			\draw[graph edge] (v2) to (v23) ; 
			\draw[graph edge] (v2) to (v24) ; 
			\draw[graph edge] (v2) to (v25) ; 
			\draw[graph edge] (v2) to (v26) ; 
			\draw[graph edge] (v2) to (v27) ; 
			\draw[graph edge] (v2) to (v28) ; 
			\draw[graph edge] (v2) to (v29) ; 
			\draw[graph edge] (v3) to (v4) ; 
			\draw[graph edge] (v3) to (v5) ; 
			\draw[graph edge] (v3) to (v6) ; 
			\draw[graph edge] (v3) to (v8) ; 
			\draw[graph edge] (v3) to (v9) ; 
			\draw[graph edge] (v3) to (v10) ; 
			\draw[graph edge] (v3) to (v11) ; 
			\draw[graph edge] (v3) to (v12) ; 
			\draw[graph edge] (v3) to (v13) ; 
			\draw[graph edge] (v3) to (v14) ; 
			\draw[graph edge] (v3) to (v15) ; 
			\draw[graph edge] (v3) to (v16) ; 
			\draw[graph edge] (v3) to (v17) ; 
			\draw[graph edge] (v3) to (v18) ; 
			\draw[graph edge] (v3) to (v19) ; 
			\draw[graph edge] (v3) to (v20) ; 
			\draw[graph edge] (v3) to (v21) ; 
			\draw[graph edge] (v3) to (v23) ; 
			\draw[graph edge] (v3) to (v24) ; 
			\draw[graph edge] (v3) to (v25) ; 
			\draw[graph edge] (v3) to (v26) ; 
			\draw[graph edge] (v3) to (v27) ; 
			\draw[graph edge] (v3) to (v28) ; 
			\draw[graph edge] (v3) to (v29) ; 
			\draw[graph edge] (v4) to (v5) ; 
			\draw[graph edge] (v4) to (v6) ; 
			\draw[graph edge] (v4) to (v8) ; 
			\draw[graph edge] (v4) to (v9) ; 
			\draw[graph edge] (v4) to (v11) ; 
			\draw[graph edge] (v4) to (v12) ; 
			\draw[graph edge] (v4) to (v13) ; 
			\draw[graph edge] (v4) to (v14) ; 
			\draw[graph edge] (v4) to (v16) ; 
			\draw[graph edge] (v4) to (v18) ; 
			\draw[graph edge] (v4) to (v19) ; 
			\draw[graph edge] (v4) to (v20) ; 
			\draw[graph edge] (v4) to (v21) ; 
			\draw[graph edge] (v4) to (v23) ; 
			\draw[graph edge] (v4) to (v24) ; 
			\draw[graph edge] (v4) to (v29) ; 
			\draw[graph edge] (v5) to (v9) ; 
			\draw[graph edge] (v5) to (v18) ; 
			\draw[graph edge] (v5) to (v19) ; 
			\draw[graph edge] (v5) to (v24) ; 
			\draw[graph edge] (v6) to (v8) ; 
			\draw[graph edge] (v6) to (v9) ; 
			\draw[graph edge] (v6) to (v13) ; 
			\draw[graph edge] (v6) to (v14) ; 
			\draw[graph edge] (v6) to (v16) ; 
			\draw[graph edge] (v6) to (v18) ; 
			\draw[graph edge] (v6) to (v19) ; 
			\draw[graph edge] (v6) to (v23) ; 
			\draw[graph edge] (v6) to (v24) ; 
			\draw[graph edge] (v6) to (v29) ; 
			\draw[graph edge] (v7) to (v8) ; 
			\draw[graph edge] (v7) to (v9) ; 
			\draw[graph edge] (v7) to (v10) ; 
			\draw[graph edge] (v7) to (v11) ; 
			\draw[graph edge] (v7) to (v12) ; 
			\draw[graph edge] (v7) to (v13) ; 
			\draw[graph edge] (v7) to (v14) ; 
			\draw[graph edge] (v7) to (v15) ; 
			\draw[graph edge] (v7) to (v16) ; 
			\draw[graph edge] (v7) to (v17) ; 
			\draw[graph edge] (v7) to (v18) ; 
			\draw[graph edge] (v7) to (v19) ; 
			\draw[graph edge] (v7) to (v20) ; 
			\draw[graph edge] (v7) to (v21) ; 
			\draw[graph edge] (v7) to (v22) ; 
			\draw[graph edge] (v7) to (v23) ; 
			\draw[graph edge] (v7) to (v24) ; 
			\draw[graph edge] (v7) to (v25) ; 
			\draw[graph edge] (v7) to (v26) ; 
			\draw[graph edge] (v7) to (v27) ; 
			\draw[graph edge] (v7) to (v28) ; 
			\draw[graph edge] (v7) to (v29) ; 
			\draw[graph edge] (v7) to (v30) ; 
			\draw[graph edge] (v8) to (v9) ; 
			\draw[graph edge] (v8) to (v13) ; 
			\draw[graph edge] (v8) to (v14) ; 
			\draw[graph edge] (v8) to (v18) ; 
			\draw[graph edge] (v8) to (v19) ; 
			\draw[graph edge] (v8) to (v23) ; 
			\draw[graph edge] (v8) to (v24) ; 
			\draw[graph edge] (v10) to (v11) ; 
			\draw[graph edge] (v10) to (v12) ; 
			\draw[graph edge] (v10) to (v13) ; 
			\draw[graph edge] (v10) to (v14) ; 
			\draw[graph edge] (v10) to (v15) ; 
			\draw[graph edge] (v10) to (v16) ; 
			\draw[graph edge] (v10) to (v18) ; 
			\draw[graph edge] (v10) to (v19) ; 
			\draw[graph edge] (v10) to (v20) ; 
			\draw[graph edge] (v10) to (v21) ; 
			\draw[graph edge] (v10) to (v23) ; 
			\draw[graph edge] (v10) to (v24) ; 
			\draw[graph edge] (v10) to (v25) ; 
			\draw[graph edge] (v10) to (v26) ; 
			\draw[graph edge] (v10) to (v28) ; 
			\draw[graph edge] (v10) to (v29) ; 
			\draw[graph edge] (v11) to (v13) ; 
			\draw[graph edge] (v11) to (v14) ; 
			\draw[graph edge] (v11) to (v16) ; 
			\draw[graph edge] (v11) to (v18) ; 
			\draw[graph edge] (v11) to (v19) ; 
			\draw[graph edge] (v11) to (v20) ; 
			\draw[graph edge] (v11) to (v23) ; 
			\draw[graph edge] (v11) to (v24) ; 
			\draw[graph edge] (v11) to (v29) ; 
			\draw[graph edge] (v12) to (v13) ; 
			\draw[graph edge] (v12) to (v14) ; 
			\draw[graph edge] (v12) to (v16) ; 
			\draw[graph edge] (v12) to (v18) ; 
			\draw[graph edge] (v12) to (v19) ; 
			\draw[graph edge] (v12) to (v20) ; 
			\draw[graph edge] (v12) to (v23) ; 
			\draw[graph edge] (v12) to (v24) ; 
			\draw[graph edge] (v12) to (v29) ; 
			\draw[graph edge] (v13) to (v18) ; 
			\draw[graph edge] (v13) to (v19) ; 
			\draw[graph edge] (v13) to (v24) ; 
			\draw[graph edge] (v14) to (v18) ; 
			\draw[graph edge] (v14) to (v19) ; 
			\draw[graph edge] (v14) to (v23) ; 
			\draw[graph edge] (v14) to (v24) ; 
			\draw[graph edge] (v15) to (v16) ; 
			\draw[graph edge] (v15) to (v18) ; 
			\draw[graph edge] (v15) to (v19) ; 
			\draw[graph edge] (v15) to (v20) ; 
			\draw[graph edge] (v15) to (v21) ; 
			\draw[graph edge] (v15) to (v23) ; 
			\draw[graph edge] (v15) to (v24) ; 
			\draw[graph edge] (v15) to (v25) ; 
			\draw[graph edge] (v15) to (v28) ; 
			\draw[graph edge] (v15) to (v29) ; 
			\draw[graph edge] (v16) to (v18) ; 
			\draw[graph edge] (v16) to (v19) ; 
			\draw[graph edge] (v16) to (v23) ; 
			\draw[graph edge] (v16) to (v24) ; 
			\draw[graph edge] (v17) to (v18) ; 
			\draw[graph edge] (v17) to (v19) ; 
			\draw[graph edge] (v17) to (v20) ; 
			\draw[graph edge] (v17) to (v21) ; 
			\draw[graph edge] (v17) to (v23) ; 
			\draw[graph edge] (v17) to (v24) ; 
			\draw[graph edge] (v17) to (v25) ; 
			\draw[graph edge] (v17) to (v26) ; 
			\draw[graph edge] (v17) to (v27) ; 
			\draw[graph edge] (v17) to (v28) ; 
			\draw[graph edge] (v17) to (v29) ; 
			\draw[graph edge] (v18) to (v19) ; 
			\draw[graph edge] (v18) to (v24) ; 
			\draw[graph edge] (v20) to (v23) ; 
			\draw[graph edge] (v20) to (v24) ; 
			\draw[graph edge] (v20) to (v29) ; 
			\draw[graph edge] (v21) to (v23) ; 
			\draw[graph edge] (v21) to (v24) ; 
			\draw[graph edge] (v21) to (v29) ; 
			\draw[graph edge] (v22) to (v23) ; 
			\draw[graph edge] (v22) to (v24) ; 
			\draw[graph edge] (v22) to (v25) ; 
			\draw[graph edge] (v22) to (v26) ; 
			\draw[graph edge] (v22) to (v27) ; 
			\draw[graph edge] (v22) to (v28) ; 
			\draw[graph edge] (v22) to (v29) ; 
			\draw[graph edge] (v22) to (v30) ; 
			\draw[graph edge] (v23) to (v24) ; 
			\draw[graph edge] (v25) to (v28) ; 
			\draw[graph edge] (v25) to (v29) ; 
			\draw[graph edge] (v26) to (v28) ; 
			\draw[graph edge] (v26) to (v29) ; 
			\draw[graph edge] (v27) to (v28) ; 
			\draw[graph edge] (v27) to (v29) ; 
			\draw[graph edge] (v28) to (v29) ; 
		\end{pgfonlayer}
			\node at (1,-1.5) {\tiny\ttfamily 1} ;
			\node at (2,-1.5) {\tiny\ttfamily 1} ;
			\node at (3,-1.5) {\tiny\ttfamily 0} ;
			\node at (4,-1.5) {\tiny\ttfamily 0} ;
			\node at (5,-1.5) {\tiny\ttfamily 0} ;
			\node at (6,-1.5) {\tiny\ttfamily 0} ;
			\node at (7,-1.5) {\tiny\ttfamily 1} ;
			\node at (8,-1.5) {\tiny\ttfamily 0} ;
			\node at (9,-1.5) {\tiny\ttfamily 0} ;
			\node at (10,-1.5) {\tiny\ttfamily 0} ;
			\node at (11,-1.5) {\tiny\ttfamily 0} ;
			\node at (12,-1.5) {\tiny\ttfamily 0} ;
			\node at (13,-1.5) {\tiny\ttfamily 0} ;
			\node at (14,-1.5) {\tiny\ttfamily 0} ;
			\node at (15,-1.5) {\tiny\ttfamily 0} ;
			\node at (16,-1.5) {\tiny\ttfamily 0} ;
			\node at (17,-1.5) {\tiny\ttfamily 0} ;
			\node at (18,-1.5) {\tiny\ttfamily 0} ;
			\node at (19,-1.5) {\tiny\ttfamily 0} ;
			\node at (20,-1.5) {\tiny\ttfamily 0} ;
			\node at (21,-1.5) {\tiny\ttfamily 0} ;
			\node at (22,-1.5) {\tiny\ttfamily 0} ;
			\node at (23,-1.5) {\tiny\ttfamily 0} ;
			\node at (24,-1.5) {\tiny\ttfamily 0} ;
			\node at (25,-1.5) {\tiny\ttfamily 0} ;
			\node at (26,-1.5) {\tiny\ttfamily 0} ;
			\node at (27,-1.5) {\tiny\ttfamily 0} ;
			\node at (28,-1.5) {\tiny\ttfamily 0} ;
			\node at (29,-1.5) {\tiny\ttfamily 0} ;
			\node at (30,-1.5) {\tiny\ttfamily 0} ;
			\node at (0,-1.5) {\scriptsize$\texttt A_x$} ;
			\node at (1,-2) {\tiny\ttfamily 0} ;
			\node at (2,-2) {\tiny\ttfamily 0} ;
			\node at (3,-2) {\tiny\ttfamily 0} ;
			\node at (4,-2) {\tiny\ttfamily 0} ;
			\node at (5,-2) {\tiny\ttfamily 0} ;
			\node at (6,-2) {\tiny\ttfamily 0} ;
			\node at (7,-2) {\tiny\ttfamily 0} ;
			\node at (8,-2) {\tiny\ttfamily 0} ;
			\node at (9,-2) {\tiny\ttfamily 1} ;
			\node at (10,-2) {\tiny\ttfamily 0} ;
			\node at (11,-2) {\tiny\ttfamily 0} ;
			\node at (12,-2) {\tiny\ttfamily 0} ;
			\node at (13,-2) {\tiny\ttfamily 0} ;
			\node at (14,-2) {\tiny\ttfamily 0} ;
			\node at (15,-2) {\tiny\ttfamily 0} ;
			\node at (16,-2) {\tiny\ttfamily 0} ;
			\node at (17,-2) {\tiny\ttfamily 0} ;
			\node at (18,-2) {\tiny\ttfamily 0} ;
			\node at (19,-2) {\tiny\ttfamily 1} ;
			\node at (20,-2) {\tiny\ttfamily 0} ;
			\node at (21,-2) {\tiny\ttfamily 0} ;
			\node at (22,-2) {\tiny\ttfamily 0} ;
			\node at (23,-2) {\tiny\ttfamily 0} ;
			\node at (24,-2) {\tiny\ttfamily 1} ;
			\node at (25,-2) {\tiny\ttfamily 0} ;
			\node at (26,-2) {\tiny\ttfamily 0} ;
			\node at (27,-2) {\tiny\ttfamily 0} ;
			\node at (28,-2) {\tiny\ttfamily 0} ;
			\node at (29,-2) {\tiny\ttfamily 1} ;
			\node at (30,-2) {\tiny\ttfamily 1} ;
			\node at (0,-2) {\scriptsize$\texttt B_x$} ;
			\node at (-1.5,1) {\tiny\ttfamily 0} ;
			\node at (-1.5,2) {\tiny\ttfamily 0} ;
			\node at (-1.5,3) {\tiny\ttfamily 0} ;
			\node at (-1.5,4) {\tiny\ttfamily 0} ;
			\node at (-1.5,5) {\tiny\ttfamily 0} ;
			\node at (-1.5,6) {\tiny\ttfamily 0} ;
			\node at (-1.5,7) {\tiny\ttfamily 0} ;
			\node at (-1.5,8) {\tiny\ttfamily 0} ;
			\node at (-1.5,9) {\tiny\ttfamily 0} ;
			\node at (-1.5,10) {\tiny\ttfamily 0} ;
			\node at (-1.5,11) {\tiny\ttfamily 0} ;
			\node at (-1.5,12) {\tiny\ttfamily 0} ;
			\node at (-1.5,13) {\tiny\ttfamily 0} ;
			\node at (-1.5,14) {\tiny\ttfamily 0} ;
			\node at (-1.5,15) {\tiny\ttfamily 0} ;
			\node at (-1.5,16) {\tiny\ttfamily 0} ;
			\node at (-1.5,17) {\tiny\ttfamily 0} ;
			\node at (-1.5,18) {\tiny\ttfamily 1} ;
			\node at (-1.5,19) {\tiny\ttfamily 0} ;
			\node at (-1.5,20) {\tiny\ttfamily 0} ;
			\node at (-1.5,21) {\tiny\ttfamily 0} ;
			\node at (-1.5,22) {\tiny\ttfamily 0} ;
			\node at (-1.5,23) {\tiny\ttfamily 0} ;
			\node at (-1.5,24) {\tiny\ttfamily 0} ;
			\node at (-1.5,25) {\tiny\ttfamily 0} ;
			\node at (-1.5,26) {\tiny\ttfamily 0} ;
			\node at (-1.5,27) {\tiny\ttfamily 1} ;
			\node at (-1.5,28) {\tiny\ttfamily 0} ;
			\node at (-1.5,29) {\tiny\ttfamily 0} ;
			\node at (-1.5,30) {\tiny\ttfamily 1} ;
			\node at (-1.25,0) {\scriptsize$\texttt A_y$} ;
			\node at (-2,1) {\tiny\ttfamily 1} ;
			\node at (-2,2) {\tiny\ttfamily 1} ;
			\node at (-2,3) {\tiny\ttfamily 1} ;
			\node at (-2,4) {\tiny\ttfamily 0} ;
			\node at (-2,5) {\tiny\ttfamily 0} ;
			\node at (-2,6) {\tiny\ttfamily 0} ;
			\node at (-2,7) {\tiny\ttfamily 0} ;
			\node at (-2,8) {\tiny\ttfamily 0} ;
			\node at (-2,9) {\tiny\ttfamily 0} ;
			\node at (-2,10) {\tiny\ttfamily 0} ;
			\node at (-2,11) {\tiny\ttfamily 1} ;
			\node at (-2,12) {\tiny\ttfamily 0} ;
			\node at (-2,13) {\tiny\ttfamily 0} ;
			\node at (-2,14) {\tiny\ttfamily 0} ;
			\node at (-2,15) {\tiny\ttfamily 0} ;
			\node at (-2,16) {\tiny\ttfamily 0} ;
			\node at (-2,17) {\tiny\ttfamily 0} ;
			\node at (-2,18) {\tiny\ttfamily 0} ;
			\node at (-2,19) {\tiny\ttfamily 0} ;
			\node at (-2,20) {\tiny\ttfamily 0} ;
			\node at (-2,21) {\tiny\ttfamily 0} ;
			\node at (-2,22) {\tiny\ttfamily 0} ;
			\node at (-2,23) {\tiny\ttfamily 0} ;
			\node at (-2,24) {\tiny\ttfamily 0} ;
			\node at (-2,25) {\tiny\ttfamily 0} ;
			\node at (-2,26) {\tiny\ttfamily 0} ;
			\node at (-2,27) {\tiny\ttfamily 0} ;
			\node at (-2,28) {\tiny\ttfamily 1} ;
			\node at (-2,29) {\tiny\ttfamily 0} ;
			\node at (-2,30) {\tiny\ttfamily 0} ;
			\node at (-2,0) {\scriptsize$\texttt B_y$} ;
	\end{tikzpicture}
	}}
	\caption{%
		Another example of a permutation graph; the drawing is as in \wref{fig:example}.
		This graph is a typical graph when the $\pi$ is chosen uniformly at random.%
	}
	\label{fig:example2}
\end{figure}
}

\paragraph{Distance}
A vertex $v\in[n]$ is a \emph{type-$A$ vertex} iff $\pi^{-1}$ has a left-to-right maximum at position $v$,
\ie, when $\pi^{-1}(v) \ge \pi^{-1}(u)$ for all $u < v$.
Note that $1$ is always a left-to-right maximum.
Similarly, a vertex $v\in[n]$ is \emph{type $B$} iff $\pi^{-1}$ has a right-to-left minimum at $v$,
\ie, $\pi^{-1}(v) \le \pi^{-1}(u)$ for all $u > v$; vertex $n$ is always type $B$.
As in~\cite{BazzaroGavoille2009}, we use $A$ and $B$ 
to denote the set of $A$-vertices and $B$-vertices, respectively,
and we define: 
\ifsubmission{%
	\begin{align*}
	a^-(v) &\wrel= \like[l]{\max}{\min} (\GNeighbor(v)\cap A), &
	b^-(v) &\wrel= \like[l]{\max}{\min} (\GNeighbor(v)\cap B), \\
	a^+(v) &\wrel= \max (\GNeighbor(v)\cap A), &
	b^+(v) &\wrel= \max (\GNeighbor(v)\cap B). 
	\end{align*}
}{%
\begin{align*}
a^-(v) &\wrel= \like[l]{\max}{\min} (\GNeighbor(v)\cap A), \\
a^+(v) &\wrel= \max (\GNeighbor(v)\cap A), \\
b^-(v) &\wrel= \like[l]{\max}{\min} (\GNeighbor(v)\cap B), \\
b^+(v) &\wrel= \max (\GNeighbor(v)\cap B). 
\end{align*}
}
If we are computing a shortest path from $u$ to $v$, then 
either $u$ and $v$ are adjacent, or there is a shortest path whose first vertex 
after $u$ is one of $a^+(u)$ and $b^+(u)$, if $v > u$, or 
one of $a^-(u)$ and $b^-(u)$, if $v < u$.
It is therefore vital to be able to compute these four functions.
For that, we store four bitvectors with rank/select support (\wref{lem:compressed-bit-vectors})
that encode which points belong to $A$ (resp.\ $B$) given an $x$- (resp.\ $y$-)coordinate:
\ifsubmission{%
\begin{align*}
	\like{Wl[1..n]}{\mathtt A_x[1..n]} \quad\text{with}\quad \like{WW}{\mathtt A_x [u]} &\wrel= [u \in A] , \qquad\;&
	\like{Wl[1..n]}{\mathtt A_y[1..n]} \quad\text{with}\quad \like{WW}{\mathtt A_y [i]} &\wrel= [\pi(i) \in A],\\
	\like{Wl[1..n]}{\mathtt B_x[1..n]} \quad\text{with}\quad \like{WW}{\mathtt B_x [u]} &\wrel= [u \in B] ,&
	\like{Wl[1..n]}{\mathtt B_y[1..n]} \quad\text{with}\quad \like{WW}{\mathtt A_y [i]} &\wrel= [\pi(i) \in B].
\end{align*}
}{%
\begin{align*}
	\like{Wl[1..n]}{\mathtt A_x[1..n]} \quad\text{with}\quad \like{WW}{\mathtt A_x [u]} &\wrel= [u \in A] ,\\
	\like{Wl[1..n]}{\mathtt B_x[1..n]} \quad\text{with}\quad \like{WW}{\mathtt B_x [u]} &\wrel= [u \in B] ,\\
	\like{Wl[1..n]}{\mathtt A_y[1..n]} \quad\text{with}\quad \like{WW}{\mathtt A_y [i]} &\wrel= [\pi(i) \in A],\\
	\like{Wl[1..n]}{\mathtt B_y[1..n]} \quad\text{with}\quad \like{WW}{\mathtt A_y [i]} &\wrel= [\pi(i) \in B].
\end{align*}
}
\ifsubmission{%
	\wref{fig:example} shows an example of these bitvectors.
}{%
	\wref{fig:example} and \wref{fig:example2} show examples of these bitvectors.
}%
\ifsubmission{%
	We can now use these to compute the extremal $A$ neighbors of a vertex $v$ as follows: 
	\begin{align*}
		a^+(v) &\wrel= \selop_1\bigl(\mathtt A_x, \rankop_1(\mathtt A_x, v)\bigr) , \\
		a^-(v) &\wrel= \selop_1\bigl(\mathtt A_x, \rankop_1(\mathtt A_y, \pi^{-1}(v)-1)+1\bigr) .
	\end{align*}
	Extremal $B$ neighbors are found similarly.%
}{%
We can now use these to compute the extremal $A$ and $B$ neighbors of a vertex $v$ as follows: 
\begin{align*}
	a^+(v) &\wrel= \selop_1\bigl(\mathtt A_x, \rankop_1(\mathtt A_x, v)\bigr) , \\
	a^-(v) &\wrel= \selop_1\bigl(\mathtt A_x, \rankop_1(\mathtt A_y, \pi^{-1}(v)-1)+1\bigr) , \\
	b^+(v) &\wrel= \selop_1\bigl(\mathtt B_x, \rankop_1(\mathtt B_y, \pi^{-1}(v))\bigr) , \\
	b^-(v) &\wrel= \selop_1\bigl(\mathtt B_x, \rankop_1(\mathtt B_x, v-1)+1\bigr) .
\end{align*}
}
The computation takes $O(1)$ time plus at most one evaluation of $\pi^{-1}(v)$.

\begin{remark}[$\pi^{-1}$ for $A$/$B$-vertices]
\label{rem:pi-inv-of-A-B}
	We note here (for later reference) that for $a\in A$ we can compute 
	$\pi^{-1}(a) = \selop_1(\mathtt A_y, \rankop_1(\mathtt A_x, a))$ just from the bitvectors 
	without access to $\Pi$, because $\pi^{-1}$ is monotonically increasing on $A$;
	similarly for $b\in B$:
	$\pi^{-1}(b) = \selop_1(\mathtt B_y, \rankop_1(\mathtt B_x, b))$.
\end{remark}

In \cite[Thm.\,2.1]{BazzaroGavoille2009}, Bazzaro and Gavoille show that the distances/shortest paths
in a PG can now be found by testing for the special cases of distance
at most~$3$ 
(using $a^\pm$ or $b^\pm$) 
or by asking a distance query in a proper interval graph.
More specifically, let $u<v$.
\ifsubmission{\begin{enumerateinline}}{\begin{enumerate}}
\item If $\GAdjacent(u,v)$, the distance is $1$ and we are done.
\item Otherwise, if $\GAdjacent(a^+(u), v)$ or $\GAdjacent(b^+(u), v)$, 
	which can equivalently be written as $a^-(v) \le a^+(u) \,\vee\, b^-(v) \le b^+(u)$, 
	the distance is $2$ and we are done.
\item Otherwise, if $\GAdjacent(a^+(u),b^-(v))$ or $\GAdjacent(b^+(u),a^-(v))$,
	which is equivalent to $a^-(v) \le a^+(b^+(u)) \,\vee\, b^-(v) \le b^+(a^+(u))$,
	the distance is $3$ and we are done.
\item Otherwise, the distance is the minimum of the following four cases:\ifsubmission{\\}{\\[1ex]}
	$2 + 2\cdot \GDistance_{G_B}(b^+(u),b^-(v))$, \quad
	$3 + 2\cdot \GDistance_{G_B}(b^+(a^+(u),b^-(v))$, \\
	$2 + 2\cdot \GDistance_{G_A}(a^+(u),a^-(v))$, \quad
	$3 + 2\cdot \GDistance_{G_A}(a^+(b^+(u),a^-(v))$.
\ifsubmission{\end{enumerateinline}}{\end{enumerate}}
\ifspringer{\vspace{1ex}}{}
Here $G_A$ is the interval graph (intersection graph) 
defined by intervals $[b^-(v),b^+(v)]$ for all $v\in A$
and $G_B$ by intervals $[a^-(v),a^+(v)]$ for all $v\in B$.
In general, these intervals share endpoints, but they can be transformed 
into a proper realization by breaking ties by vertex $v$%
\ifsubmission{.}{%
	, \eg, for $G_A$, we
use $[b^-(v)-(n-v)\cdot \epsilon,b^+(v)+v\cdot \epsilon]$ instead of $[b^-(v),b^+(v)]$
for, say, $\epsilon=1/n^2$.
Then all endpoints are disjoint and no interval properly contains another;
moreover, the $i$th smallest left endpoint corresponds to the $i$th smallest
vertex in $A$.

}
\ifsubmission{}{\par}%
We compute the data structure of \wref{lem:proper-interval-graphs} for $G_A$ and $G_B$;
to map vertex $v\in A$ to the corresponding vertex in $G_A$, we simply
compute $\rankop_1(\texttt A_x, v)$; 
recall that the data structure of \wref{lem:proper-interval-graphs}
identifies vertices with the rank of their left endpoints.
With that, we can compute the four distances above and return the minimum.

The running time for \GDistance is the time needed for a constant number of extremal neighbor queries
($\Oh(1)$ for the array-based data structure, $\Oh(\log n / \log\log n)$ for the grid-based one),
a constant number of adjacency checks (same running times),
a constant number of rank-queries ($O(1)$ each),
and finally a constant number of \GDistance queries in proper interval graphs (again $O(1)$).
The running time for \GDistance is thus dominated by the time for evaluating $\pi^{-1}$.

\paragraph{Shortest paths}
Suppose $u<v$.
As noted by Bazzaro and Gavoille~\cite{BazzaroGavoille2009},
the above case distinction does not only determine the distance, but also determines
in each case a next vertex $w$ after $u$ on a shortest path from $u$ to $v$.
We output $u$ and unless $u=v$, we recursively call $\GSPath(w,v)$.

\ifsubmission{}{\par}
Since the running time for all checks above is dominated by $\pi^{-1}(v)$,
we can iterate through the vertices on $\GSPath(u,v)$ in
$\Oh(1)$ time per vertex for the array-based data structure,
and in $\Oh(\log n / \log\log n)$ time per vertex for the grid-based data structure.

\paragraph{Space}
The four bitvectors $\mathtt A_x$, $\mathtt B_x$, $\mathtt A_y$, and $\mathtt B_y$
require no more than $4n+o(n)$ bits of space including the support for rank and select operations.

\ifsubmission{%
	We can slightly improve upon this. After removing isolated vertices,
	any node is either an $A$-node, a $B$-node, or neither,
	which can be encoded using at most $\lg (3) n + o(n)$ bits of space
	per dimension ($x$ and $y$), for a total of at most $3.16993n+o(n)$ bits.
}{%
	\par
	When we allow ourselves to modify $\pi$, we can slightly improve upon this: 
We first move all isolated vertices to the largest indices. 
Note that any connected components can be freely permuted without changing the graph; 
in the point grid this has to be done by shifts along the $y=x$ line.
We now store the number~$w$ of isolated vertices.
Each of the remaining nodes, $[n-w]$, can either be an $A$-node, a $B$-node, or neither,
	which can be encoded as a string over $\{A,B,N\}$. 
	We store this string as a wavelet tree (\wref{lem:wavelet-trees}) with support for rank and select, 
	using at most $\lg (3) n + o(n)$ bits of space
per dimension ($x$ and $y$), for a total of at most $3.16993n+o(n)$ bits.

	(The data structure can sometimes achieve even better compression 
	since it compresses to the empirical entropy of the string).
}
\ifsubmission{}{\par}
$G_A$ and $G_B$ have no more than $n$ vertices in total, so the data structures
from \wref{lem:proper-interval-graphs} will use at most $3n+o(n)$ bits of space.
In addition to that, we need $\epsilon n$ bits of space for the range-maximum and range-minimum indices,
for a total of $(6.17+\epsilon) n + o(n)$ bits of space on top of storing $\Pi$. Assuming we using the data structure of~\cite{DodisParascuThorup2010} for the latter, the total space is 
$n \lg n + (6.17+\epsilon) n + o(n)$.

\ifsubmission{}{\medskip\noindent}%
This concludes the proof of \wref{thm:main}.

\ifsubmission{

\section{Further results}

In the appendix, we show that succinct PGs can be used 
to implement various algorithms for PGs as if the graph was given in non-compressed format.
These algorithms heavily build on transitive orientations and topological sorts
of $G$ and $\overline G$; we show how our succinct PGs can provide these.
Moreover, we show that we can simulate access to the ``unrolled'' PG of a CPG 
with asymptotically no extra space.
This allows us to extend our data structures to CPGs.
Finally, using the insight that BPGs consist solely of type $A$ and type $B$ vertices,
we can get rid of $\pi^{-1}$ altogether, thus achieving the optimal $2n+o(n)$ bits on BPGs.

}{
\section{Algorithms on Succinct Permutation Graphs}
\label{sec:algorithms}
\label{app:algorithms}

Clearly, \GNextNeighbor is equivalent to an adjacency-list based representation
of a graph, so our succinct data structures can replace them in standard graph algorithms, like traversals.
Beyond that, there are a few more properties specific to PGs that known algorithms
for this class build on and which are not reflected in our list of standard operations.
Fortunately, as we will show in the following, our data structures are capable of
providing this more specialized access, as well;
we formulate these as remarks for later reference.

\begin{remark}[Transitive orientations \& topological sort]
\label{rem:directed-PG}
	A graph is a comparability graph iff it admits a transitive orientation,
	\ie, an orientation of all its edges so that if there is a directed path from $u$ to $v$,
	we must also have the ``shortcut edge'' $(u,v)$.
	In any ordered PG $G_\pi$, orienting all edges $\{u,v\}$ with $u<v$ as $(u,v)$
	yields such a transitive orientation as is immediate from the point-grid representation.
	Denote the resulting directed graph by $G^\to_\pi$.
	
	It follows that the partition of the neighborhood into $N^-(v)$ and $N^+(v)$ introduced 
	above coincides with in-neighborhood and out-neighborhood of $v$ in $G^\to_\pi$, respectively.
	Since both our data structures for PGs handle $N^-(v)$ and $N^+(v)$ separately, 
	our data structure can indeed answer \GAdjacent, \GNeighbor, \GDegree, \GDistance, and \GSPath 
	queries \wrt digraph $G^\to_\pi$ instead of $G_\pi$ at no extra cost and in the same running time.
	(Note that \GDistance and \GSPath are trivial in a transitively oriented digraph: 
	All shortest directed paths are single edges.)
	
	It is immediate from the definition that $1,\ldots,n$, \ie, 
	listing the vertices by (increasing) $x$-coordinate in the point grid,
	is a topological sort of the vertices in $G^\to_\pi$.
	It is also easy to see that the same is true for decreasing $y$-coordinate, 
	\ie, $\pi(n),\pi(n-1),\ldots,\pi(1)$ is a second topological sort of $G^\to_\pi$.
	Indeed, PGs are \emph{exactly} the comparability graphs of posets of dimension two,
	\ie, the edge set of $G^\to_\pi$ is obtained as the (set) intersection of 
	two linear orders (namely $1,\ldots,n$ and $\pi(n),\ldots,\pi(1)$).
\end{remark}

\begin{remark}[One data structure for $G$ and $\overline G$]
\label{rem:complement-PG}
	PGs are exactly the graphs where both $G$ 
	and the complement graph $\overline G$ are comparability graphs.
	That immediately implies that $\overline G$ is also a PG, when $G$ is such.
	
	We can extend our data structure with just $\Oh(n)$ additional bits of space
	so that we can also answer all queries in $\overline G$ that the data structure could
	answer for $G$; in fact, only the distance-related data structures 
	($\mathtt A_x$, $\mathtt A_y$, $\mathtt B_x$, $\mathtt B_y$ and $G_A$, $G_B$)
	need to be duplicated for $\overline G$.
\end{remark}

With these preparations, we can show how several known algorithms for PGs~\cite{McConnellSpinrad1999,Moehring1985}
can efficiently run directly on top of our data structure (without storing $G$ separately).

\paragraph{Maximum Clique \& Minimum Coloring}
While computing (the size of) a maximum clique is NP-complete for general graphs,
in comparability graphs, they can be found efficiently:
we transitively orient the graph and then find a longest (directed) path.
Note that any directed path in the transitive orientation is actually a clique in the
comparability graph. 

Since our data structures already maintain $G_\pi$ in oriented form (\wref{rem:directed-PG}),
the textbook dynamic-programming algorithm for longest paths in DAGs~\cite{SedgewickWayne2011}
suffices:
For each vertex $v$, we store the length of the longest directed path \emph{ending} 
in $v$ seen so far in an array $L[v]$.
We iterate through the vertices in a topological sort; say $v=1,\ldots,n$ (in that order).
To process vertex $v$, we iterate through its in-neighbors $N^-(v)$ and
compute $L[v] = \max \bigl( \{ L[u]+1 : u\in\N^-(v) \}\cup\{1\}\bigr) $.
Then, $\ell = \max_v L[v]$ is the length of the longest path in $G^\to_\pi$, and the path
can be compute by backtracing. The same $\ell$ vertices then form a clique in $G$.
As McConnell and Spinrad~\cite{McConnellSpinrad1999} noted, $L[v]$ is simultaneously 
a valid coloring for~$G$ with $\ell$ colors, so no larger clique can possibly exist.

The running time of above algorithm is $O(n+m)$, where $m$ is the number of edges in~$G_\pi$;
the extra space on top of our data structure is just $n$ words to store the colors.

\paragraph{Maximum Independent Set \& Minimum Clique Cover}

Clearly, a maximum independent set in $G$ is a maximum clique in $\overline G$,
and similarly, a minimum clique cover of $G$ equals a minimum coloring of $\overline G$.
As discussed in \wref{rem:complement-PG}, our data structure can without additional space
support to iterate through ${N^-}_{\overline G}(v)$, the in-neighbors of $v$ in $\overline G$,
which is enough to run the above max-clique/min-coloring algorithm on $\overline G$.

%
%
%
%
%
%
%
%
%
%
%
%
%
%
%
%
%
%
%
%
%
%
%
%
%
%
%
%
%
%
%
%
%
%
%
%
%
%
%
%
%
%
%
%

\section{Bipartite Permutation Graphs}
\label{sec:bipartite}
\label{app:bipartite}

Bipartite permutation graphs (BPGs) are permutation graphs that are also bipartite.
While our data structures for general PGs clearly apply to BPGs, their special structure
allows to substantially reduce the required space.

\begin{theorem}[Succinct BPG]
\label{thm:bipartite-ds}
	A bipartite permutation graph can be represented 
	\begin{thmenumerate}{bipartite-ds}
	\item 
		using $2n + o(n)$ bits of space
		while supporting \GAdjacent, \GDegree, \GSPathFirst in $\Oh(1)$ time and
		$\GNeighbor(v)$ in $\Oh(\GDegree(v))$ time,
	\item 
		using $5n + o(n)$ bits of space
		while supporting \GAdjacent, \GDegree, \GSPathFirst, \GDistance in $\Oh(1)$ time and
		$\GNeighbor(v)$ in $\Oh(\GDegree(v))$ time.
	\end{thmenumerate}
\end{theorem}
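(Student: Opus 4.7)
\emph{Structural foundation.} I invoke the classical characterization of bipartite permutation graphs (BPGs) due to Spinrad, Brandst\"adt, and Stewart: a bipartite graph $G = (X \cup Y, E)$ is a BPG iff its vertices admit a \emph{strong ordering} --- orderings $x_1, \ldots, x_p$ of $X$ and $y_1, \ldots, y_q$ of $Y$ (with $p + q = n$) such that every $N(x_i)$ is a contiguous interval $\{y_{\ell_i}, \ldots, y_{r_i}\}$ with both $(\ell_i)$ and $(r_i)$ non-decreasing in $i$ (and symmetrically for $N(y_j)$). A strong ordering is computable in linear time, so I relabel vertices as $X = \{1, \ldots, p\}$ and $Y = \{p{+}1, \ldots, n\}$ in strong order.

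\emph{Encoding and queries for part~(a).} I store the monotone sequence $\ell_1 \le \cdots \le \ell_p$ with values in $[1, q{+}1]$ as a bitvector of length $n$ by placing the $i$-th $1$-bit at position $\ell_i + i - 1$, and analogously $(r_i)_i$ in a second bitvector. With constant-time \rankop/\selop support from \wref{lem:compressed-bit-vectors}, these two bitvectors occupy $2n + o(n)$ bits and let any $\ell_i, r_i$ be retrieved in $O(1)$. By monotonicity, $N(y_j)$ is again an interval $[\ell'_j, r'_j]$ in $X$ with $\ell'_j = \min\{i : r_i \ge j\}$ and $r'_j = \max\{i : \ell_i \le j\}$, also computable in $O(1)$ through \rankop/\selop. $\GAdjacent$ and $\GDegree$ then follow directly in $O(1)$, $\GNeighbor(v)$ enumerates the corresponding interval in $O(\GDegree(v))$, and $\GSPathFirst(u, v)$ uses a greedy rule: for non-adjacent $u, v$ with (\withoutlossofgenerality) $u = x_i$, return $y_{r_i}$ when $v$ lies strictly to the right of $u$ in the strong order, and $y_{\ell_i}$ otherwise; among the neighbors of $x_i$, these two choices maximize the rightmost (resp.\ leftmost) reach achievable after one further hop.

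\emph{Distance oracle for part~(b).} I augment the representation with distance oracles for two auxiliary graphs: $H_X$ on vertex set $X$ with $x_i x_{i'} \in E(H_X)$ iff $[\ell_i, r_i] \cap [\ell_{i'}, r_{i'}] \ne \emptyset$, and $H_Y$ defined symmetrically. Monotonicity of the endpoint sequences prevents any $H_X$-interval from properly containing another, so $H_X$ and $H_Y$ are \emph{proper} interval graphs and \wref{lem:proper-interval-graphs} applies, contributing $3p + 3q + o(n) = O(n)$ additional bits and $O(1)$-time oracle queries. The BFS-level structure then yields $d_G(x_i, x_{i'}) = 2\, d_{H_X}(x_i, x_{i'})$ for same-part queries and $d_G(x_i, y_j) = 1 + 2\, d_{H_X}(x_i, x_{i^*})$ for cross-part queries, where $x_{i^*}$ is the endpoint of the interval $[\ell'_j, r'_j]$ closest to $x_i$ (obtainable in $O(1)$). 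Total space is $O(n)$ bits, $\GDistance$ takes $O(1)$, and all other queries retain the complexities from~(a).

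\emph{Main obstacle.} Both the greedy correctness for $\GSPathFirst$ and the distance formulas rest on a single structural claim: for every source $u$, the BFS level $\{v : d_G(u, v) = k\}$ decomposes into one contiguous interval in $X$ and one in $Y$ under the strong ordering, and these intervals grow monotonically with $k$. Proving this cleanly by induction on $k$, using the monotonicity of $(\ell_i), (r_i), (\ell'_j), (r'_j)$ together with the interval descriptions of $N(x_i)$ and $N(y_j)$, and correctly handling degenerate cases (isolated vertices, empty neighborhoods, and vertices at the same BFS level as $u$), is the main technical work; the remainder is routine succinct-data-structure bookkeeping.
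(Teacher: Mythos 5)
Your proposal is correct and reaches the same space/time bounds, but via a genuinely different route from the paper's. You build on the Spinrad--Brandst\"adt--Stewart \emph{strong ordering} characterization and store the monotone neighborhood-endpoint sequences $(\ell_i)$, $(r_i)$ directly in two length-$n$ bitvectors. The paper instead stays inside its permutation-diagram framework: it observes that in a BPG every vertex is an $A$- or $B$-vertex (a left-to-right maximum or right-to-left minimum of $\pi^{-1}$), so $\pi^{-1}$ splits into two increasing shuffled subsequences and can be recovered in $O(1)$ time from the two bitvectors $\mathtt A_x,\mathtt A_y$; all remaining functionality ($a^\pm$, $b^\pm$, neighborhoods as intervals, $\GSPathFirst=b^+(u)$, distance via the proper interval graphs $G_A,G_B$) is inherited verbatim from the general-PG machinery of \wref{sec:ds-distance}, with correctness already established by \cite[Thm.\,2.1]{BazzaroGavoille2009}. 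Your $H_X,H_Y$ coincide with the paper's $G_A,G_B$ up to relabeling, and your strong ordering coincides with the paper's natural ordering --- the paper proves exactly this in \wref[Claim]{cl:strongOrdering}, though it only uses that fact for the Hamiltonian-path algorithms, not for the data structure itself. So the decompositions are equivalent; the real difference is where the correctness burden lies.

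That difference is also where your proposal is weakest. You correctly identify the key structural fact (BFS levels from any source decompose into one interval per part, growing monotonically with depth) as ``the main technical work,'' but you do not carry out that induction, and your greedy rule for $\GSPathFirst$ and both distance formulas rest entirely on it. The claim is true and the induction you sketch goes through, so this is not a wrong turn --- but it is work the paper never has to do, precisely because it reduces the BPG case to the already-proven general-PG distance reduction rather than re-deriving it from the bipartite structure. In short: your approach is more self-contained and does not need the Bazzaro--Gavoille theorem, at the cost of having to re-establish the BFS-interval lemma for BPGs; the paper's approach is shorter because it reuses the general-PG correctness proof and needs only the observation that BPGs have no ``neither $A$ nor $B$'' vertices. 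A minor loose end in your write-up: ``$v$ lies strictly to the right of $u$ in the strong order'' needs to be made precise, since the strong ordering is two separate orders on $X$ and $Y$; the paper's formulation in terms of the single total order $1<\dots<n$ (with $u<v$, $u$ of type $A$) avoids this ambiguity.
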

By iterating \GSPathFirst, we can answer $\GSPath(u,v)$ in optimal $\Oh(\GDistance(u,v)+1)$ time.

\subsection{Data Structure}

As already observed in~\cite{BazzaroGavoille2009}, BPGs consist of only $A$ and $B$ vertices.
Isolated vertices are formally of both type $A$ and $B$; thus it is convenient to assign them to the highest possible indices
and to exclude them from further discussion.
(All operations on them are trivial.)

All vertices being of type $A$ or $B$ means that \emph{every} vertex corresponds 
to a left-to-right maximum or a right-to-left minimum.
The permutation $\pi^{-1}$ thus consists of two shuffled increasing subsequences
and can be encoded using the bitvectors $\mathtt A_x$ and $\mathtt A_y$
(introduced in \wref{sec:ds-distance}) in just $2n$ bits.
We add rank and select support to both bitvectors (occupying $o(n)$ additional bits of space).
\wref{fig:bpg-example} shows an example.

\begin{figure}[htbp]
	\adjustbox{max width=\linewidth}{\scalebox{.8}{
	\begin{tikzpicture}[
			scale=.4,
			point/.style = {draw,fill=white,circle,minimum size=12pt,inner sep=0pt,font=\tiny},
			A point/.style = {point,fill=red},
			B point/.style = {point,fill=green},
			AB point/.style = {point,fill=cyan},
			graph edge/.style = {yellow,very thick},
		]
		\draw[densely dotted] (0,0) grid (41,41) ;
		\draw[->,thick] (0,0) -- (41.500000,0) ;
		\draw[->,thick] (0,0) -- (0,41.500000) ;
			\node at (1,-.5) {\tiny 1} ;
			\node at (2,-.5) {\tiny 2} ;
			\node at (3,-.5) {\tiny 3} ;
			\node at (4,-.5) {\tiny 4} ;
			\node at (5,-.5) {\tiny 5} ;
			\node at (6,-.5) {\tiny 6} ;
			\node at (7,-.5) {\tiny 7} ;
			\node at (8,-.5) {\tiny 8} ;
			\node at (9,-.5) {\tiny 9} ;
			\node at (10,-.5) {\tiny 10} ;
			\node at (11,-.5) {\tiny 11} ;
			\node at (12,-.5) {\tiny 12} ;
			\node at (13,-.5) {\tiny 13} ;
			\node at (14,-.5) {\tiny 14} ;
			\node at (15,-.5) {\tiny 15} ;
			\node at (16,-.5) {\tiny 16} ;
			\node at (17,-.5) {\tiny 17} ;
			\node at (18,-.5) {\tiny 18} ;
			\node at (19,-.5) {\tiny 19} ;
			\node at (20,-.5) {\tiny 20} ;
			\node at (21,-.5) {\tiny 21} ;
			\node at (22,-.5) {\tiny 22} ;
			\node at (23,-.5) {\tiny 23} ;
			\node at (24,-.5) {\tiny 24} ;
			\node at (25,-.5) {\tiny 25} ;
			\node at (26,-.5) {\tiny 26} ;
			\node at (27,-.5) {\tiny 27} ;
			\node at (28,-.5) {\tiny 28} ;
			\node at (29,-.5) {\tiny 29} ;
			\node at (30,-.5) {\tiny 30} ;
			\node at (31,-.5) {\tiny 31} ;
			\node at (32,-.5) {\tiny 32} ;
			\node at (33,-.5) {\tiny 33} ;
			\node at (34,-.5) {\tiny 34} ;
			\node at (35,-.5) {\tiny 35} ;
			\node at (36,-.5) {\tiny 36} ;
			\node at (37,-.5) {\tiny 37} ;
			\node at (38,-.5) {\tiny 38} ;
			\node at (39,-.5) {\tiny 39} ;
			\node at (40,-.5) {\tiny 40} ;
			\node at (41,-.5) {\scriptsize $v$} ;
			\node at (-.5,1) {\tiny 1} ;
			\node at (-.5,2) {\tiny 2} ;
			\node at (-.5,3) {\tiny 3} ;
			\node at (-.5,4) {\tiny 4} ;
			\node at (-.5,5) {\tiny 5} ;
			\node at (-.5,6) {\tiny 6} ;
			\node at (-.5,7) {\tiny 7} ;
			\node at (-.5,8) {\tiny 8} ;
			\node at (-.5,9) {\tiny 9} ;
			\node at (-.5,10) {\tiny 10} ;
			\node at (-.5,11) {\tiny 11} ;
			\node at (-.5,12) {\tiny 12} ;
			\node at (-.5,13) {\tiny 13} ;
			\node at (-.5,14) {\tiny 14} ;
			\node at (-.5,15) {\tiny 15} ;
			\node at (-.5,16) {\tiny 16} ;
			\node at (-.5,17) {\tiny 17} ;
			\node at (-.5,18) {\tiny 18} ;
			\node at (-.5,19) {\tiny 19} ;
			\node at (-.5,20) {\tiny 20} ;
			\node at (-.5,21) {\tiny 21} ;
			\node at (-.5,22) {\tiny 22} ;
			\node at (-.5,23) {\tiny 23} ;
			\node at (-.5,24) {\tiny 24} ;
			\node at (-.5,25) {\tiny 25} ;
			\node at (-.5,26) {\tiny 26} ;
			\node at (-.5,27) {\tiny 27} ;
			\node at (-.5,28) {\tiny 28} ;
			\node at (-.5,29) {\tiny 29} ;
			\node at (-.5,30) {\tiny 30} ;
			\node at (-.5,31) {\tiny 31} ;
			\node at (-.5,32) {\tiny 32} ;
			\node at (-.5,33) {\tiny 33} ;
			\node at (-.5,34) {\tiny 34} ;
			\node at (-.5,35) {\tiny 35} ;
			\node at (-.5,36) {\tiny 36} ;
			\node at (-.5,37) {\tiny 37} ;
			\node at (-.5,38) {\tiny 38} ;
			\node at (-.5,39) {\tiny 39} ;
			\node at (-.5,40) {\tiny 40} ;
			\node[anchor=east] at (0,41) {\scriptsize $\pi^{-1}(v)$} ;
		\node[A point] (v1) at (1,3) {1} ;
		\node[A point] (v2) at (2,4) {2} ;
		\node[B point] (v3) at (3,1) {3} ;
		\node[A point] (v4) at (4,5) {4} ;
		\node[B point] (v5) at (5,2) {5} ;
		\node[A point] (v6) at (6,13) {6} ;
		\node[A point] (v7) at (7,14) {7} ;
		\node[B point] (v8) at (8,6) {8} ;
		\node[A point] (v9) at (9,15) {9} ;
		\node[B point] (v10) at (10,7) {10} ;
		\node[A point] (v11) at (11,16) {11} ;
		\node[A point] (v12) at (12,18) {12} ;
		\node[B point] (v13) at (13,8) {13} ;
		\node[B point] (v14) at (14,9) {14} ;
		\node[B point] (v15) at (15,10) {15} ;
		\node[A point] (v16) at (16,22) {16} ;
		\node[A point] (v17) at (17,23) {17} ;
		\node[B point] (v18) at (18,11) {18} ;
		\node[B point] (v19) at (19,12) {19} ;
		\node[B point] (v20) at (20,17) {20} ;
		\node[A point] (v21) at (21,25) {21} ;
		\node[A point] (v22) at (22,28) {22} ;
		\node[B point] (v23) at (23,19) {23} ;
		\node[A point] (v24) at (24,34) {24} ;
		\node[A point] (v25) at (25,37) {25} ;
		\node[B point] (v26) at (26,20) {26} ;
		\node[B point] (v27) at (27,21) {27} ;
		\node[A point] (v28) at (28,38) {28} ;
		\node[B point] (v29) at (29,24) {29} ;
		\node[B point] (v30) at (30,26) {30} ;
		\node[B point] (v31) at (31,27) {31} ;
		\node[A point] (v32) at (32,39) {32} ;
		\node[B point] (v33) at (33,29) {33} ;
		\node[B point] (v34) at (34,30) {34} ;
		\node[B point] (v35) at (35,31) {35} ;
		\node[B point] (v36) at (36,32) {36} ;
		\node[B point] (v37) at (37,33) {37} ;
		\node[B point] (v38) at (38,35) {38} ;
		\node[A point] (v39) at (39,40) {39} ;
		\node[B point] (v40) at (40,36) {40} ;
		\begin{pgfonlayer}{background}
			\draw[line width=2pt,black!20] (0,0) -- (41,41) ;
			\draw[graph edge] (v1) to (v3) ; 
			\draw[graph edge] (v1) to (v5) ; 
			\draw[graph edge] (v2) to (v3) ; 
			\draw[graph edge] (v2) to (v5) ; 
			\draw[graph edge] (v4) to (v5) ; 
			\draw[graph edge] (v6) to (v8) ; 
			\draw[graph edge] (v6) to (v10) ; 
			\draw[graph edge] (v6) to (v13) ; 
			\draw[graph edge] (v6) to (v14) ; 
			\draw[graph edge] (v6) to (v15) ; 
			\draw[graph edge] (v6) to (v18) ; 
			\draw[graph edge] (v6) to (v19) ; 
			\draw[graph edge] (v7) to (v8) ; 
			\draw[graph edge] (v7) to (v10) ; 
			\draw[graph edge] (v7) to (v13) ; 
			\draw[graph edge] (v7) to (v14) ; 
			\draw[graph edge] (v7) to (v15) ; 
			\draw[graph edge] (v7) to (v18) ; 
			\draw[graph edge] (v7) to (v19) ; 
			\draw[graph edge] (v9) to (v10) ; 
			\draw[graph edge] (v9) to (v13) ; 
			\draw[graph edge] (v9) to (v14) ; 
			\draw[graph edge] (v9) to (v15) ; 
			\draw[graph edge] (v9) to (v18) ; 
			\draw[graph edge] (v9) to (v19) ; 
			\draw[graph edge] (v11) to (v13) ; 
			\draw[graph edge] (v11) to (v14) ; 
			\draw[graph edge] (v11) to (v15) ; 
			\draw[graph edge] (v11) to (v18) ; 
			\draw[graph edge] (v11) to (v19) ; 
			\draw[graph edge] (v12) to (v13) ; 
			\draw[graph edge] (v12) to (v14) ; 
			\draw[graph edge] (v12) to (v15) ; 
			\draw[graph edge] (v12) to (v18) ; 
			\draw[graph edge] (v12) to (v19) ; 
			\draw[graph edge] (v12) to (v20) ; 
			\draw[graph edge] (v16) to (v18) ; 
			\draw[graph edge] (v16) to (v19) ; 
			\draw[graph edge] (v16) to (v20) ; 
			\draw[graph edge] (v16) to (v23) ; 
			\draw[graph edge] (v16) to (v26) ; 
			\draw[graph edge] (v16) to (v27) ; 
			\draw[graph edge] (v17) to (v18) ; 
			\draw[graph edge] (v17) to (v19) ; 
			\draw[graph edge] (v17) to (v20) ; 
			\draw[graph edge] (v17) to (v23) ; 
			\draw[graph edge] (v17) to (v26) ; 
			\draw[graph edge] (v17) to (v27) ; 
			\draw[graph edge] (v21) to (v23) ; 
			\draw[graph edge] (v21) to (v26) ; 
			\draw[graph edge] (v21) to (v27) ; 
			\draw[graph edge] (v21) to (v29) ; 
			\draw[graph edge] (v22) to (v23) ; 
			\draw[graph edge] (v22) to (v26) ; 
			\draw[graph edge] (v22) to (v27) ; 
			\draw[graph edge] (v22) to (v29) ; 
			\draw[graph edge] (v22) to (v30) ; 
			\draw[graph edge] (v22) to (v31) ; 
			\draw[graph edge] (v24) to (v26) ; 
			\draw[graph edge] (v24) to (v27) ; 
			\draw[graph edge] (v24) to (v29) ; 
			\draw[graph edge] (v24) to (v30) ; 
			\draw[graph edge] (v24) to (v31) ; 
			\draw[graph edge] (v24) to (v33) ; 
			\draw[graph edge] (v24) to (v34) ; 
			\draw[graph edge] (v24) to (v35) ; 
			\draw[graph edge] (v24) to (v36) ; 
			\draw[graph edge] (v24) to (v37) ; 
			\draw[graph edge] (v25) to (v26) ; 
			\draw[graph edge] (v25) to (v27) ; 
			\draw[graph edge] (v25) to (v29) ; 
			\draw[graph edge] (v25) to (v30) ; 
			\draw[graph edge] (v25) to (v31) ; 
			\draw[graph edge] (v25) to (v33) ; 
			\draw[graph edge] (v25) to (v34) ; 
			\draw[graph edge] (v25) to (v35) ; 
			\draw[graph edge] (v25) to (v36) ; 
			\draw[graph edge] (v25) to (v37) ; 
			\draw[graph edge] (v25) to (v38) ; 
			\draw[graph edge] (v25) to (v40) ; 
			\draw[graph edge] (v28) to (v29) ; 
			\draw[graph edge] (v28) to (v30) ; 
			\draw[graph edge] (v28) to (v31) ; 
			\draw[graph edge] (v28) to (v33) ; 
			\draw[graph edge] (v28) to (v34) ; 
			\draw[graph edge] (v28) to (v35) ; 
			\draw[graph edge] (v28) to (v36) ; 
			\draw[graph edge] (v28) to (v37) ; 
			\draw[graph edge] (v28) to (v38) ; 
			\draw[graph edge] (v28) to (v40) ; 
			\draw[graph edge] (v32) to (v33) ; 
			\draw[graph edge] (v32) to (v34) ; 
			\draw[graph edge] (v32) to (v35) ; 
			\draw[graph edge] (v32) to (v36) ; 
			\draw[graph edge] (v32) to (v37) ; 
			\draw[graph edge] (v32) to (v38) ; 
			\draw[graph edge] (v32) to (v40) ; 
			\draw[graph edge] (v39) to (v40) ; 
		\end{pgfonlayer}
			\node at (1,-1.5) {\tiny\ttfamily 1} ;
			\node at (2,-1.5) {\tiny\ttfamily 1} ;
			\node at (3,-1.5) {\tiny\ttfamily 0} ;
			\node at (4,-1.5) {\tiny\ttfamily 1} ;
			\node at (5,-1.5) {\tiny\ttfamily 0} ;
			\node at (6,-1.5) {\tiny\ttfamily 1} ;
			\node at (7,-1.5) {\tiny\ttfamily 1} ;
			\node at (8,-1.5) {\tiny\ttfamily 0} ;
			\node at (9,-1.5) {\tiny\ttfamily 1} ;
			\node at (10,-1.5) {\tiny\ttfamily 0} ;
			\node at (11,-1.5) {\tiny\ttfamily 1} ;
			\node at (12,-1.5) {\tiny\ttfamily 1} ;
			\node at (13,-1.5) {\tiny\ttfamily 0} ;
			\node at (14,-1.5) {\tiny\ttfamily 0} ;
			\node at (15,-1.5) {\tiny\ttfamily 0} ;
			\node at (16,-1.5) {\tiny\ttfamily 1} ;
			\node at (17,-1.5) {\tiny\ttfamily 1} ;
			\node at (18,-1.5) {\tiny\ttfamily 0} ;
			\node at (19,-1.5) {\tiny\ttfamily 0} ;
			\node at (20,-1.5) {\tiny\ttfamily 0} ;
			\node at (21,-1.5) {\tiny\ttfamily 1} ;
			\node at (22,-1.5) {\tiny\ttfamily 1} ;
			\node at (23,-1.5) {\tiny\ttfamily 0} ;
			\node at (24,-1.5) {\tiny\ttfamily 1} ;
			\node at (25,-1.5) {\tiny\ttfamily 1} ;
			\node at (26,-1.5) {\tiny\ttfamily 0} ;
			\node at (27,-1.5) {\tiny\ttfamily 0} ;
			\node at (28,-1.5) {\tiny\ttfamily 1} ;
			\node at (29,-1.5) {\tiny\ttfamily 0} ;
			\node at (30,-1.5) {\tiny\ttfamily 0} ;
			\node at (31,-1.5) {\tiny\ttfamily 0} ;
			\node at (32,-1.5) {\tiny\ttfamily 1} ;
			\node at (33,-1.5) {\tiny\ttfamily 0} ;
			\node at (34,-1.5) {\tiny\ttfamily 0} ;
			\node at (35,-1.5) {\tiny\ttfamily 0} ;
			\node at (36,-1.5) {\tiny\ttfamily 0} ;
			\node at (37,-1.5) {\tiny\ttfamily 0} ;
			\node at (38,-1.5) {\tiny\ttfamily 0} ;
			\node at (39,-1.5) {\tiny\ttfamily 1} ;
			\node at (40,-1.5) {\tiny\ttfamily 0} ;
			\node at (0,-1.5) {\scriptsize$\texttt A_x$} ;
			\node at (-1.5,1) {\tiny\ttfamily 0} ;
			\node at (-1.5,2) {\tiny\ttfamily 0} ;
			\node at (-1.5,3) {\tiny\ttfamily 1} ;
			\node at (-1.5,4) {\tiny\ttfamily 1} ;
			\node at (-1.5,5) {\tiny\ttfamily 1} ;
			\node at (-1.5,6) {\tiny\ttfamily 0} ;
			\node at (-1.5,7) {\tiny\ttfamily 0} ;
			\node at (-1.5,8) {\tiny\ttfamily 0} ;
			\node at (-1.5,9) {\tiny\ttfamily 0} ;
			\node at (-1.5,10) {\tiny\ttfamily 0} ;
			\node at (-1.5,11) {\tiny\ttfamily 0} ;
			\node at (-1.5,12) {\tiny\ttfamily 0} ;
			\node at (-1.5,13) {\tiny\ttfamily 1} ;
			\node at (-1.5,14) {\tiny\ttfamily 1} ;
			\node at (-1.5,15) {\tiny\ttfamily 1} ;
			\node at (-1.5,16) {\tiny\ttfamily 1} ;
			\node at (-1.5,17) {\tiny\ttfamily 0} ;
			\node at (-1.5,18) {\tiny\ttfamily 1} ;
			\node at (-1.5,19) {\tiny\ttfamily 0} ;
			\node at (-1.5,20) {\tiny\ttfamily 0} ;
			\node at (-1.5,21) {\tiny\ttfamily 0} ;
			\node at (-1.5,22) {\tiny\ttfamily 1} ;
			\node at (-1.5,23) {\tiny\ttfamily 1} ;
			\node at (-1.5,24) {\tiny\ttfamily 0} ;
			\node at (-1.5,25) {\tiny\ttfamily 1} ;
			\node at (-1.5,26) {\tiny\ttfamily 0} ;
			\node at (-1.5,27) {\tiny\ttfamily 0} ;
			\node at (-1.5,28) {\tiny\ttfamily 1} ;
			\node at (-1.5,29) {\tiny\ttfamily 0} ;
			\node at (-1.5,30) {\tiny\ttfamily 0} ;
			\node at (-1.5,31) {\tiny\ttfamily 0} ;
			\node at (-1.5,32) {\tiny\ttfamily 0} ;
			\node at (-1.5,33) {\tiny\ttfamily 0} ;
			\node at (-1.5,34) {\tiny\ttfamily 1} ;
			\node at (-1.5,35) {\tiny\ttfamily 0} ;
			\node at (-1.5,36) {\tiny\ttfamily 0} ;
			\node at (-1.5,37) {\tiny\ttfamily 1} ;
			\node at (-1.5,38) {\tiny\ttfamily 1} ;
			\node at (-1.5,39) {\tiny\ttfamily 1} ;
			\node at (-1.5,40) {\tiny\ttfamily 1} ;
			\node at (-1.25,0) {\scriptsize$\texttt A_y$} ;
	\end{tikzpicture}
	}}
	\caption{%
		An exemplary bipartite permutation graph, shown as the grid $P(\pi)$.
	}
	\label{fig:bpg-example}
\end{figure}
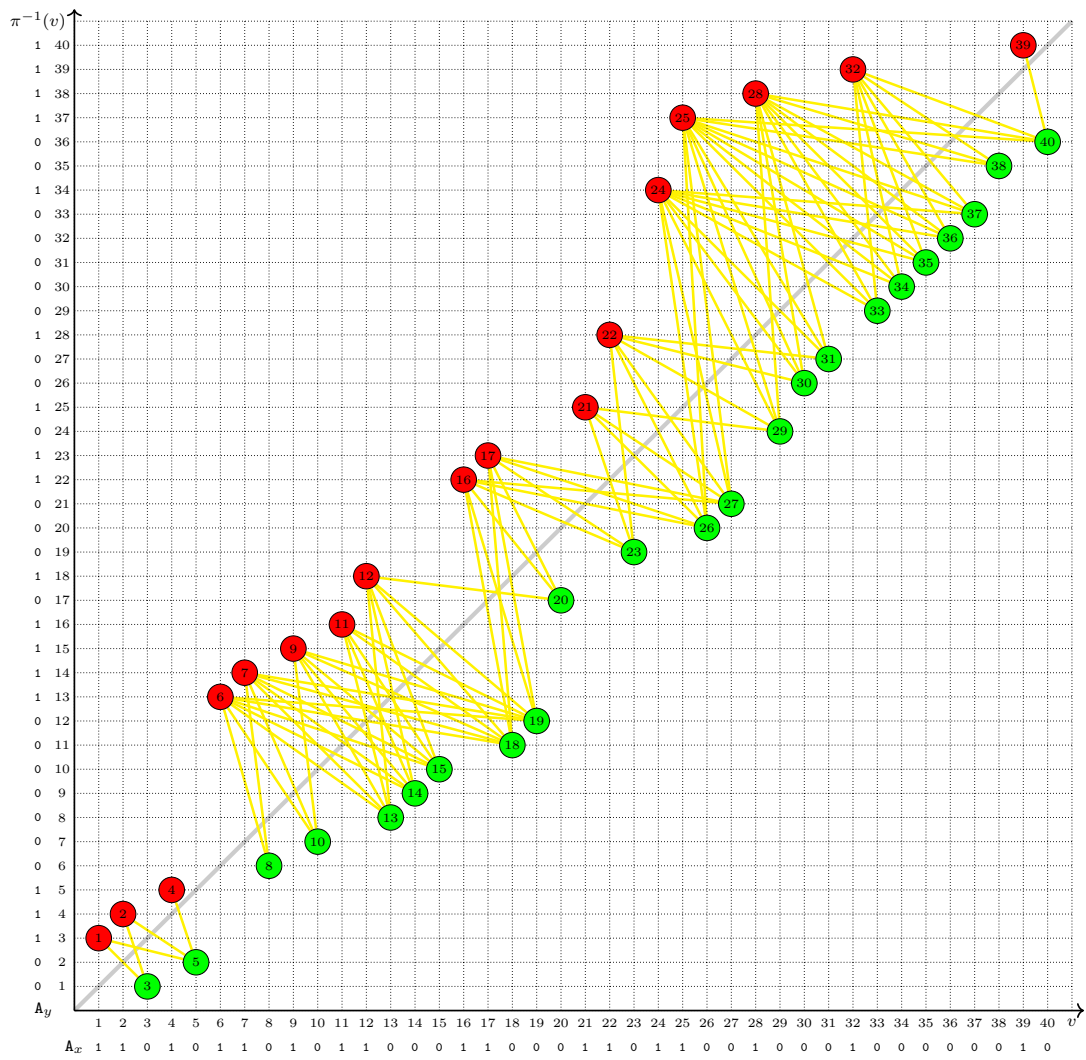

The key operation is to simulate access to $\pi^{-1}(v)$ based on the above representation:
\begin{align*}
		\pi^{-1}(v)
	&\wwrel=
		\begin{cases*}
		\selop_1(\mathtt A_y, \rankop_1(\mathtt A_x, v)) & if $\mathtt A_x[v]=1$ \\
		\selop_0(\mathtt A_y, \rankop_0(\mathtt A_x, v)) & if $\mathtt A_x[v]=0$ \\
		\end{cases*}
\end{align*}
Computation of $\pi^{-1}$ is thus supported in constant time.
That immediately allows to compute $\GAdjacent(u,v)$ as before;
moreover, 
$a^-(v)$, $a^+(v)$, are directly supported, too.
For $b^-(v)$, $b^+(v)$, we exploit that in BPGs, $\mathtt B_x[v] = 1-\mathtt A_x[v]$ so 
$b^+(v) = \selop_0(\mathtt A_x, \rankop_0(\mathtt A_y, \pi^{-1}(v)))$,
and similarly for $b^-(v)$.

It is easy to see that for a $B$-vertex $v$, its neighbors are exactly 
all $A$-vertices in $[a^-(v),a^+(v)]$; similarly for $A$-vertex $v$, we have
$N(v) = [b^-(v),b^+(v)]\cap B$.
We can iterate through these (in sorted order) using rank/select on $\mathtt A_x$,
so \GNeighbor can be answered in constant time per neighbor.

The degree of a vertex can computed in $\Oh(1)$ time. If $v$ is a $B$-vertex,
$\GDegree(v) = \rankop_1(\mathtt A_x, a^+(v)) - \rankop_1(\mathtt A_x, a^-(v))-1$,
and similarly for an $A$-vertex.

Finally, shortest paths in BPGs are particularly simple since there is only
one candidate successor vertex left:
Let $u<v$ and assume $u$ is an $A$-vertex.
Then either $u$ and $v$ are adjacent, or $\GSPathFirst(u,v) = b^+(u)$.
The situation where $u$ is a $B$-vertex is symmetric.
Computing $\GDistance(u,v)$ faster than $\Theta(\GDistance(u,v))$
seems only possible using the distance oracles for $G_A$ and $G_B$,
which require $3n+o(n)$ additional bits of space.
The query itself is as for general PGs.

\medskip\noindent
This concludes the proof of \wref{thm:bipartite-ds}.

\subsection{Space Lower Bound}
\label{sec:lower-bound-bipartite}

A known counting result for unlabeled BPGs implies that our data structure 
from \wref{thm:bipartite-ds} is succinct.
Let us denote by $b_n$ the number of unlabeled BPGs and 
by $\overline b_n$ the number of unlabeled \emph{connected} BPGs. 
Saitoh et al.~\cite[Thm.\,3.14]{SaitohOtachiYamanakaUehara2012} showed that
\begin{align*}
		\overline b_n 
	&\wwrel= 
		\begin{cases}
			\frac14\bigl(C_{n-1} + C_{n/2-1} + \binom n{n/2}\bigr) & \text{if $n$ is even} \\
			\frac14\bigl(C_{n-1} + \binom {n-1}{(n-1)/2}\bigr) & \text{if $n$ is odd} \\
		\end{cases} 
\\	&\wwrel=
		C_{n-2}(1+o(1)),
\end{align*}
for $n\ge 2$, where $C_n$ is the $n$th Catalan number.
Hence $\lg b_n \ge \lg \overline b_n = 2n - O(\log n)$ bits 
are necessary to represent an unlabeled BPG on $n$ vertices.
This is asymptotically equivalent to the amount of space required by our data structure.

\subsection{Algorithms}

Our data structure for BPGs can be used to solve the \textsc{Hamiltonian Path} and the \textsc{Hamiltonian Cycle} problems
in $\Oh(n)$ time with no extra space.
A Hamilton path (resp. Hamiltonian cycle) in a graph is a simple path (resp. simple cycle) which contains every vertex of the graph. 
Given a graph $G$, the \textsc{Hamiltonian Path} (resp.\ \textsc{Hamiltonian Cycle}) problem asks whether the graph $G$ contains a Hamiltonian path 
(resp.\ Hamiltonian cycle).
These problems are NP-complete even when restricted to several special classes of bipartite graphs, but can be solved efficiently in the class
of BPGs (see~\cite{spinrad1987bipartite} and references therein).
We will show how our data structure can be used to execute the algorithms from \cite{spinrad1987bipartite} in $\Oh(n)$ time without using extra
space.

In order to explain the algorithms and their execution on the data structure, we need to introduce some preliminaries from \cite{spinrad1987bipartite}.
A \emph{strong ordering} of the vertices of a bipartite graph $G = (A, B, E)$ consists of an ordering of $A$ and an ordering of $B$ 
such that for all $\{a, b\}$, $\{a', b'\}$ in $E$, where $a$, $a'$ are in $A$ and $b$, $b'$ are in $B$, $a < a’$ and $b > b’$ imply $\{a, b'\}$ 
and $\{a', b\}$ are in $E$. 
The algorithms are based on the following characterization of BPGs.
\begin{theorem}[Strong ordering, \cite{spinrad1987bipartite}]\label{th:BPcharacterization}
	A graph $G=(A,B,E)$ is BPG if and only if there exists a strong ordering of $A \cup B$.
\end{theorem}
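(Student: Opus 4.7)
My plan is to prove both implications separately, using the grid representation of \wref{sec:preliminaries} for the forward direction and invoking \wref{th:permCharacterization} for the backward direction.

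For the forward direction, I would start from an ordered representation $G_\pi$ of the given bipartite permutation graph. Because $G$ is triangle-free, every vertex must lie on the upper or lower staircase of the grid: a vertex $v$ that is neither a left-to-right maximum nor a right-to-left minimum of $\pi^{-1}$ would give indices $u<v<w$ with $\pi^{-1}(u)>\pi^{-1}(v)>\pi^{-1}(w)$, yielding three pairwise dominating grid points and hence a triangle. Consequently the type partition of the grid coincides with the bipartition $(A,B)$, and I would use the orderings of $A$ and $B$ induced from $[n]$. To verify the strong ordering condition, I would carry out a short case analysis: for $a<a'$ in $A$ and $b>b'$ in $B$ with $\{a,b\},\{a',b'\}\in E$, the monotonicity of $\pi^{-1}$ on each of $A$ and $B$ pins down the relative positions of the four grid points; the two sub-cases that mix "$a<b$" with "$a'>b'$" turn out to contradict this monotonicity, while the two remaining sub-cases directly witness the edges $\{a,b'\}$ and $\{a',b\}$.

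For the backward direction, I would establish that both $G$ and $\overline G$ are comparability graphs and apply \wref{th:permCharacterization}. For $G$ this is immediate: orient every edge from $A$ to $B$; no directed path of length two exists, so the orientation is vacuously transitive. For $\overline G$ I would first deduce from the strong ordering that each neighborhood $N(a_i)\subseteq B$ is a contiguous interval $[b_{l_i},b_{r_i}]$ and that $(l_i)$ and $(r_i)$ are non-decreasing in $i$ (symmetrically for $B$); the interval property is a direct $\Gamma$-style consequence of the strong ordering, and monotonicity follows from the same four-vertex argument. Then I would orient $\overline G$ by $a_i\to a_j$ and $b_i\to b_j$ whenever $i<j$, and for every non-edge $\{a_i,b_j\}$ of $G$ by $b_j\to a_i$ if $j<l_i$ and $a_i\to b_j$ if $j>r_i$, handling isolated vertices by a uniform outward convention.

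The main obstacle is the transitivity verification for this orientation of $\overline G$. Homogeneous chains reduce to transitivity of $<$ on $[|A|]$ or $[|B|]$, and one-mixed chains of the form $a_i\to a_j\to b_k$, $b_i\to a_j\to a_k$, $a_i\to b_j\to b_k$, or $b_i\to b_j\to a_k$ reduce to the monotonicity of $(l_i)$ and $(r_i)$. The delicate cases are the alternating chains $a_i\to b_j\to a_k$ and $b_i\to a_j\to b_k$; here the strict chains $r_i<j<l_k$ and $i<l_j\le r_j<k$, combined with the monotonicity of the endpoint sequences, force $i<k$, providing the required shortcut edge. Once this is checked, \wref{th:permCharacterization} concludes that $G$ is a permutation graph.
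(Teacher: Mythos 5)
This theorem is imported from \cite{spinrad1987bipartite}, so the paper has no proof of the statement itself; what it does prove is \wref[Claim]{cl:strongOrdering}, which is exactly your forward direction specialized to the ordered representation. Your forward argument matches that claim: triangle-freeness correctly forces every vertex onto one of the two staircases (every inversion pattern $u<v<w$ with $\pi^{-1}(u)>\pi^{-1}(v)>\pi^{-1}(w)$ yields a triangle), and the four-point verification is, once written out, the same monotonicity computation as in the paper's proof. One point worth stating explicitly is that for \emph{disconnected} $G$ the $A/B$-type partition is only \emph{a} bipartition, not necessarily the given one $(A,B)$; one has to flip individual components in the grid to align them. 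Your backward direction is a genuinely different route from the direct permutation-diagram construction in the original source: you invoke \wref{th:permCharacterization} and transitively orient $\overline G$ via the neighborhood-interval endpoints $l_i,r_i$. The monotonicity of $(l_i)$ and $(r_i)$ and the transitivity check for the homogeneous, one-mixed, and alternating chains all go through as you say for vertices with nonempty neighborhood, and the approach nicely avoids any explicit geometry at the cost of using Dushnik--Miller as a black box.

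The gap is in the treatment of isolated vertices: ``a uniform outward convention'' does not yield a transitive orientation as stated. Take $A=\{a_1,a_2,a_3\}$, $B=\{b_1,b_2\}$, $E=\bigl\{\{a_1,b_1\},\{a_3,b_2\}\bigr\}$ with $a_2$ isolated; the identity order is (vacuously) a strong ordering. Here $l_1=r_1=1$ and $l_3=r_3=2$, so the index rule orients $a_1\to a_2$ while the outward rule orients $a_2\to b_1$; but $\{a_1,b_1\}\in E$, so $\overline G$ has no edge $a_1b_1$, the chain $a_1\to a_2\to b_1$ has no shortcut, and transitivity fails. Isolated vertices also break the interval property you rely on: an isolated $b_k$ may lie strictly between $b_{l_i}$ and $b_{r_i}$, in which case the non-edge $\{a_i,b_k\}$ is covered by neither the $j<l_i$ nor the $j>r_i$ case. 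The clean repair is to delete isolated vertices, run your argument for the remaining graph, and re-append them as fixed points at the end of $\pi$; alternatively, one can make each isolated vertex a universal \emph{source} in $\overline G$, overriding the index rule and breaking ties among isolated vertices arbitrarily --- this works because an isolated vertex of $G$ is adjacent in $\overline G$ to every other vertex, so every length-two chain starting at it has its shortcut. Either way the convention must be made precise; the phrase as written describes an orientation that is not transitive.
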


Let $G=(A,B,E)$ be a BPG, where $A = \{ a_1, a_2, \ldots, a_k \}$, $B = \{ b_1, b_2, \ldots, b_s \}$, and
the vertices are indexed according to a strong ordering of $A \cup B$. Then using the characterization from Theorem \ref{th:BPcharacterization},
the following results were proved in \cite{spinrad1987bipartite}.

\begin{theorem}[Hamiltonian path, \cite{spinrad1987bipartite}]\label{thm:BPHP} 
Graph $G$ contains a Hamiltonian path if and only if 
\begin{itemize}
	\item either $s = k-1$ and $a_1, b_1, a_2, b_2, \ldots, b_{k-1}, a_k$ is a Hamiltonian path,
	\item or $s = k$ and $a_1, b_1, a_2, b_2, \ldots, b_{k-1}, a_k, b_k$ is a Hamiltonian path,
	\item or $s = k+1$ $b_1, a_1, b_2, a_2, \ldots, a_{k}, b_{k+1}$ is a Hamiltonian path,
	\item or $s = k$ and $b_1, a_1, b_2, a_2, \ldots, a_{k-1}, b_k, a_k$ is a Hamiltonian path.
\end{itemize}
\end{theorem}

\begin{theorem}[Hamiltonian cycle, \cite{spinrad1987bipartite}]\label{thm:BPHC} 
Graph $G$ contains a Hamiltonian cycle if and only if $k = s \geq 2$ and $a_i, b_i, a_{i+1}, b_{i+1}$ is a cycle of length four for $1 \leq i \leq k-1$.
\end{theorem}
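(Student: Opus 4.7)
My plan is to prove both implications via the strong-ordering characterization of BPGs (Theorem~\ref{th:BPcharacterization}), exploiting its standard consequence that once vertices are indexed by a strong ordering, each neighborhood $N(a_i)=[\ell_i,r_i]$ is a contiguous interval of $b$-indices whose endpoints are monotonically non-decreasing in $i$ (a two-line consequence of the strong-ordering closure rule), and symmetrically for $N(b_j)$.

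For the sufficiency direction ($\Leftarrow$), assume $k=s\geq 2$ and that $a_i,b_i,a_{i+1},b_{i+1}$ induces a $4$-cycle for every $i\in[1,k-1]$. Unpacking this condition gives the band structure $\{a_i,b_j\}\in E$ whenever $|i-j|\leq 1$. I would then exhibit an explicit Hamiltonian cycle of ``zig-zag up then zig-zag down'' form: $a_1,b_2,a_3,b_4,\ldots$ climbing up one diagonal and $\ldots,b_3,a_2,b_1,a_1$ returning down the other, with a one-step turnaround near the top chosen according to the parity of $k$ (the base case $k=2$ is just $a_1,b_2,a_2,b_1,a_1$). Every consecutive pair $\{a_p,b_q\}$ on this cycle satisfies $|p-q|\leq 1$, so all edges used are guaranteed by the band structure.

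For the necessity direction ($\Rightarrow$), suppose $G$ has a Hamiltonian cycle $C$. Bipartiteness immediately gives $k=s$, and $k\geq 2$ since $C$ has length at least $4$. Fix $i\in[1,k-1]$ and set $L_i=\{a_1,\ldots,a_i\}\cup\{b_1,\ldots,b_i\}$, $R_i=V\setminus L_i$; I argue by contradiction that each of the four required edges must exist. If $\{a_i,b_i\}$ is missing then either $r_i<i$ or $\ell_i>i$: in the first case, monotonicity forces $N(a_p)\subseteq\{b_1,\ldots,b_{i-1}\}$ for all $p\leq i$, so the $2i$ HC-edge endpoints at $\{a_1,\ldots,a_i\}$ must be supported by the at most $2(i-1)$ available at $\{b_1,\ldots,b_{i-1}\}$, a contradiction; the case $\ell_i>i$ is symmetric on the right, and $\{a_{i+1},b_{i+1}\}$ missing is handled by the mirror-image argument. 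Assuming then both diagonal edges are present, if $\{a_i,b_{i+1}\}$ is missing then $r_i=i$, so no edge crosses the $(L_i,R_i)$-cut with its $a$-endpoint in $L_i$; every HC-crossing thus has its $L_i$-endpoint in $B$, and $C$ restricted to $L_i$ decomposes into $c$ sub-paths each starting and ending at a $B$-vertex. A vertex count ($\sum_j m_j=i$ in $A$ and $\sum_j(m_j+1)=i$ in $B$, where $m_j$ is the number of $A$-vertices on sub-path~$j$) yields $c=0$, contradicting that $C$ must cross the cut; the missing-$\{a_{i+1},b_i\}$ case is symmetric.

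The main obstacle I anticipate is keeping the four missing-edge sub-cases cleanly separated and invoking the strong-ordering closure rule in the right place---once $\{a_i,b_{i+1}\}$ and $\{a_{i+1},b_i\}$ are established, applying the closure rule to them (with $a_i<a_{i+1}$ and $b_{i+1}>b_i$) immediately yields $\{a_i,b_i\}$ and $\{a_{i+1},b_{i+1}\}$ for free, which in fact streamlines the necessity argument to just the two ``crossing-edge'' cases. The zig-zag construction for $(\Leftarrow)$ and the parity/arithmetic details of the turnaround are then largely bookkeeping.
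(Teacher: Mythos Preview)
The paper does not prove this theorem; it is quoted verbatim as a result of Spinrad, Brandst\"adt, and Stewart~\cite{spinrad1987bipartite} and used only as a black box to justify the $\Oh(n)$ Hamiltonian-cycle algorithm on the succinct BPG data structure. There is therefore no ``paper's own proof'' to compare against.

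That said, your proposed argument is sound. The sufficiency direction via the zig-zag cycle is standard and correct once you have the band structure $\{a_i,b_j\}\in E$ for $|i-j|\le 1$; the parity-dependent turnaround is routine. For necessity, your cut-counting arguments work: the degree-count $2i\le 2(i-1)$ for a missing diagonal edge and the path-count $\sum m_j=i=\sum(m_j+1)$ forcing $c=0$ for a missing crossing edge are both clean contradictions. Your closing observation is the right simplification: once you establish the two crossing edges $\{a_i,b_{i+1}\}$ and $\{a_{i+1},b_i\}$, the strong-ordering closure rule (applied with $a_i<a_{i+1}$, $b_{i+1}>b_i$) immediately yields $\{a_i,b_i\}$ and $\{a_{i+1},b_{i+1}\}$, so only the two crossing-edge cases need the cut argument. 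This matches the spirit of the original proof in~\cite{spinrad1987bipartite}, which also exploits the interval-neighborhood structure induced by the strong ordering.
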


In order to make use of these results, we will show that in our data structure, vertices of a given ordered BPG are stored in a \emph{strong ordering}.
Recall, that given a permutation $\pi:[n] \to [n]$, the ordered PG induced by $\pi$, denoted $G_\pi = (V,E)$, 
has vertices $V=[n]$ and edges $\{i,j\}\in E$ for all $i>j$ with $\pi^{-1}(i)<\pi^{-1}(j)$.

\begin{claim}\label{cl:strongOrdering}
	Let $G_{\pi}=(A,B,E)$ be an ordered BPG, then the ordering $1 < 2 < \ldots < n-1 < n$ 
	(restricted to $A$ and $B$, respectively) is a strong ordering of $A \cup B$.
\end{claim}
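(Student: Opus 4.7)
The plan is to unpack the definition of strong ordering and verify it via a direct case analysis based on the grid representation of $G_\pi$. Recall that we must show: whenever $a, a' \in A$ with $a < a'$, $b, b' \in B$ with $b' < b$, and both $\{a,b\}, \{a',b'\} \in E$, then $\{a, b'\}, \{a', b\} \in E$ as well.

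First I would exploit two monotonicity facts that follow immediately from the definitions of types $A$ and $B$ recalled in \wref{sec:ds-distance}: since $A$-vertices are left-to-right maxima of $\pi^{-1}$, the map $v \mapsto \pi^{-1}(v)$ is strictly increasing on $A$, so $\pi^{-1}(a) < \pi^{-1}(a')$; symmetrically, since $B$-vertices are right-to-left minima of $\pi^{-1}$, $\pi^{-1}$ is strictly increasing on $B$, so $\pi^{-1}(b') < \pi^{-1}(b)$. I would also restate the edge criterion of $G_\pi$ in the symmetric form: $\{u,v\} \in E$ iff $(u-v)$ and $(\pi^{-1}(u)-\pi^{-1}(v))$ carry opposite signs (equivalently, on the grid, one of the points $(u,\pi^{-1}(u))$, $(v,\pi^{-1}(v))$ lies strictly NW of the other).

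Next I would split on the relative order of $a$ and $b$. In the case $a < b$, the edge $\{a,b\}$ forces $\pi^{-1}(a) > \pi^{-1}(b)$, which combined with the monotonicity above yields $\pi^{-1}(b') < \pi^{-1}(b) < \pi^{-1}(a) < \pi^{-1}(a')$. For $\{a',b'\}$ to then be an edge, we must have $a' < b'$, giving the full chain $a < a' < b' < b$. The required edges $\{a,b'\}$ and $\{a',b\}$ now follow by reading off the signs in this chain. The case $a > b$ is entirely symmetric: one derives $b' < b < a < a'$ together with $\pi^{-1}(a) < \pi^{-1}(a') < \pi^{-1}(b') < \pi^{-1}(b)$, and again the two target edges drop out.

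There is no real obstacle beyond careful bookkeeping; the only slightly delicate point is that in each case the \emph{hypothesis} that $\{a',b'\}\in E$ is what pins down the interleaving of the $x$-coordinates (it rules out configurations like $a<b<a'<b'$ that would otherwise be compatible with the monotonicity alone). Once this is observed, the two cases are mechanical and parallel, so the cleanest writeup will treat Case 1 in detail and remark that Case 2 follows by the obvious symmetry $v \leftrightarrow n+1-v$ swapping the roles of $A$ and $B$.
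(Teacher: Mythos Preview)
Your proof is correct and follows essentially the same approach as the paper: both arguments exploit the defining properties of $A$-vertices (left-to-right maxima) and $B$-vertices (right-to-left minima) to derive the needed sign patterns between $x$-coordinates and $\pi^{-1}$-values. One small remark: your Case~2 ($a>b$) is in fact vacuous, since an $A$-vertex (being a left-to-right maximum of $\pi^{-1}$) can have no neighbor with smaller index---the paper uses exactly this observation to avoid the case split and argue directly that $a<b$ and $a'<b'$.
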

\begin{proof}
	As before, we assume that $A$ is the set of $A$-vertices and $B$ is the set of $B$-vertices of $G$.
	Let $a,a' \in A$ and $b,b' \in B$ be such that $\{ a, b \}$ and $\{ a', b' \}$ are in $E$, and $a < a'$ and $b > b'$.
	We will show that in this case $\{ a, b' \}$ and $\{ a', b \}$ are also in $E$. 
	By definition, we need to establish: 
	\begin{enumerate}
		\item[(1)] $a < b'$ and $\pi^{-1}(a) > \pi^{-1}(b')$; and
		\item[(2)] $a' < b$ and $\pi^{-1}(a') > \pi^{-1}(b)$.
	\end{enumerate}
	We will show only (1), as (2) is proved similarly.
	Since $\{ a', b' \} \in E$ and $a'$ is an $A$-vertex, we have that $a' < b'$ and hence $a < b'$ (as, by assumption, $a < a'$).
	To prove the second part of (1), we note that $\pi^{-1}(b) < \pi^{-1}(a)$ and $b > a$ because $\{ a, b \} \in E$.
	Furthermore, since $\{ b', b \} \not\in E$ and $b' < b$, we have that $\pi^{-1}(b') < \pi^{-1}(b)$. Consequently, $\pi^{-1}(b') < \pi^{-1}(a)$.
\end{proof}

\paragraph{Hamiltonian Path}
Using \wref{thm:BPHP} and Claim \ref{cl:strongOrdering} the problem can be solved by going in constant time from the
first $A$-vertex $a_1$ to its first $B$-neighbor $b_1 = b^-(a_1)$, then going in constant time from $b_1$ to its first $A$-neighbor 
$a_2 = a^-(b_1)$, and so on until we can no longer move. If we made $n$ moves, then we have visited all the vertices of the graph following a Hamiltonian path. 
Otherwise, we try to do the same but this time starting from the first $B$-vertex. 
Similarly, if we made $n$ moves, then the graph has a Hamiltonian path. 
If both attempts fail, the graph does not contain a Hamiltonian path. 
This algorithm works in $\Oh(n)$ time.

\paragraph{Hamiltonian Cycle}
First, we check that the number of $A$-vertices is equal to the number of $B$-vertices. 
If so, we check next if the graph contains a 
Hamiltonian path using the previous algorithm (this will ensure that $A$- and $B$-vertices alternate). 
In the case of success,
at the final stage of the algorithm, we iterate through $A$-vertices following
the strong ordering, and for every $A$-vertex $a_i$ calculate in constant time the 
vertices $b_{i,1} = b^-(a_i)$, $a_{i,2} = a^-(b_{i,1})$, $b_{i,2} = b^-(a_{i,2})$ and 
check if the vertices $a_i$ and $b_{i,2}$ are adjacent, 
(\ie, whether all the four vertices induce a cycle on four vertices),
which is equivalent to $\pi^{-1}(a_i) > \pi^{-1}(b_{i,2})$. 
\wref{thm:BPHC} and \wref[Claim]{cl:strongOrdering} imply that 
the graph contains a Hamiltonian cycle if and only if 
all stages of the algorithm were successful. 
Overall, the algorithm works in $\Oh(n)$ time.

\FloatBarrier


\section{Circular Permutation Graphs}
\label{sec:circular}
\label{app:circular}

Circular permutation graphs (CPGs) are a natural generalization of PGs
first introduced by Rotem and Urrutia~\cite{RotemUrrutia1982}.
In this section, we show how to extend our data structure to CPGs.

\subsection{Preliminaries}

CPGs results from PGs by allowing \emph{circular/cyclic} permutation diagrams, 
\ie, in the intersecting chords representation, we connect the right and left end of the 
gray ribbon to form a cylinder.
The cylinder can be smoothly transformed into two concentric circles with chords 
in the annular region between them; \wref{fig:small-circular-pg} shows an example.

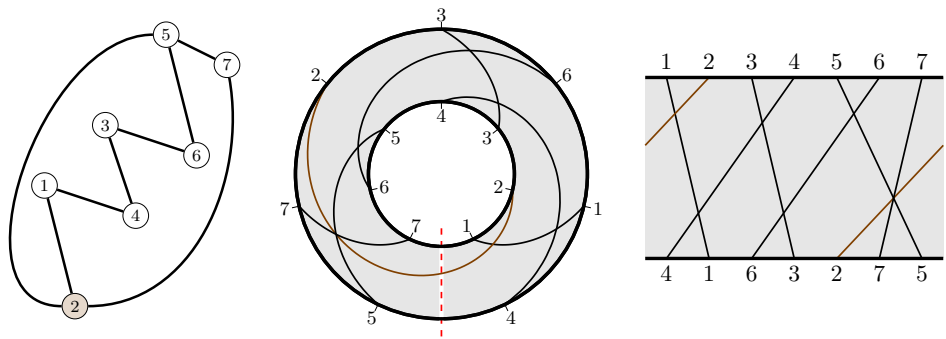
\begin{figure}[htbp]
	\centering
	\adjustbox{max width=\linewidth}{\scalebox{.8}{\kern-3em%
		\begin{tikzpicture}[
				scale=.5,
				baseline=10ex,
				point/.style = {draw,fill=white,circle,minimum size=12pt,inner sep=0pt,font=\scriptsize},
				copy point/.style = {point,fill=black!30,draw=black!80,text=black!80},
				graph edge/.style = {yellow,very thick},
			]
				\node[point] (v1) at (8,9) {1} ;
				\node[point,fill=orange!50!black!20] (v2) at (9,5) {2} ;
				\node[point] (v3) at (10,11) {3} ;
				\node[point] (v4) at (11,8) {4} ;
				\node[point] (v5) at (12,14) {5} ;
				\node[point] (v6) at (13,10) {6} ;
				\node[point] (v7) at (14,13) {7} ;
				\draw[very thick] (v5) to[in=180,out=180] (v2) ; 
				\draw[very thick] (v7) to[in=0,out=-80] (v2) ; 
				\draw[very thick] (v1) to (v2) ; 
				\draw[very thick] (v1) to (v4) ; 
				\draw[very thick] (v3) to (v4) ; 
				\draw[very thick] (v3) to (v6) ; 
				\draw[very thick] (v5) to (v6) ; 
				\draw[very thick] (v5) to (v7) ; 
		\end{tikzpicture}\quad
		\begin{tikzpicture}[
				normal/.style = {thick,black},
				forward/.style = {thick,green!50!black},
				backward/.style = {thick,orange!50!black},
			]
			\begin{scope}[xscale=.7,yscale=3,]
			\fill[black!10] (0.5,0) rectangle ++(7,-1) ;
			\draw[normal] (1,0) to (2,-1) ;
			\node[above] at (1,0) {1} ;
			\node[below] at (2,-1) {1} ;
			\draw[backward] (2,0) to (0.5,-0.375000) ;
			\draw[backward] (7+.5,-0.375000) to (5,-1) ;
			\node[above] at (2,0) {2} ;
			\node[below] at (5,-1) {2} ;
			\draw[normal] (3,0) to (4,-1) ;
			\node[above] at (3,0) {3} ;
			\node[below] at (4,-1) {3} ;
			\draw[normal] (4,0) to (1,-1) ;
			\node[above] at (4,0) {4} ;
			\node[below] at (1,-1) {4} ;
			\draw[normal] (5,0) to (7,-1) ;
			\node[above] at (5,0) {5} ;
			\node[below] at (7,-1) {5} ;
			\draw[normal] (6,0) to (3,-1) ;
			\node[above] at (6,0) {6} ;
			\node[below] at (3,-1) {6} ;
			\draw[normal] (7,0) to (6,-1) ;
			\node[above] at (7,0) {7} ;
			\node[below] at (6,-1) {7} ;
			\draw[ultra thick] (0.5,0) -- ++(7,0);
			\draw[ultra thick] (0.5,-1) -- ++(7,0);
			\end{scope}
			\begin{scope}[shift={(-3,-1.6)},scale=1.2]
				\draw[ultra thick,fill=black!10] (0,0) circle (2) ;
				\draw[ultra thick,fill=white] (0,0) circle (1) ;
				\draw[line width=2pt, white] (0,-.5) -- ++(0,-2);
				\draw[dashed,thick,red] (0,-.75) -- ++(0,-1.5);
				\foreach \x in {1,...,7} {
					\node at (\x*360/7-90-360/14:.8) {\smaller \x} ;
					\draw (\x*360/7-90-360/14:1) coordinate (i\x) -- (\x*360/7-90-360/14:.9) ;
				}
				\foreach \x/\y in {1/4,2/1,3/6,4/3,5/2,6/7,7/5} {
					\node at (\x*360/7-90-360/14:2.2) {\smaller \y} ;
					\draw (\x*360/7-90-360/14:2) coordinate (o\y) -- (\x*360/7-90-360/14:2.1) ;
				}
				\draw[normal] plot[smooth,samples=20,domain=-64.28:-12.85] (\x:{1+(\x+64.28)/(51.43)}) ;
				\draw[backward] plot[smooth,samples=40,domain=-218.57:-12.85] (\x:{2-(\x+218.57)/(205.72)}) ;
				\draw[normal] plot[smooth,samples=20,domain=38.57:90] (\x:{1+(\x-38.57)/(51.43)}) ;
				\draw[normal] plot[smooth,samples=20,domain=-64.28:90] (\x:{2-(\x+64.28)/(154.28)}) ;
				\draw[normal] plot[smooth,samples=20,domain=141.43:244.29] (\x:{1+(\x-141.43)/(102.85)}) ;
				\draw[normal] plot[smooth,samples=20,domain=192.85:38.57] (\x:{2-(\x-38.57)/(154.28)}) ;
				\draw[normal] plot[smooth,samples=20,domain=244.28:192.85] (\x:{2-(\x-192.85)/(51.43)}) ;
				\draw[ultra thick] (0,0) circle (2)   (0,0) circle (1) ;
			\end{scope}
		\end{tikzpicture}%
	}}
	
	\caption{%
		Small circular permutation graph on 7 vertices (left) that is not a standard permutation graph,
		shown as the intersection of chords
		between concentric circles (middle), and as intersections of chords 
		on a cylinder that has been cut open (note that chord 2 wraps around the cut).%
	}
	\label{fig:small-circular-pg}
\end{figure}

By cutting the annulus open again, we obtain the \emph{permutation diagram with crossings},
\ie, where some chords \emph{cross} the cut and continue from the opposite end;
(\wref{fig:small-circular-pg} right).
This induces a linear order of the endpoints on both circles 
(in counterclockwise direction starting at the cut) and hence a permutation $\pi:[n]\to[n]$ as before;
\eg, for \wref{fig:small-circular-pg}, we have $\pi=(4,1,6,3,2,7,5)$.
Note that for CPGs, though, $\pi$ no longer uniquely determines a graph
because chords between circles can wrap around the inner circle in clockwise or counterclockwise
direction and this affects intersections.
The representation becomes unique again upon adding an assignment of chord types $t:[n]\to\{N,F,B\}$ to $\pi$
with the following meaning:
$N$-chords do \underline not cross the cut at all. 
$F$-chords do cross the cut, namely in \underline forward direction, 
\ie, when following the chord from the upper endpoint to the lower endpoint, we move to the right.
Finally, $B$-chords also cross the cut, but in \underline backward direction, \ie, following the chord top down
moves us to the left.
A larger example with all types of crossings is shown in \wpref{fig:circular-example}.

Note that every PG is also a CPG (setting $t(v)=N$ for all vertices), so the lower bounds
from \wref{sec:lower-bounds} applies here as well.

\begin{remark}[Improper diagrams]
	The original definition of CPGs required the permutation diagram to be ``proper'', 
	meaning that no two chords intersect more than once.
	All our permutation diagrams are required to be proper in this sense.
	(Later works~\cite{Sritharan1996} achieved a similar effect by defining vertices adjacent
	iff their chords intersect \emph{exactly} once.)
	
	Note that monotonic/straight chords and forbidding double crossings of the cut are \emph{not} sufficient:
	not all combinations of $\pi$ and $t$ lead to a proper permutation diagram.
	Indeed, the pair $(\pi,t)$ is valid iff no pair of chords $u$, $v$ has
	one of the following forbidden combinations of crossing type and relative location:
	\begin{enumerate}
		\item $u<v$, $\pi^{-1}(u)>\pi^{-1(v)}$ (inversion), $t(v)=N$, and $t(u)=F$.
		\item $u<v$, $\pi^{-1}(u)>\pi^{-1(v)}$ (inversion), $t(v)=B$, and $t(u)=N$.
		\item $u<v$, $\pi^{-1}(u)<\pi^{-1(v)}$ (no inversion), and $N\ne t(v)\ne t(u) \ne N$.
	\end{enumerate}
	Each of these cases implies a double crossing
	and a chord length $>n$ after ``pulling one chord straight'' (by turning the two circles against each other).
\end{remark}

Sritharan~\cite{Sritharan1996} gave a linear-time algorithm for recognizing CPGs, 
which also computes the circular permutation diagram if the input is a CPG.

\subsection{Ordered CPGs and the Thrice-Unrolled PG}

In analogy to ordered PG $G_\pi$, we define the \emph{ordered CPG} $G_{\pi,t}$ 
for a (valid combination of) permutation $\pi:[n]\to[n]$ and 
chord types $t:[n]\to\{N,F,B\}$.

\begin{figure}[htbp]
	\centering
	\adjustbox{max width=\linewidth}{%
	\begin{tikzpicture}[
			scale=.4,
			point/.style = {draw,fill=white,circle,minimum size=12pt,inner sep=0pt,font=\tiny},
			copy point/.style = {point,fill=black!30,draw=black!80,text=black!80},
			graph edge/.style = {yellow,very thick},
			normal/.style = {thick,black},
			forward/.style = {thick,green!50!black},
			backward/.style = {thick,orange!50!black},
	]
		\fill[black!30,opacity=.2] (0,0) rectangle (7+.3,21+1);
		\fill[black!30,opacity=.2] (2*7+.7,0) rectangle (3*7+1,21+1);
		\fill[orange,opacity=.1] (0,0) rectangle (21+1,7+.3);
		\fill[green!80!black,opacity=.1] (0,2*7+.7) rectangle (21+1,3*7+1);
		\draw[densely dotted] (0,0) grid (22,22) ;
		\draw[->,thick] (0,0) -- (22.500000,0) ;
		\draw[->,thick] (0,0) -- (0,22.500000) ;
			\node at (1,-.5) {\tiny 1} ;
			\node at (8,-.5) {\tiny 1} ;
			\node at (15,-.5) {\tiny 1} ;
			\node at (2,-.5) {\tiny 2} ;
			\node at (9,-.5) {\tiny 2} ;
			\node at (16,-.5) {\tiny 2} ;
			\node at (3,-.5) {\tiny 3} ;
			\node at (10,-.5) {\tiny 3} ;
			\node at (17,-.5) {\tiny 3} ;
			\node at (4,-.5) {\tiny 4} ;
			\node at (11,-.5) {\tiny 4} ;
			\node at (18,-.5) {\tiny 4} ;
			\node at (5,-.5) {\tiny 5} ;
			\node at (12,-.5) {\tiny 5} ;
			\node at (19,-.5) {\tiny 5} ;
			\node at (6,-.5) {\tiny 6} ;
			\node at (13,-.5) {\tiny 6} ;
			\node at (20,-.5) {\tiny 6} ;
			\node at (7,-.5) {\tiny 7} ;
			\node at (14,-.5) {\tiny 7} ;
			\node at (21,-.5) {\tiny 7} ;
			\node at (22,-.5) {\scriptsize $v$} ;
			\node at (-.5,1) {\tiny 1} ;
			\node at (-.5,8) {\tiny 1} ;
			\node at (-.5,15) {\tiny 1} ;
			\node at (-.5,2) {\tiny 2} ;
			\node at (-.5,9) {\tiny 2} ;
			\node at (-.5,16) {\tiny 2} ;
			\node at (-.5,3) {\tiny 3} ;
			\node at (-.5,10) {\tiny 3} ;
			\node at (-.5,17) {\tiny 3} ;
			\node at (-.5,4) {\tiny 4} ;
			\node at (-.5,11) {\tiny 4} ;
			\node at (-.5,18) {\tiny 4} ;
			\node at (-.5,5) {\tiny 5} ;
			\node at (-.5,12) {\tiny 5} ;
			\node at (-.5,19) {\tiny 5} ;
			\node at (-.5,6) {\tiny 6} ;
			\node at (-.5,13) {\tiny 6} ;
			\node at (-.5,20) {\tiny 6} ;
			\node at (-.5,7) {\tiny 7} ;
			\node at (-.5,14) {\tiny 7} ;
			\node at (-.5,21) {\tiny 7} ;
			\node[anchor=east] at (0,22) {\scriptsize $\pi^{-1}(v)$} ;
		\node[point] (v1) at (8,9) {1} ;
		\node[copy point] (l1) at (1,2) {1} ;
		\node[copy point] (r1) at (15,16) {1} ;
		\node[point] (v2) at (9,5) {2} ;
		\node[copy point] (r2) at (16,12) {2} ;
		\node[point] (v3) at (10,11) {3} ;
		\node[copy point] (l3) at (3,4) {3} ;
		\node[copy point] (r3) at (17,18) {3} ;
		\node[point] (v4) at (11,8) {4} ;
		\node[copy point] (l4) at (4,1) {4} ;
		\node[copy point] (r4) at (18,15) {4} ;
		\node[point] (v5) at (12,14) {5} ;
		\node[copy point] (l5) at (5,7) {5} ;
		\node[copy point] (r5) at (19,21) {5} ;
		\node[point] (v6) at (13,10) {6} ;
		\node[copy point] (l6) at (6,3) {6} ;
		\node[copy point] (r6) at (20,17) {6} ;
		\node[point] (v7) at (14,13) {7} ;
		\node[copy point] (l7) at (7,6) {7} ;
		\node[copy point] (r7) at (21,20) {7} ;
		\begin{pgfonlayer}{background}
			\draw[graph edge] (l1) to (l4) ; 
			\draw[graph edge] (l3) to (l4) ; 
			\draw[graph edge] (l3) to (l6) ; 
			\draw[graph edge] (l5) to (l6) ; 
			\draw[graph edge] (l5) to (l7) ; 
			\draw[graph edge] (l5) to (v2) ; 
			\draw[graph edge] (l7) to (v2) ; 
			\draw[graph edge] (v1) to (v2) ; 
			\draw[graph edge] (v1) to (v4) ; 
			\draw[graph edge] (v3) to (v4) ; 
			\draw[graph edge] (v3) to (v6) ; 
			\draw[graph edge] (v5) to (v6) ; 
			\draw[graph edge] (v5) to (v7) ; 
			\draw[graph edge] (v5) to (r2) ; 
			\draw[graph edge] (v7) to (r2) ; 
			\draw[graph edge] (r1) to (r2) ; 
			\draw[graph edge] (r1) to (r4) ; 
			\draw[graph edge] (r3) to (r4) ; 
			\draw[graph edge] (r3) to (r6) ; 
			\draw[graph edge] (r5) to (r6) ; 
			\draw[graph edge] (r5) to (r7) ; 
		\end{pgfonlayer}
		\begin{scope}[shift={(-8.5,5.5)},overlay]
			\fill[fill=white,opacity=.85,rounded corners=20pt] (10,8)++(-4,-4) rectangle ++(9,11) ;
			\node[point] (v1) at (8,9) {1} ;
			\node[point,fill=orange!50!black!20] (v2) at (9,5) {2} ;
			\node[point] (v3) at (10,11) {3} ;
			\node[point] (v4) at (11,8) {4} ;
			\node[point] (v5) at (12,14) {5} ;
			\node[point] (v6) at (13,10) {6} ;
			\node[point] (v7) at (14,13) {7} ;
			\draw[very thick] (v5) to[in=180,out=180] (v2) ; 
			\draw[very thick] (v7) to[in=0,out=-80] (v2) ; 
			\draw[very thick] (v1) to (v2) ; 
			\draw[very thick] (v1) to (v4) ; 
			\draw[very thick] (v3) to (v4) ; 
			\draw[very thick] (v3) to (v6) ; 
			\draw[very thick] (v5) to (v6) ; 
			\draw[very thick] (v5) to (v7) ; 
		\end{scope}
		
		\begin{scope}[shift={(0,-2.5)}, every node/.style={font=\tiny},yscale=5]
			\fill[black!10] (0.5,0) rectangle ++(7,-1) ;
			\draw[normal] (1,0) to (2,-1) ;
			\node[above] at (1,0) {1} ;
			\node[below] at (2,-1) {1} ;
			\draw[backward] (7+.5,-0.375000) to (5,-1) ;
			\node[below] at (5,-1) {2} ;
			\draw[normal] (3,0) to (4,-1) ;
			\node[above] at (3,0) {3} ;
			\node[below] at (4,-1) {3} ;
			\draw[normal] (4,0) to (1,-1) ;
			\node[above] at (4,0) {4} ;
			\node[below] at (1,-1) {4} ;
			\draw[normal] (5,0) to (7,-1) ;
			\node[above] at (5,0) {5} ;
			\node[below] at (7,-1) {5} ;
			\draw[normal] (6,0) to (3,-1) ;
			\node[above] at (6,0) {6} ;
			\node[below] at (3,-1) {6} ;
			\draw[normal] (7,0) to (6,-1) ;
			\node[above] at (7,0) {7} ;
			\node[below] at (6,-1) {7} ;
			\draw[ultra thick] (0.5,0) -- ++(7,0);
			\draw[ultra thick] (0.5,-1) -- ++(7,0);
		\end{scope}
		\begin{scope}[shift={(14,-2.5)}, every node/.style={font=\tiny},yscale=5]
			\fill[black!10] (0.5,0) rectangle ++(7,-1) ;
			\draw[normal] (1,0) to (2,-1) ;
			\node[above] at (1,0) {1} ;
			\node[below] at (2,-1) {1} ;
			\draw[backward] (2,0) to (0.5,-0.375000) ;
			\node[above] at (2,0) {2} ;
			\draw[normal] (3,0) to (4,-1) ;
			\node[above] at (3,0) {3} ;
			\node[below] at (4,-1) {3} ;
			\draw[normal] (4,0) to (1,-1) ;
			\node[above] at (4,0) {4} ;
			\node[below] at (1,-1) {4} ;
			\draw[normal] (5,0) to (7,-1) ;
			\node[above] at (5,0) {5} ;
			\node[below] at (7,-1) {5} ;
			\draw[normal] (6,0) to (3,-1) ;
			\node[above] at (6,0) {6} ;
			\node[below] at (3,-1) {6} ;
			\draw[normal] (7,0) to (6,-1) ;
			\node[above] at (7,0) {7} ;
			\node[below] at (6,-1) {7} ;
			\draw[ultra thick] (0.5,0) -- ++(7,0);
			\draw[ultra thick] (0.5,-1) -- ++(7,0);
		\end{scope}
		\begin{scope}[shift={(7,-2.5)}, every node/.style={font=\tiny},yscale=5]
			\fill[black!10] (0.5,0) rectangle ++(7,-1) ;
			\draw[normal] (1,0) to (2,-1) ;
			\node[above] at (1,0) {1} ;
			\node[below] at (2,-1) {1} ;
			\draw[backward] (2,0) to (0.5,-0.375000) ;
			\draw[backward] (7+.5,-0.375000) to (5,-1) ;
			\node[above] at (2,0) {2} ;
			\node[below] at (5,-1) {2} ;
			\draw[normal] (3,0) to (4,-1) ;
			\node[above] at (3,0) {3} ;
			\node[below] at (4,-1) {3} ;
			\draw[normal] (4,0) to (1,-1) ;
			\node[above] at (4,0) {4} ;
			\node[below] at (1,-1) {4} ;
			\draw[normal] (5,0) to (7,-1) ;
			\node[above] at (5,0) {5} ;
			\node[below] at (7,-1) {5} ;
			\draw[normal] (6,0) to (3,-1) ;
			\node[above] at (6,0) {6} ;
			\node[below] at (3,-1) {6} ;
			\draw[normal] (7,0) to (6,-1) ;
			\node[above] at (7,0) {7} ;
			\node[below] at (6,-1) {7} ;
			\draw[ultra thick] (0.5,0) -- ++(7,0);
			\draw[ultra thick] (0.5,-1) -- ++(7,0);
			\draw[line width=1pt, white] (0.5,0.1) -- ++(0,-1.2);
			\draw[thin,dashed,red] (0.5,0.1) -- ++(0,-1.2);
			\draw[line width=1pt, white] (7.5,0.1) -- ++(0,-1.2);
			\draw[thin,dashed,red] (7.5,0.1) -- ++(0,-1.2);
		\end{scope}
	\end{tikzpicture}%
	}
	
	\caption{%
		The circular permutation graph from \wref{fig:small-circular-pg} and its thrice-unrolled PG $G_3$ as
		a permutation diagram and in the grid representation.%
	}
	\label{fig:cpg-unrolled}
\end{figure}
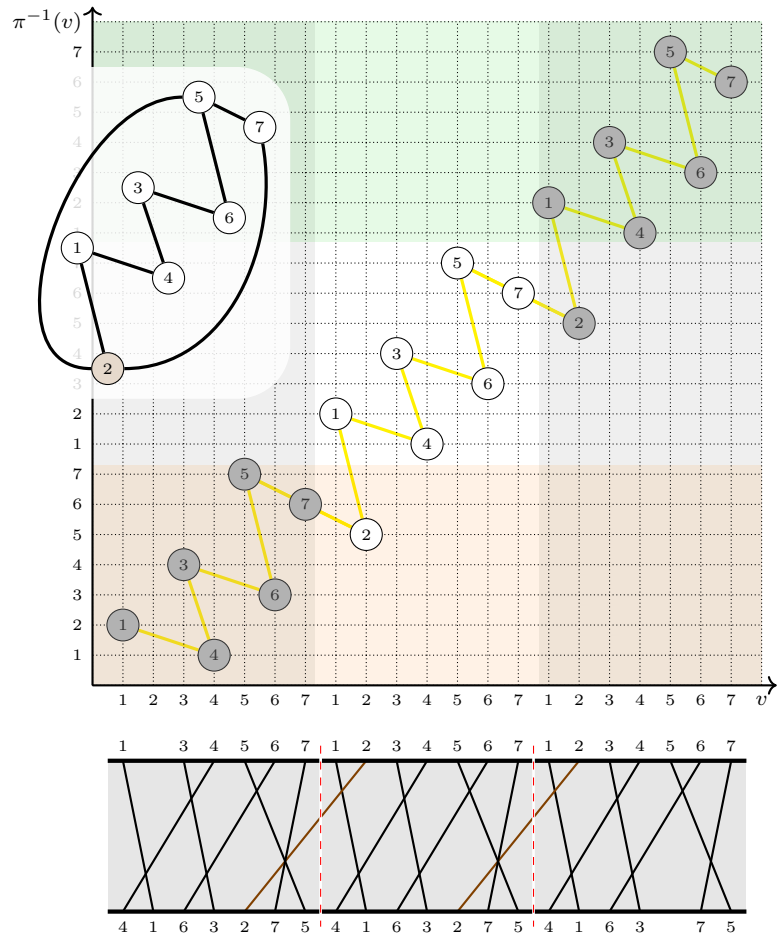

From now on, we assume such a graph $G_{\pi,t}$ is given.
In preparation of our succinct data structure for CPGs,
we again define a planar point set based on which we support all queries:
\begin{align*}
		P(\pi,t)
	&\wwrel=\phantom{\bin\cup}
		\bigl\{(v+k n,\,\pi^{-1}(v)+k n) : v \in [n], k\in\{0,1,2\}, t(v) = N\bigr\} 
\\	&\wwrel\ppe \bin\cup 
		\bigl\{(v+k n,\,\pi^{-1}(v)+(k+1) n) : v \in [n], k\in\{0,1\}, t(v) = F\bigr\} 
\\	&\wwrel\ppe \bin\cup 
		\bigl\{(v+k n,\,\pi^{-1}(v)+(k-1) n) : v \in [n], k\in\{1,2\}, t(v) = B\bigr\} 
\end{align*}
$P(\pi,t)$ lies in a $3n\times 3n$ grid and $2n\le |P(\pi,t)|\le 3n$. 
Intuitively, $P(\pi,t)$ is obtained by
\emph{unrolling} the circular permutation diagram of $G_{\pi,t}$
\emph{three times}: We record the times at which we see a chord's endpoints
during this unrolling process and output a point for these times.
We only output chords when we have seen both endpoints during this process, 
so each noncrossing chord is output three times, whereas the crossing chords
are only present twice.
See \wref{fig:cpg-unrolled} for an example.

Clearly, $P(\pi,t)$ corresponds to the grid representation of a (larger) 
PG, denoted by $G_3 = G_3(\pi,t)$, which ``contains'' $G_{\pi,t}$
in the sense detailed in \wref{lem:cpg-neighborhood} below.
Denote the vertices in $G_3$ by $\ell_j$, $c_j$, and $r_j$, $j\in[n]$, 
respectively, 
where $\ell_j$ is the vertex corresponding to point $(x,y)$ with $x=j \in[n]$,
$c_j$ is the vertex for $(x,y)$ with $x=j+n \in (n..2n]$ and
$r_j$ is the vertex for $(x,y)$ with $x=j+2n\in(2n..3n]$.
(Note that in general not all $\ell_j$ (resp.\ $r_j$) will be present.)
We call $c_v$ the main copy of vertex $v$ in $G_{\pi,t}$, 
and $\ell_v$ and $r_v$ are the left (resp.\ right) copies of $v$.

\begin{lemma}[Neighborhood from $G_3$]
\label{lem:cpg-neighborhood}
	Let $v$ be a vertex in $G_{\pi,t}$ and $c_v$ its main copy in $G_3(\pi,t)$.
	Then $v$'s neighbors (in $G_{\pi,t}$) can be deduced from $c_v$'s neighbors in in $G_3(\pi,t)$ as follows:
	\begin{align*}
			N^-(v)
		&\wwrel=
			\{w : c_w \in N^-(c_v)\vee \ell_w \in N^-(c_v) \},
	\\
			N^+(v)
		&\wwrel=
			\{w : c_w \in N^+(c_v)\vee r_w \in N^+(c_v) \}.
	\end{align*}
\end{lemma}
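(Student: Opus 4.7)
The plan is to derive both equalities from the geometric correspondence between crossings on the cylinder (adjacency in $G_{\pi,t}$) and inversions in the thrice-unrolled permutation diagram (adjacency in $G_3$). Essentially, $G_3$ is designed precisely so that the adjacency around the main copy $c_v$ already captures all of $v$'s neighbors.

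First I would verify that two chords $v,w$ of $G_{\pi,t}$ cross on the cylinder if and only if some copy of $v$ in $G_3$ is adjacent to some copy of $w$. This amounts to unrolling the cylinder three times and checking, for each combination of chord types in $\{N,F,B\}$, that the shifts encoded in $P(\pi,t)$ realise all possible relative positions in which the two chords can cross on the cylinder (including the case where one chord wraps across the cut in one direction and the other in the opposite direction). Several type combinations are ruled out by the validity conditions on $(\pi,t)$ stated in the preliminary remarks on CPGs, which trims the case work considerably.

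Next I would argue that it already suffices to consider $c_v$ on the side of $v$. The endpoints of $c_v$ sit at positions in $(n,2n]$ on both axes, and any chord on the cylinder has horizontal/vertical span at most $n$; hence for every cylinder-neighbor $w$ of $v$, at least one copy of $w$ has both endpoints in $[1,3n]$ and crosses $c_v$ in the linear diagram of $G_3$. Classifying this copy by its $x$-coordinate: if it lies to the left of $c_v$ (in $[1,v+n)$), it is either a main copy $c_w$ with $w<v$ or a left copy $\ell_w$; if it lies to the right, it is either $c_w$ with $w>v$ or a right copy $r_w$. Attributing these respectively to $N^-(v)$ and $N^+(v)$ yields the inclusion $\supseteq$ in both equalities.

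The reverse inclusion $\subseteq$ is immediate: any copy of $w$ adjacent to $c_v$ in $G_3$ is, by the correspondence established in the first step, a witness that $v$ and $w$ cross on the cylinder, hence $w\in N(v)$, and its side relative to $c_v$ determines whether it contributes to $N^-(v)$ or $N^+(v)$. The main obstacle I anticipate is the case analysis of the first step, which requires disentangling the nine chord-type combinations and checking that the specific $k$-shifts in the definition of $P(\pi,t)$ indeed realise every cylinder-crossing as a pair of copies whose points form an inversion; once this bookkeeping is done, the rest of the proof follows from the geometric picture with no further obstruction.
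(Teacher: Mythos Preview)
Your proposal is correct and would succeed, but it takes a longer route than the paper. You set up a case analysis over the nine chord-type combinations to establish the ``crossing on the cylinder $\Leftrightarrow$ some pair of copies forms an inversion in $G_3$'' correspondence, and then a separate geometric span argument to reduce to the main copy $c_v$. The paper bypasses both steps with a single structural observation: the permutation diagram of $G_3$, when restricted to its middle third, is \emph{literally} a copy of the permutation-diagram-with-crossings of $G_{\pi,t}$ (see \wref{fig:cpg-unrolled}). Hence every crossing in $G_{\pi,t}$ is witnessed by $c_v$ and some copy of the other endpoint living in that middle band, with no type-by-type bookkeeping needed. The soundness direction (edges in $G_3$ between copies of $u$ and $v$ imply an edge $\{u,v\}$ in $G_{\pi,t}$) is dispatched in one line ``by construction''. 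Your approach buys nothing extra here; the paper's observation is strictly simpler and worth adopting, since the case analysis you flag as the ``main obstacle'' turns out to be avoidable altogether.
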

\begin{proof}
	First note that by construction, any edge in $G_3$ between copies of $u$ and $v$ in $G_3$ 
	(\ie, any edge between $\ell_u$, $c_u$, $r_u$, resp.\ $\ell_v$, $c_v$, $r_v$)
	implies an edge in $G$ between $u$ and $v$.
	Hence we never report non-neighbors in the set for $N^-(v)$ and $N^+(v)$ above.
	Moreover, for any combination of $\ell$, $c$, $r$ where both copies of $u$ and $v$ exist,
	these copies are adjacent in~$G_3$.
	It remains to show that any edge in $G$ is witnessed by at least one pair of copies.
	For that, consider the permutation diagram of $G_3$ and note that
	it contains a complete copy of the permutation diagram with crossings of $G$ 
	in its middle third (see \wref{fig:cpg-unrolled}), 
	so every neighbor of $v$ in $G$ can be witnessed from $c_v$ in $G_3$.
\end{proof}

\begin{remark}[Thrice or twice?]
	It follows directly from the definition of a proper permutation diagram 
	that the upper endpoints of all backward-crossing chords must 
	precede all upper endpoints of forward-crossing chords,
	and vice versa for lower endpoints.
	As a consequence, we can remove further copies from $G_3$ without affecting \wref{lem:cpg-neighborhood};
	 one can show that at most two copies of every noncrossing chord are always sufficient.
	Since the size of $G_3$ will only affect lower-order terms of space,
	we omit this optimization here for ease of presentation.
\end{remark}

\subsection{Succinct CPGs}
With this preparation, we can now describe our succinct data structure for CPGs.
Conceptually, we store our succinct PG data structure for $G_3$ and reduce the queries to it.
For the space-dominant part, \ie, the inverse permutation $\pi^{-1}$,
we store it implicitly, exploiting the special structure of $G_3$.

\begin{theorem}[Succinct CPGs]
\label{thm:succinct-cpg}
	An (unlabeled) circular permutation graph on $n$ vertices can be represented
	using $n \lg n + \Oh(n)$ bits of space while supporting
	\GAdjacent, \GDistance, \GSPathFirst in $\Oh(1)$ time and
	$\GNeighbor(v)$ and $\GDegree(v)$ in $\Oh(\GDegree(v)+1)$ time.
\end{theorem}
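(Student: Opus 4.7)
The plan is to represent $G_{\pi,t}$ \emph{implicitly} as the thrice-unrolled permutation graph $G_3 = G_3(\pi,t)$ of \wref{lem:cpg-neighborhood}, storing only $\pi^{-1}$ and the chord types $t$ explicitly and simulating on demand the inverse permutation $\Pi_3$ of $G_3$ (which, stored naively, would cost $\Theta(n\log n)$ extra bits and blow up our space).

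First, I would store the array $\Pi[1..n]$ with $\Pi[v] = \pi^{-1}(v)$ in $n\lceil\lg n\rceil \le n\lg n + n$ bits, together with $t\colon [n]\to\{N,F,B\}$ as a ternary string with rank and select support, occupying $\lceil n \lg 3\rceil + o(n) = O(n)$ bits. From these two pieces one can determine in $O(1)$ time, for any $j\in[3n]$, whether $j$ is the $x$-coordinate of a point in $P(\pi,t)$ (\ie, whether the corresponding $\ell$-, $c$-, or $r$-copy exists in $G_3$), and if so its $y$-coordinate, by locating $v = ((j-1) \bmod n)+1$, reading $\Pi[v]$, consulting $t(v)$, and shifting by the appropriate multiple of $n$ according to the definition of $P(\pi,t)$. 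This supplies a virtual array $\Pi_3[1..3n]$ that plays the role of $\Pi$ in the array-based construction of \wref{thm:main-dist-oracle}.

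On top of $\Pi_3$ I would build the auxiliary structures used there: the range-minimum/maximum index over $\Pi_3$ (with each lookup simulated through $\Pi$ and $t$), the four bitvectors $\mathtt A_x, \mathtt B_x, \mathtt A_y, \mathtt B_y$ of length $3n$ for $G_3$ equipped with rank and select, and the succinct proper-interval distance oracles of \wref{lem:proper-interval-graphs} for $G_A^{(3)}$ and $G_B^{(3)}$. Because $|V(G_3)|\le 3n$, all of these contribute only $O(n)$ bits, giving the total space $n\lg n + O(n)$. Each query on $G_{\pi,t}$ is then reduced to one on $G_3$ via \wref{lem:cpg-neighborhood}: $\GAdjacent(u,v)$ performs a constant number of adjacency tests between the main copy $c_u$ and the at-most-three copies of $v$; $\GNeighbor(v)$ iterates through $N_{G_3}(c_v)$ and maps each returned copy $\ell_w$, $c_w$, or $r_w$ back to $w$ (deduplicating consecutive repeats), which takes $O(\GDegree(v)+1)$ time since every original vertex has at most three copies in $G_3$; $\GDegree$ is computed by the same iteration; and for distance and routing I would use that the quotient map $G_3 \to G_{\pi,t}$ collapsing each copy to its original is a graph homomorphism sending $c_u \mapsto u$, so $\GDistance_{G_{\pi,t}}(u,v) = \min_c \GDistance_{G_3}(c_u, c)$ over the existing copies $c\in\{\ell_v, c_v, r_v\}$, answerable in $O(1)$; $\GSPathFirst$ returns the projection of the corresponding $G_3$-successor.

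The main obstacle is the $\ge$ direction of the distance reduction: any shortest $u$-$v$ walk in $G_{\pi,t}$ must admit a lift to a walk in $G_3$ starting at $c_u$ and ending at some copy of $v$, of the same length. Since the projection is edge-preserving, one can attempt to lift edge by edge; and because $c_u$ sits in the middle third $[n{+}1..2n]$ of $G_3$ while the diameter of any connected PG is at most $n-1$, such a lift of length $d<n$ cannot drift outside the $3n$-wide grid, so $G_3$ is ``wide enough'' for every shortest path to be realised within it. Once this covering-space-style argument is in place, all claimed running times follow directly from plugging the virtual $\Pi_3$ into \wref{thm:main-dist-oracle} together with \wref{lem:proper-interval-graphs}.
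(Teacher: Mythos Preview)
Your approach is essentially identical to the paper's: store $\pi^{-1}$ and $t$ explicitly, simulate on demand the virtual array $\Pi_3$ of the thrice-unrolled PG $G_3$, and layer the array-based PG machinery from \wref{thm:main-dist-oracle} (the RMQ indices, the $A/B$ bitvectors, and the proper-interval distance oracles for $G_A$ and $G_B$) on top of it, with the same $n\lg n + \Oh(n)$ space accounting.

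Two differences are worth noting. First, the paper takes $\GDistance$ as the minimum over all \emph{nine} pairs of copies of $u$ and $v$, whereas you fix $c_u$ and minimise only over the (at most) three copies of $v$. Second~--- and more substantively~--- you explicitly flag the $\ge$ direction of the distance reduction (lifting shortest paths from $G_{\pi,t}$ into $G_3$) as the crux; the paper only argues the easy direction (edges in $G_3$ project to edges in $G$) and then simply asserts the formula. Your instinct here is correct, but your drift bound as written does not go through: a single edge in the unrolled graph can shift the $x$-coordinate by up to $n-1$, so ``length $d<n$'' alone does not confine the lift to a $3n$-wide window~--- the cumulative drift could a priori be $\Theta(n^2)$. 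A correct argument must control the \emph{copy index} along the lift rather than the $x$-coordinate; with your three-copy minimum you in fact need the stronger statement that some shortest path lifts from $c_u$ specifically to a copy of $v$ with index in $\{0,1,2\}$, which is tighter than what the paper's nine-pair version would require.

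One small further point: in your $\GNeighbor$ reduction, ``deduplicating consecutive repeats'' is not sufficient, since the copies $\ell_w$ and $c_w$ of a neighbour $w$ have $x$-coordinates $n$ apart and need not be reported consecutively by the RMQ-driven enumeration of $N_{G_3}(c_v)$. This does not affect the $\Oh(\GDegree(v)+1)$ time bound (each neighbour has at most three copies), only the correctness of the output as a set.
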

As always, we can add constant-time degree support at the expense of another 
$n\lceil \lg n\rceil$ bits of space.

We are now ready to give the proof of \wref{thm:succinct-cpg}.
Let a valid pair $(\pi,t)$ be given and consider $G_{\pi,t}$.
As for PGs, we store the array $\Pi[1..n]$ with $\Pi[i] = \pi^{-1}(i)$;
additionally, we store the sequence $t = t(1),\ldots,t(n)$ over alphabet $\{N,F,B\}$
for constant-time access; (two bitvectors suffice for the claimed space).

For the operations, we will show how to simulate access to the grid representation
of $G_3$; 
the reader will find it useful to consult the larger example CPG 
in \wref{fig:circular-example} when following the description.

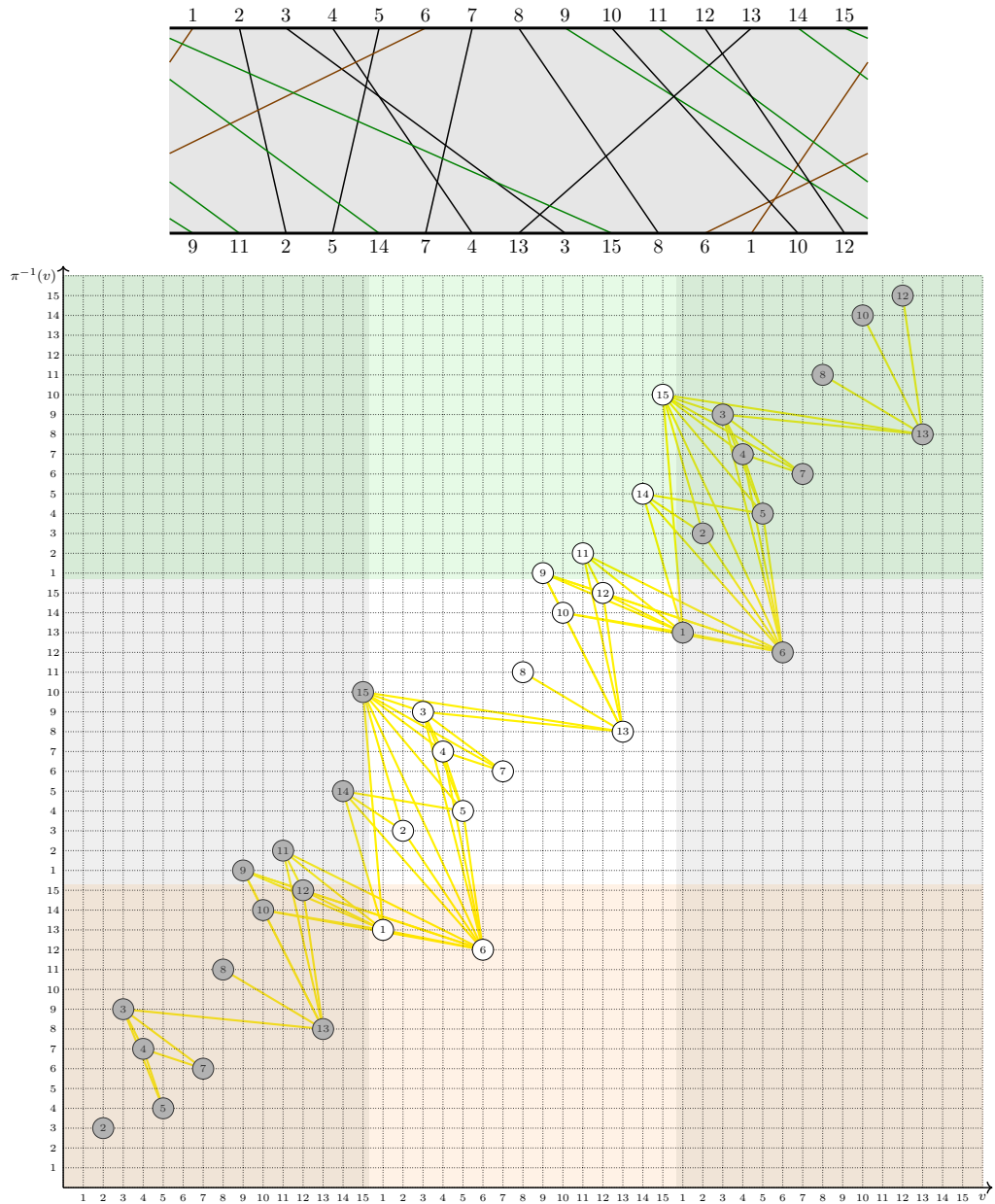
\begin{figure}[thbp]
	\ifspringer{\centering\begin{adjustbox}{varwidth=\linewidth,scale=1.1}}{}%
	{\plaincenter{\scalebox{.7}{%
	\qquad\begin{tikzpicture}[
			xscale=.9,
			yscale=4,
			normal/.style = {thick,black},
			forward/.style = {thick,green!50!black},
			backward/.style = {thick,orange!50!black},
		]
		\fill[black!10] (0.5,0) rectangle ++(15,-1) ;
		\draw[backward] (1,0) to (0.5,-0.166667) ;
		\draw[backward] (15+.5,-0.166667) to (13,-1) ;
		\node[above] at (1,0) {1} ;
		\node[below] at (13,-1) {1} ;
		\draw[normal] (2,0) to (3,-1) ;
		\node[above] at (2,0) {2} ;
		\node[below] at (3,-1) {2} ;
		\draw[normal] (3,0) to (9,-1) ;
		\node[above] at (3,0) {3} ;
		\node[below] at (9,-1) {3} ;
		\draw[normal] (4,0) to (7,-1) ;
		\node[above] at (4,0) {4} ;
		\node[below] at (7,-1) {4} ;
		\draw[normal] (5,0) to (4,-1) ;
		\node[above] at (5,0) {5} ;
		\node[below] at (4,-1) {5} ;
		\draw[backward] (6,0) to (0.5,-0.611111) ;
		\draw[backward] (15+.5,-0.611111) to (12,-1) ;
		\node[above] at (6,0) {6} ;
		\node[below] at (12,-1) {6} ;
		\draw[normal] (7,0) to (6,-1) ;
		\node[above] at (7,0) {7} ;
		\node[below] at (6,-1) {7} ;
		\draw[normal] (8,0) to (11,-1) ;
		\node[above] at (8,0) {8} ;
		\node[below] at (11,-1) {8} ;
		\draw[forward] (9,0) to (15+.5,-0.928571) ;
		\draw[forward] (0.5,-0.928571) to (1,-1) ;
		\node[above] at (9,0) {9} ;
		\node[below] at (1,-1) {9} ;
		\draw[normal] (10,0) to (14,-1) ;
		\node[above] at (10,0) {10} ;
		\node[below] at (14,-1) {10} ;
		\draw[forward] (11,0) to (15+.5,-0.750000) ;
		\draw[forward] (0.5,-0.750000) to (2,-1) ;
		\node[above] at (11,0) {11} ;
		\node[below] at (2,-1) {11} ;
		\draw[normal] (12,0) to (15,-1) ;
		\node[above] at (12,0) {12} ;
		\node[below] at (15,-1) {12} ;
		\draw[normal] (13,0) to (8,-1) ;
		\node[above] at (13,0) {13} ;
		\node[below] at (8,-1) {13} ;
		\draw[forward] (14,0) to (15+.5,-0.250000) ;
		\draw[forward] (0.5,-0.250000) to (5,-1) ;
		\node[above] at (14,0) {14} ;
		\node[below] at (5,-1) {14} ;
		\draw[forward] (15,0) to (15+.5,-0.050000) ;
		\draw[forward] (0.5,-0.050000) to (10,-1) ;
		\node[above] at (15,0) {15} ;
		\node[below] at (10,-1) {15} ;
		\draw[ultra thick] (0.5,0) -- ++(15,0);
		\draw[ultra thick] (0.5,-1) -- ++(15,0);
	\end{tikzpicture}%
	}}
	
	\plaincenter{\adjustbox{max width=\linewidth,scale=.85}{%
	\begin{tikzpicture}[
			scale=.4,
			point/.style = {draw,fill=white,circle,minimum size=12pt,inner sep=0pt,font=\tiny},
			copy point/.style = {point,fill=black!30,draw=black!80,text=black!80},
			graph edge/.style = {yellow,very thick},
		]
		\fill[black!30,opacity=.2] (0,0) rectangle (15+.3,45+1);
		\fill[black!30,opacity=.2] (2*15+.7,0) rectangle (3*15+1,45+1);
		\fill[orange,opacity=.1] (0,0) rectangle (45+1,15+.3);
		\fill[green!80!black,opacity=.1] (0,2*15+.7) rectangle (45+1,3*15+1);
		\draw[densely dotted] (0,0) grid (46,46) ;
		\draw[->,thick] (0,0) -- (46.500000,0) ;
		\draw[->,thick] (0,0) -- (0,46.500000) ;
			\node at (1,-.5) {\tiny 1} ;
			\node at (16,-.5) {\tiny 1} ;
			\node at (31,-.5) {\tiny 1} ;
			\node at (2,-.5) {\tiny 2} ;
			\node at (17,-.5) {\tiny 2} ;
			\node at (32,-.5) {\tiny 2} ;
			\node at (3,-.5) {\tiny 3} ;
			\node at (18,-.5) {\tiny 3} ;
			\node at (33,-.5) {\tiny 3} ;
			\node at (4,-.5) {\tiny 4} ;
			\node at (19,-.5) {\tiny 4} ;
			\node at (34,-.5) {\tiny 4} ;
			\node at (5,-.5) {\tiny 5} ;
			\node at (20,-.5) {\tiny 5} ;
			\node at (35,-.5) {\tiny 5} ;
			\node at (6,-.5) {\tiny 6} ;
			\node at (21,-.5) {\tiny 6} ;
			\node at (36,-.5) {\tiny 6} ;
			\node at (7,-.5) {\tiny 7} ;
			\node at (22,-.5) {\tiny 7} ;
			\node at (37,-.5) {\tiny 7} ;
			\node at (8,-.5) {\tiny 8} ;
			\node at (23,-.5) {\tiny 8} ;
			\node at (38,-.5) {\tiny 8} ;
			\node at (9,-.5) {\tiny 9} ;
			\node at (24,-.5) {\tiny 9} ;
			\node at (39,-.5) {\tiny 9} ;
			\node at (10,-.5) {\tiny 10} ;
			\node at (25,-.5) {\tiny 10} ;
			\node at (40,-.5) {\tiny 10} ;
			\node at (11,-.5) {\tiny 11} ;
			\node at (26,-.5) {\tiny 11} ;
			\node at (41,-.5) {\tiny 11} ;
			\node at (12,-.5) {\tiny 12} ;
			\node at (27,-.5) {\tiny 12} ;
			\node at (42,-.5) {\tiny 12} ;
			\node at (13,-.5) {\tiny 13} ;
			\node at (28,-.5) {\tiny 13} ;
			\node at (43,-.5) {\tiny 13} ;
			\node at (14,-.5) {\tiny 14} ;
			\node at (29,-.5) {\tiny 14} ;
			\node at (44,-.5) {\tiny 14} ;
			\node at (15,-.5) {\tiny 15} ;
			\node at (30,-.5) {\tiny 15} ;
			\node at (45,-.5) {\tiny 15} ;
			\node at (46,-.5) {\scriptsize $v$} ;
			\node at (-.5,1) {\tiny 1} ;
			\node at (-.5,16) {\tiny 1} ;
			\node at (-.5,31) {\tiny 1} ;
			\node at (-.5,2) {\tiny 2} ;
			\node at (-.5,17) {\tiny 2} ;
			\node at (-.5,32) {\tiny 2} ;
			\node at (-.5,3) {\tiny 3} ;
			\node at (-.5,18) {\tiny 3} ;
			\node at (-.5,33) {\tiny 3} ;
			\node at (-.5,4) {\tiny 4} ;
			\node at (-.5,19) {\tiny 4} ;
			\node at (-.5,34) {\tiny 4} ;
			\node at (-.5,5) {\tiny 5} ;
			\node at (-.5,20) {\tiny 5} ;
			\node at (-.5,35) {\tiny 5} ;
			\node at (-.5,6) {\tiny 6} ;
			\node at (-.5,21) {\tiny 6} ;
			\node at (-.5,36) {\tiny 6} ;
			\node at (-.5,7) {\tiny 7} ;
			\node at (-.5,22) {\tiny 7} ;
			\node at (-.5,37) {\tiny 7} ;
			\node at (-.5,8) {\tiny 8} ;
			\node at (-.5,23) {\tiny 8} ;
			\node at (-.5,38) {\tiny 8} ;
			\node at (-.5,9) {\tiny 9} ;
			\node at (-.5,24) {\tiny 9} ;
			\node at (-.5,39) {\tiny 9} ;
			\node at (-.5,10) {\tiny 10} ;
			\node at (-.5,25) {\tiny 10} ;
			\node at (-.5,40) {\tiny 10} ;
			\node at (-.5,11) {\tiny 11} ;
			\node at (-.5,26) {\tiny 11} ;
			\node at (-.5,41) {\tiny 11} ;
			\node at (-.5,12) {\tiny 12} ;
			\node at (-.5,27) {\tiny 12} ;
			\node at (-.5,42) {\tiny 12} ;
			\node at (-.5,13) {\tiny 13} ;
			\node at (-.5,28) {\tiny 13} ;
			\node at (-.5,43) {\tiny 13} ;
			\node at (-.5,14) {\tiny 14} ;
			\node at (-.5,29) {\tiny 14} ;
			\node at (-.5,44) {\tiny 14} ;
			\node at (-.5,15) {\tiny 15} ;
			\node at (-.5,30) {\tiny 15} ;
			\node at (-.5,45) {\tiny 15} ;
			\node[anchor=east] at (0,46) {\scriptsize $\pi^{-1}(v)$} ;
		\node[point] (v1) at (16,13) {1} ;
		\node[copy point] (r1) at (31,28) {1} ;
		\node[point] (v2) at (17,18) {2} ;
		\node[copy point] (l2) at (2,3) {2} ;
		\node[copy point] (r2) at (32,33) {2} ;
		\node[point] (v3) at (18,24) {3} ;
		\node[copy point] (l3) at (3,9) {3} ;
		\node[copy point] (r3) at (33,39) {3} ;
		\node[point] (v4) at (19,22) {4} ;
		\node[copy point] (l4) at (4,7) {4} ;
		\node[copy point] (r4) at (34,37) {4} ;
		\node[point] (v5) at (20,19) {5} ;
		\node[copy point] (l5) at (5,4) {5} ;
		\node[copy point] (r5) at (35,34) {5} ;
		\node[point] (v6) at (21,12) {6} ;
		\node[copy point] (r6) at (36,27) {6} ;
		\node[point] (v7) at (22,21) {7} ;
		\node[copy point] (l7) at (7,6) {7} ;
		\node[copy point] (r7) at (37,36) {7} ;
		\node[point] (v8) at (23,26) {8} ;
		\node[copy point] (l8) at (8,11) {8} ;
		\node[copy point] (r8) at (38,41) {8} ;
		\node[point] (v9) at (24,31) {9} ;
		\node[copy point] (l9) at (9,16) {9} ;
		\node[point] (v10) at (25,29) {10} ;
		\node[copy point] (l10) at (10,14) {10} ;
		\node[copy point] (r10) at (40,44) {10} ;
		\node[point] (v11) at (26,32) {11} ;
		\node[copy point] (l11) at (11,17) {11} ;
		\node[point] (v12) at (27,30) {12} ;
		\node[copy point] (l12) at (12,15) {12} ;
		\node[copy point] (r12) at (42,45) {12} ;
		\node[point] (v13) at (28,23) {13} ;
		\node[copy point] (l13) at (13,8) {13} ;
		\node[copy point] (r13) at (43,38) {13} ;
		\node[point] (v14) at (29,35) {14} ;
		\node[copy point] (l14) at (14,20) {14} ;
		\node[point] (v15) at (30,40) {15} ;
		\node[copy point] (l15) at (15,25) {15} ;
		\begin{pgfonlayer}{background}
			\draw[graph edge] (l3) to (l4) ; 
			\draw[graph edge] (l3) to (l5) ; 
			\draw[graph edge] (l3) to (l7) ; 
			\draw[graph edge] (l3) to (l13) ; 
			\draw[graph edge] (l4) to (l5) ; 
			\draw[graph edge] (l4) to (l7) ; 
			\draw[graph edge] (l8) to (l13) ; 
			\draw[graph edge] (l9) to (l10) ; 
			\draw[graph edge] (l9) to (l12) ; 
			\draw[graph edge] (l9) to (l13) ; 
			\draw[graph edge] (l9) to (v1) ; 
			\draw[graph edge] (l9) to (v6) ; 
			\draw[graph edge] (l10) to (l13) ; 
			\draw[graph edge] (l10) to (v1) ; 
			\draw[graph edge] (l10) to (v6) ; 
			\draw[graph edge] (l11) to (l12) ; 
			\draw[graph edge] (l11) to (l13) ; 
			\draw[graph edge] (l11) to (v1) ; 
			\draw[graph edge] (l11) to (v6) ; 
			\draw[graph edge] (l12) to (l13) ; 
			\draw[graph edge] (l12) to (v1) ; 
			\draw[graph edge] (l12) to (v6) ; 
			\draw[graph edge] (l14) to (v1) ; 
			\draw[graph edge] (l14) to (v2) ; 
			\draw[graph edge] (l14) to (v5) ; 
			\draw[graph edge] (l14) to (v6) ; 
			\draw[graph edge] (l15) to (v1) ; 
			\draw[graph edge] (l15) to (v2) ; 
			\draw[graph edge] (l15) to (v3) ; 
			\draw[graph edge] (l15) to (v4) ; 
			\draw[graph edge] (l15) to (v5) ; 
			\draw[graph edge] (l15) to (v6) ; 
			\draw[graph edge] (l15) to (v7) ; 
			\draw[graph edge] (l15) to (v13) ; 
			\draw[graph edge] (v1) to (v6) ; 
			\draw[graph edge] (v2) to (v6) ; 
			\draw[graph edge] (v3) to (v4) ; 
			\draw[graph edge] (v3) to (v5) ; 
			\draw[graph edge] (v3) to (v6) ; 
			\draw[graph edge] (v3) to (v7) ; 
			\draw[graph edge] (v3) to (v13) ; 
			\draw[graph edge] (v4) to (v5) ; 
			\draw[graph edge] (v4) to (v6) ; 
			\draw[graph edge] (v4) to (v7) ; 
			\draw[graph edge] (v5) to (v6) ; 
			\draw[graph edge] (v8) to (v13) ; 
			\draw[graph edge] (v9) to (v10) ; 
			\draw[graph edge] (v9) to (v12) ; 
			\draw[graph edge] (v9) to (v13) ; 
			\draw[graph edge] (v9) to (r1) ; 
			\draw[graph edge] (v9) to (r6) ; 
			\draw[graph edge] (v10) to (v13) ; 
			\draw[graph edge] (v10) to (r1) ; 
			\draw[graph edge] (v10) to (r6) ; 
			\draw[graph edge] (v11) to (v12) ; 
			\draw[graph edge] (v11) to (v13) ; 
			\draw[graph edge] (v11) to (r1) ; 
			\draw[graph edge] (v11) to (r6) ; 
			\draw[graph edge] (v12) to (v13) ; 
			\draw[graph edge] (v12) to (r1) ; 
			\draw[graph edge] (v12) to (r6) ; 
			\draw[graph edge] (v14) to (r1) ; 
			\draw[graph edge] (v14) to (r2) ; 
			\draw[graph edge] (v14) to (r5) ; 
			\draw[graph edge] (v14) to (r6) ; 
			\draw[graph edge] (v15) to (r1) ; 
			\draw[graph edge] (v15) to (r2) ; 
			\draw[graph edge] (v15) to (r3) ; 
			\draw[graph edge] (v15) to (r4) ; 
			\draw[graph edge] (v15) to (r5) ; 
			\draw[graph edge] (v15) to (r6) ; 
			\draw[graph edge] (v15) to (r7) ; 
			\draw[graph edge] (v15) to (r13) ; 
			\draw[graph edge] (r1) to (r6) ; 
			\draw[graph edge] (r2) to (r6) ; 
			\draw[graph edge] (r3) to (r4) ; 
			\draw[graph edge] (r3) to (r5) ; 
			\draw[graph edge] (r3) to (r6) ; 
			\draw[graph edge] (r3) to (r7) ; 
			\draw[graph edge] (r3) to (r13) ; 
			\draw[graph edge] (r4) to (r5) ; 
			\draw[graph edge] (r4) to (r6) ; 
			\draw[graph edge] (r4) to (r7) ; 
			\draw[graph edge] (r5) to (r6) ; 
			\draw[graph edge] (r8) to (r13) ; 
			\draw[graph edge] (r10) to (r13) ; 
			\draw[graph edge] (r12) to (r13) ; 
		\end{pgfonlayer}
	\end{tikzpicture}%
	}}}
	\ifspringer{\end{adjustbox}}{}%
	\caption{%
		A larger circular permutation graph with $n=15$ vertices, 
		represented by the permutation diagram with crossings (top) and 
		the grid representation of the thrice-unrolled PG (bottom).
		In the permutation diagram, noncrossing chords are drawn black, 
		forward crossing chords are green (vertices 9, 11, 14, 15) and 
		backward crossing chords are brown (vertices 1, 6).
	}
	\label{fig:circular-example}
\end{figure}

Mapping between vertex $v$ in $G$ and the $x$-coordinates of $\ell_v$, $c_v$, $r_v$ in $G_3$ is trivial.
To access the $y$-coordinate for a point $(x,y)$, $y(x)$, we consult the type of the corresponding vertex $v$:
\begin{align*}
		y(\ell_v) 
	&\wrel= 
		\begin{cases*}
			\Pi[v] & if $t[v] = N$\\
			\Pi[v] + \like[l]{2n}{n} & if $t[v] = F$
		\end{cases*}
\\
		y(c_v) 
	&\wrel= 
		\begin{cases*}
			\Pi[v] + n& if $t[v] = N$\\
			\Pi[v] + 2n & if $t[v] = F$\\
			\Pi[v]  & if $t[v] = B$
		\end{cases*}
\\
		y(r_v) 
	&\wrel= 
		\begin{cases*}
			\Pi[v] +2n& if $t[v] = N$\\
			\Pi[v] +n & if $t[v] = B$
		\end{cases*}
\end{align*}
All can be answered in $\Oh(1)$ time.
Based on that, we can answer the main queries.

\paragraph{Adjacency}
$u < v$ are adjacent (in $G_{\pi,t}$) iff
$y(c_u) > y(c_v) \vee y(\ell_v) > y(c_u) \vee y(c_v) > y(r_u)$;
if any of the involved copies does not exist, that part of the condition 
is considered unfulfilled. 

\paragraph{Neighborhood}
Given a vertex $v$, we use \wref{lem:cpg-neighborhood} to reduce the query
to neighborhood queries on $G_3$.
To compute the neighborhood of $c_v$ in the PG $G_3$, we use the same method as in 
\wref{sec:ds-array-based}; for that we store the range-minimum/maximum
index from \wref{lem:rmq-indexing} for the sequence of $y$-values
of all vertices in $G_3$ 
(filling empty slot from missing copies with $+\infty$, resp.\ $-\infty$, values).
Note that this index only requires access to individual values in the sequence of $y$-values
(which we can provide in constant time); 
it does not require the values to be stored explicitly in an array.
The additional space cost for constant-time range-min/max queries is only $\epsilon n$ bits.
The time stated for \GDegree follows from counting the neighbors one by one.

\paragraph{Distance and shortest paths}
As for neighborhood, we augment our data structure with the additional data structures
from \wref{sec:ds-distance} for the PG $G_3$, \ie,
we define $A$, $B$, $a^{\pm}(v)$, $b^\pm(v)$,
and $G_A$, $G_B$ as before for $G_3$. 
All now have up to $3n$ vertices instead of $n$, but only occupy $\Oh(n)$ bits in total.

By construction, two vertices $u$ and $v$ in $G_3$ are only adjacent if the corresponding
vertices in $G$ are adjacent.
Therefore, the distance between $u$ and $v$ can be found as the minimum over all
combinations of copies of $u$ and $v$ in $G_3$ (at most 9).

For (the first vertex on) a shortest path, the minimal distance pair of copies can be used
with the \GSPathFirst query on $G_3$.

\medskip\noindent
This concludes the proof of \wref{thm:succinct-cpg}.

	\section{Semi-Distributed Graph Representations}
\label{sec:semi-distributed}
\label{app:semi-distributed}

While Bazzaro and Gavoille~\cite{BazzaroGavoille2009} report that no distance labeling scheme
for PGs exists with less than $3\lg(n)(1-o(1))$ bits per label,
our succinct data structure with overall $n\lg (n)(1+o(1))$ bits of space clearly demonstrates that
this lower bound can be overcome in ``centralized'' data structures.
An interesting question is whether this lower bound can also be circumvented using only a \emph{small amount of global memory} on top of the local labels.

More formally, a \emph{semi-distributed} (distance) oracle consists of a vertex labeling $\ell:V\to \{0,1\}^\star$
and a data structure $\mathcal D$, so that $\GDistance(u,v)$
can be computed from $(\ell(u), \ell(v), \mathcal D)$.
If we allow arbitrary data structures $\mathcal D$, this notion is not very interesting;
one could simply ask $\mathcal D$ to compute all queries.
But if we restrict~$\mathcal D$ to less space than necessary to simply encode the graph, we obtain
an interesting model of computation that interpolates between 
standard data structures and labeling schemes.

Let us call a representation an $\langle L(n), D(n)\rangle$-space semi-distributed representation
if for every $n$-vertex graph we have $|\ell(v)| \le L(n)$ for all vertices $v$ and
$|\mathcal D| \le D(n)$.
Our question can then be formulated as follows:
\textit{What is the smallest $D(n)$ that permits a $\langle(3-\epsilon)\lg n,D(n)\rangle $ space 
semi-distributed distance oracle for permutation graphs?}

The known distance labeling scheme from~\cite{BazzaroGavoille2009} 
implies a $\langle 9\lg n, 0 \rangle$-space semi-distributed representation,
and our succinct data structure constitutes a $\langle \lg n, n \lg(n)(1+o(1)) \rangle$-space 
semi-distributed representation.

A closer look at \wref{sec:data-structures} reveals that the dominant
space in our (array-based) data structure comes from storing $\pi^{-1}$.
In particular, all further data structures required to answer \GDistance queries
occupy only $O(n)$ bits of space.
Moreover, all computations to determine distances, and even the entire shortest path,
require only $\pi^{-1}$ of the original endpoints (cf.\ \wref{rem:pi-inv-of-A-B}).
We can thus move $\pi^{-1}(v)$ into the label of node $v$, thereby making it inaccessible 
from any other vertex without affecting the queries.
We hence obtain the following result.

\begin{theorem}[Semi-distributed PGs]
\label{thm:semi-distributed}
	Permutation graphs admit a $\langle 2\lg n, \Oh(n)\rangle$-space semi-distributed representation
	that allows to answer the following queries:
	\GAdjacent, \GDistance, and \GSPathFirst in $\Oh(1)$ time and
	$\GSPath(u,v)$ in $\Oh(\GDistance(u,v)+1)$ time.
\end{theorem}

\begin{proof}
	The label $\ell(v)$ consists of the pair of $(v, \pi^{-1}(v))$,
	\ie, the $x$- and $y$-coordinate in the grid representation of $G$.
	All remaining data structures from \wref{sec:data-structures} occupy $\Oh(n)$ bits of space.
	As discussed above, for the listed operations access to $\pi^{-1}$ 
	is only needed for the queried vertices.
\end{proof}

\begin{remark}[Who stores the labels]
	Note that in our succinct data structures, we identify vertices with the (left-to-right) ranks
	of the upper endpoints of their chords in the permutation diagram. 
	That means that the user of our data structure is willing to let
	(the construction algorithm of) our succinct data structure decide how to label vertices,
	and vertices are henceforth referred to using these labels.
	In a (semi-)distributed representation, we have to assign \emph{and store} a unique label 
	for each vertex, because queries are computed only from the \emph{labels} of the vertices
	(and potentially $\mathcal D$).
	The semi-distributed scheme derived from our succinct representation therefore
	takes up a total of $\sim 2n \lg n$ bits.
\end{remark}

This $\langle 2\lg n, O(n)\rangle$ scheme circumvents the lower bound for distance labelings 
in label length and overall space;
it thus gives a novel trade-off beyond the
fully distributed and fully centralized representations.
In particular, it shows that access to global storage, even a fairly limited amount,
is inherently more powerful than a fully-distributed labeling scheme.

%
%


}

\section{Conclusion}
\label{sec:conclusion}

We presented the first space-efficient data structures for permutation graphs (PGs), 
circular permutation graphs (CPGs),
and bipartite permutation graphs (BPGs). 
They use space close to the information-theoretic lower bound for these classes of graphs, 
while supporting many queries in optimal time.
The use of our data structures as space-efficient exact distance oracles 
improves the state of the art and proves a separation between standard, centralized data structures
and distributed graph labeling schemes for distance oracles in permutation graphs.
Our notion of semi-distributed graph representations interpolates between these two extremes;
an initial result shows that access to global memory is inherently more powerful even
if we cannot store the entire graph there.

There are several interesting directions for future research.
\ifsubmission{\begin{enumerateinline}}{\begin{enumerate}}
\item Is it possible to support degree queries in constant time and succinct space, 
together with the queries covered by our data structures?
With our current approach, this seems to require improvements to range searching in succinct grids,
but the queries are of a restricted form.
\item What is the least amount of global storage in a semi-distributed representation 
for distances in permutation graphs
that overcomes the lower bound for distance labeling schemes?
Is there a smooth trade-off between the ``amount of decentralization'' and total space, 
or does it exhibit a sharp threshold?
\item Comparability graphs of dimension $k$. 
	These graphs have representations with $k-1$ chord segments per vertex;
	PGs correspond to $k=2$. It is known~\cite{BazzaroGavoille2009} that 
	for $k\ge 3$, distance labels require $\Omega(n^{1/3})$ bits.
	Is a succinct distance oracle with efficient queries possible for these graphs?
\item Circle graphs.
	While navigational operations are possible~\cite{AcanChakrabortyJoNakashimaSadakaneRao2022},
	efficient distance queries remain an open problem.
\ifsubmission{\end{enumerateinline}}{\end{enumerate}}

\ifkoma{
	\myacknowledgements
}{}
%
%
%
%
%
\ifspringer{
	{
	\let\oldhref\href
	\renewcommand\href[1]{}
	\let\path\doi
	\bibliography{references}
	}
}{
	\bibliography{references}
}

\ifsubmission{
	\clearpage
	\appendix
	\ifkoma{\addpart{Appendix}}{}
	\section{Constant-space next neighbor}
	\label{app:rmq-next-neighbor}
	In this appendix, we the details of the range-minimum-query-based algorithm for 
	iterating through the result of a three-sided orthogonal range query 
	with constant extra space, and in constant amortized time per returned point.

}

\end{document}